%
%
%

\documentclass[graybox]{svmult}


\usepackage{mathptmx}       
\usepackage{helvet}         
\usepackage{courier}        
\usepackage{type1cm}        
%
\usepackage{makeidx}         
\usepackage{graphicx}        
\usepackage{graphics}
\usepackage{multicol}        
\usepackage[bottom]{footmisc}


\makeindex             


\newcommand{\tn}[1]{{\color{black} {#1}}}
\newcommand{\Zed}{\mathbb Z}
\newcommand{\Real}{\mathbb R}
\newcommand{\Nat}{\mathbb N}
\newcommand{\Rat}{\mathbb Q}

\usepackage{amsmath}
\usepackage{amssymb}
\usepackage{mathtools}
\usepackage{epsfig, epstopdf}


\begin{document}

\title*{Broadcasting automata and patterns on $\Zed^2$}
\author{Thomas Nickson and Igor Potapov}
\institute{Thomas Nickson \at The University of Edinburgh, Kennedy Tower, Royal Edinburgh Hospital, Edinburgh EH10 5HF
 \email{tnickson@exseed.ed.ac.uk}
\and Igor Potapov \at University of Liverpool, Ashton Street, Ashton Building, Liverpool, L69 3BX, \email{potapov@liverpool.ac.uk}}
%
%
\maketitle

\abstract*{
The recently introduced Broadcasting Automata model draws inspiration from a variety of sources such as
Ad-Hoc radio networks, cellular automata, neighbourhood sequences and nature,
employing many of the same pattern forming methods that can be seen in the
superposition of waves and resonance.
Algorithms for the broadcasting automata model are in the same vain as those encountered in
distributed algorithms using a simple notion of waves, messages passed from automata
to automata throughout the topology, to construct computations. 
The waves generated by activating processes in a digital environment can be used
for designing a variety of wave algorithms.
In this chapter we aim to study the geometrical shapes of 
informational waves on integer grid generated in broadcasting automata model as well as their potential 
use for metric approximation in a discrete space.
An exploration of the ability to vary the broadcasting radius of each node leads
to results of categorisations of digital discs, their form, composition, encodings
and generation. Results pertaining to the nodal patterns generated by arbitrary
transmission radii on the plane are explored with a connection to broadcasting
sequences and approximation of discrete metrics of which results are given for the
approximation of astroids, a previously unachievable concave metric, through a
novel application of the aggregation of waves via a number of explored functions.
}

\abstract{
The Broadcasting Automata model draws inspiration from a variety of sources such as
Ad-Hoc radio networks, cellular automata, neighbourhood sequences and nature,
employing many of the same pattern forming methods that can be seen in the
superposition of waves and resonance.
Algorithms for broadcasting automata model are in the same vain as those encountered in
distributed algorithms using a simple notion of waves, messages passed from automata
to automata throughout the topology, to construct computations. 
The waves generated by activating processes in a digital environment can be used
for designing a variety of wave algorithms.
In this chapter we aim to study the geometrical shapes of 
informational waves on integer grid generated in broadcasting automata model as well as their potential 
use for metric approximation in a discrete space.
An exploration of the ability to vary the broadcasting radius of each node leads
to results of categorisations of digital discs, their form, composition, encodings
and generation. Results pertaining to the nodal patterns generated by arbitrary
transmission radii on the plane are explored with a connection to broadcasting
sequences and approximation of discrete metrics of which results are given for the
approximation of astroids, a previously unachievable concave metric, through a
novel application of the aggregation of waves via a number of explored functions.
}


\section{Introduction}

The recently introduced model of \emph{Broadcasting Automata}\index{broadcasting automata} \cite{Geometric} connects many of the techniques contained in distributed
algorithms, ad-hoc radio networks, cellular automata and neighbourhood sequences. Much like cellular
automata with variably defined neighbourhoods Broadcasting Automata can be
defined on some form of grid or lattice structure and have a simple computational
primitives comparative to a finite state automata with the ability to receive
and send messages both from and to those automata which are within its transmission
radius. Neighbourhoods are defined in the same way as with an ad-hoc
network, all those points within a certain transmission radius, receive the message
from the sender. 

Algorithms for the broadcasting automata model are in the same vain as those encountered in
distributed algorithms using a simple notion of waves, messages passed from automata
to automata throughout the topology, to construct computations \cite{Geometric2}. Wave
algorithms are enhanced further with notions of composition of the information
that is carried within each wave borrowed from the physical world and embellished
with the new found computational power of the automata.

The waves generated by activating processes in a digital environment can be used
for designing a variety of wave algorithms. In fact, even very simple finite functions for
the transformation and analysis of passing information provides more complex dynamics
than classical wave effects. In \cite{Geometric,Geometric2} we generalized the notion of the standing wave which is
a powerful tool for partitioning a cluster of robots on a non-oriented grid. In contrast
to classical waves where interference patterns generate nodal lines (i.e. lines
formed by points with constant values), an automata network can have more complex
patterns which are generated by periodic sequences of states in time.

Here we take a different direction and aim to study the geometrical shapes of 
informational waves on integer grid generated in broadcasting automata model as well as their potential 
use for metric approximation in a discrete space.
The repeated transmission to nodes, within certain radii, applied to
a certain network topology, or physical layout of nodes has been studied under the name, \emph{Neighbourhood Sequences}
(NS).
The concept of neighbourhood sequences is of importance in a number of practical applications and was originally applied for measuring distances in a digital world \cite{Hajdu20032597}.
Initially two classical digital motions (cityblock and chessboard)\footnote{ The cityblock motion allows movements only in horizontal and vertical directions, while the chessboard allows to move in diagonal directions.} were introduced. Based on these two types of motions
periodic neighbourhood sequences were defined in \cite{Das} by allowing arbitrary mixture of cityblock and chessboard motions.  In 2D the distances based on cityblock and chessboard neighbourhood sequences deviate quite substantially from the ideal Euclidian distances, so instead their combination that form ``the octagon'' was more often employed and studied.  Later the concept of neighbourhood sequences was extended to arbitrary finite and infinite dimensions, periodic and non-periodic sequences and then analysed in terms of their geometric properties.  

The aggregation of two classical neighbourhood sequences based on Moore and Von Neumann neighbourhoods (which correspond to cityblock and chessboard)  was recently proposed as an alternative method for self-organization, partitioning and pattern formation on the non-oriented grid environment in \cite{Geometric}. In particular the discrete analogs of physical standing wave phenomena were proposed to generate nodal patterns in the discrete environment by two neighbourhood sequences. The power of the primitives was illustrated by giving distributed algorithms for the problem of finding the centre of a digital disk of broadcasting automata. 
The shapes that can be formed by neighbourhood sequences in dimension two are quite limited. However the basic notion of neighbourhood sequences can be naturally extended by relaxing the constraints on the initial definition of the neighbourhood in such a way that two points are neighbors (r-neighbours) if the Euclidean distance is less than or equal to
some $r$, used to denote the radius of a circle. Then by \emph{Broadcasting sequences} we understand the periodic application of the $r$-neighbours distance function. 

The main result of this work is characterization of geometrical shapes that can be generated by Broadcasting Sequences on the square lattice. The shapes of r- neighborhoods correspond to Discrete Discs which are discrete convex polygons.
First we use the language of Chain codes (i.e. the code based on $8$ degrees of motion) to describe the shape of the Discrete Discs and their Broadcasting sequences. In particular we introduce the notion of Chain code segments and Line segments to express the shapes of the polygons corresponding to Discrete Discs. Then we characterize the shapes of polygons produced by Broadcasting sequences and provide linear time algorithm for the composition of two chain codes of Discrete Discs. Based on their composition properties we derive a number of limitations for produced polygons.  For example we show that there exist an infinite number of gradients (of line segments in the polygons) that cannot be produced by Broadcasting Sequences. It also becomes clear that the set of line segments, and as such gradients, that compose any discrete circle are closed under composition. 

Moreover, we provide an alternative method for enriching the set of geometrical shapes and neighbourhood sequences by aggregation of two Broadcasting Sequences. Initially we illustrate the idea on Moire and Anti-Moire aggregation function to produce an infinite family of polygons and polygonal shapes and characterize the gradients of their line segments. We have noticed that Anti-Moire aggregation function can be slightly modified to provide better approximation for the Euclidean distance on a square lattice then classical neighbourhood sequences. 
Finally it is possible to observe the variety of effects that are the result of the application of an aggregation function 
which are themselves shapes of some form.  

\section{Broadcasting Automata Model}

\tn{
One of the fundamental models of computation is the automaton. 
%
%
Finite state automata take as input a word, which may in some instances be referred to as a tape, 
and output is limited to an accepting state which the automata is left in if it is said to accept the word. Traditionally automata used in cases where it is important to transform some input in to an output, beyond the use of a single state, is the use of Moore Machines. Such machines differ from finite state machines in that they are able to produce an output word from an input word.

\begin{definition}
A Moore machine\index{Moore machine} \cite{conway2012regular} is a 6-tuple, $A = (Q,\Sigma,\Lambda,\delta,\Delta,q_0)\ $, where:
\begin{itemize}
\item $Q$ is a finite set of states, 
\item $\Sigma$ is the set of input symbols, 
\item $\Lambda$ the set of output symbols, 
\item $\delta:Q\times \Sigma \rightarrow Q$ is the transition function mapping a state $q\in Q$ and a symbol, or set of symbols, $\sigma \in \Sigma$ to a state $q\in Q$, 
\item $\Delta:Q\rightarrow \Lambda$ is the output function which maps a state, $q\in Q$, to an output symbol, $\lambda \in \Lambda$, and 
\item $q_0$ is the initial or quiescent state in which the automata starts.
\end{itemize}
\end{definition}

It is assumed that such a machine is connected to an input tape, or word, and an output tape, or word, where the result of the computation is read. In a situation, as is presented in distributed systems, whereby automata are connected to each other it is possible for the output of one automata to become the input of another automata. Such a model is known as a network of automata and connections from one automatons output to another's input may be represented as a directed graph, where direction represents the output going to input from automaton to automaton.

\begin{definition}
A network of finite automata\index{network of automata} is a triple, $(G, A, C_0)\ $, where:
\begin{itemize}
\item $G=(V,E)$ is a directed graph, with vertices, $V$, and edges, $E$, which are ordered pairs of vertices,
\item $A$, is a Moore machine, and 
\item $C_0$ is an initial configuration which maps states, $Q$, of the automata, $A$, to vertices, $V$, such that, $C_0 : V \rightarrow Q$.
\end{itemize}
\end{definition}

In this model the topology is fixed as specified by the construction of the graph, $G$, which dictates the flow of inputs and outputs from the Moore machines. Such symbols, where a symbol is part of the input/output word of the Moore machine, are generated as response to some input, by an automaton at vertex, $v\in V$, and then sent to all of the adjacent vertices in the graph or to a particular adjacent vertex, $G$, where the automata that receive the symbols process them as they would their input word. 

A less abstract model is considered by adding more details about the communication between automata, 
where they are located in Euclidean space and what can be transmitted. 
It will also be required to define more refined models that will reflect the new features and
constraints of the physical environment, but are still at a high level of abstraction.

Taking the inspiration from ad hoc wireless communication networks we introduced in \cite{Geometric} 
a new model of \emph{Broadcasting Automata}\index{broadcasting automata}, 
which can be seen as a network of finite automata with a dynamic network topology. 
Informally speaking, the model of Broadcasting Automata comprises nodes, which correspond to points in space, and connectivity between nodes depends upon the distances between the points and the
strengths of the transmissions generated by the automata as may be seen in ad-hoc
radio networks.
Transmission strength may vary from round to round for any particular automaton in the space and is dictated by the state of the automaton. In this model the topology, or connectivity graph, of the network of automata is able to change at each time step based on the states of the automata. 

In order to give a formal definition of the Broadcasting Automata model on a metric space\index{metric space} it is first necessary to introduce the notion of a metric space and to modify the classical notion of the Moore machine.

\begin{definition}
A metric space is an ordered pair, $(M,d)$, where $M$ is a set and $d$ is a metric on $M$ such that $d: M\times  M\rightarrow \Real$.
\end{definition}

Here, $(M,d)$, is any metric space, later the two dimensional euclidean space shall be used, 
but this is not a necessity simply that a notion of the distance between two points is required.

\begin{definition}
The \emph{Broadcasting Automaton} extends the Moore Machine by introducing a set of final states, $F$, along with a function, $\tau : Q \rightarrow \Real$, which maps the state to a real number and represents the radius of transmission for the output symbol and the output alphabet, $\Lambda$, is extended by adding an empty symbol, $\epsilon$ and it is represented by an 8-tuple $A = (Q,\Sigma,\Lambda,\delta,\Delta,\tau,q_0,F) $.
\end{definition}

A network of \emph{Broadcasting Automata}, which will also be referred to as the \emph{Broadcasting Automata model}, can now be defined. 

\begin{definition}
The \emph{Broadcasting Automata model} is represented by a triple $BA = ((M,d),A,C_0)$ where:
\begin{itemize}
\item $(M,d)$ is a metric space,
\item $A$ is a Broadcasting Automata, $A = (Q,\Sigma,\Lambda,\delta,\Delta,\tau,q_0,F)$,
\item $C_0$ is the initial configuration of the Broadcasting Automata model, it is a mapping from points, $M$, to states of the Broadcasting Automata, $Q$, such that, $C_0 : M \rightarrow Q$.
\end{itemize}
\end{definition}

In some cases it may be that the input and output symbols are drawn from the same alphabet in which case, $\Sigma = \Lambda$.

The communication between automata is organised by message passing, where \emph{messages} are symbols from the output alphabet, $\Lambda$, of the automata, $A$, to all of the automata within its transmission radius. Messages, \tn{symbols from the output alphabet, $\Lambda$, of the automata, $A$,} are generated and passed instantaneously at discrete time steps, \tn{generation of message is given by the function, $\Delta$, for the automata, $A$, resulting in synchronous steps}. Those automata that have received a message, \tn{for the first time in the computation}, are said to be \emph{activated}. 

If several messages are transmitted to an automaton, $A$, it will receive only a set of {\bf unique messages}, i.e. for any multiset of transmitting messages, where the multiset represents a number of the same message being sent, received by $A$, over some number of rounds, the information about quantity of each type, within a single round, will be lost. This simply illustrates that the automaton may not dictate a number
of the same symbol in any transition, all transitions must operate upon a set of
distinct symbols.

More formally the concept of computation and communication is captured in the concept of configurations of the Broadcasting Automata model and its semantics presented here.

\begin{definition}\label{config}
The \emph{configuration} of the Broadcasting Automata is given by the mapping, $c:M\rightarrow Q$, from points in the metric space to states. It can be noted here that $C_0$ is an initial configuration for the Broadcasting Automata model.
\end{definition}
}
\tn{Automata are updated, from configuration to configuration, synchronously at discrete time steps. The next state of each automaton depends upon the states of all other automata which have in their neighbourhood the automaton that is to be updated. Where here the neighbourhoods are formed by a combination of their position in $M$ and the range of their transmission, dictated by $\tau(q)$ for all automata. 

\begin{definition}\label{messageRec}
The set of messages received by the automaton at a point, $u,v \in M$, is expressed by the set $\Gamma_u= \{ \Delta(c(v)) | v\in M \wedge d(u,v)\leq \tau(c(v))\}$ for the metric space, $(M,d)$.
\end{definition}

\begin{figure}[htp]
\centering
\includegraphics[scale=0.33]{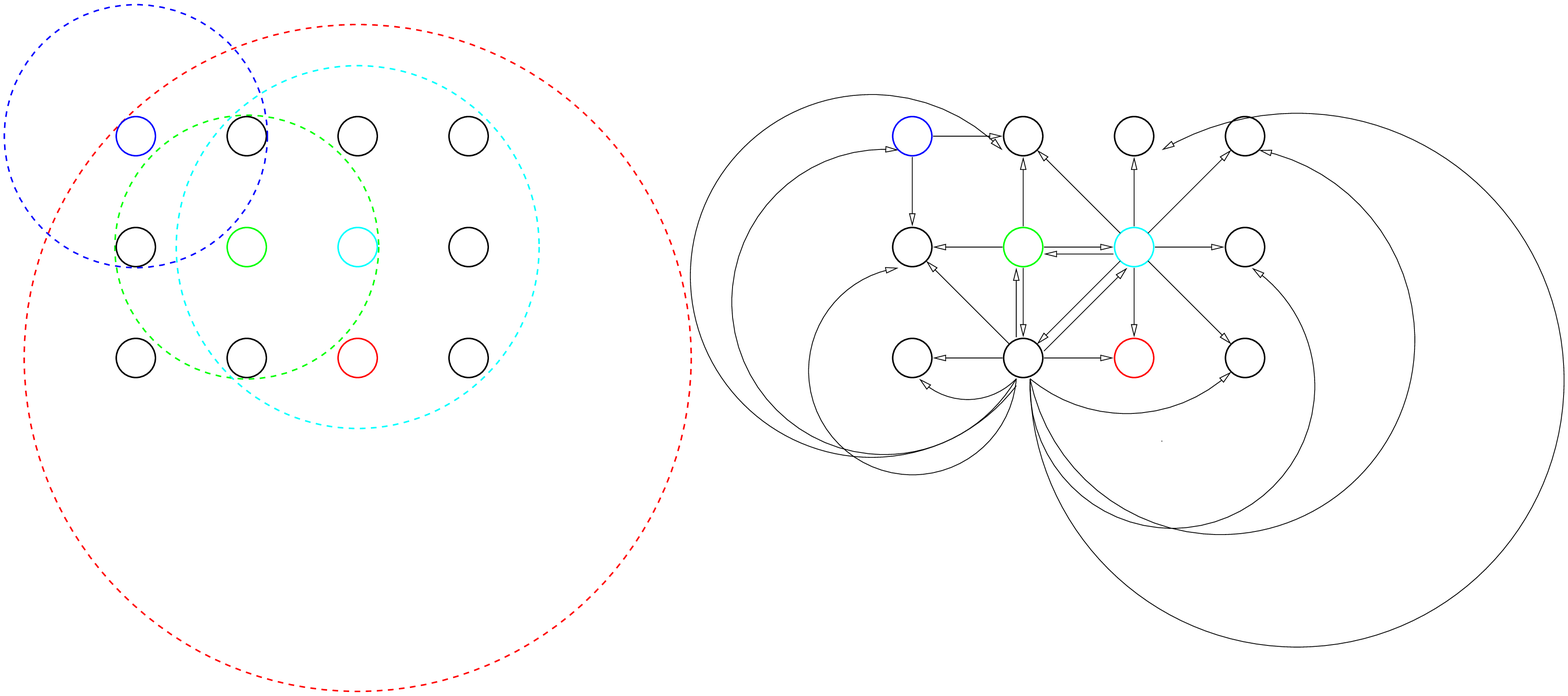}
\includegraphics[scale=0.33]{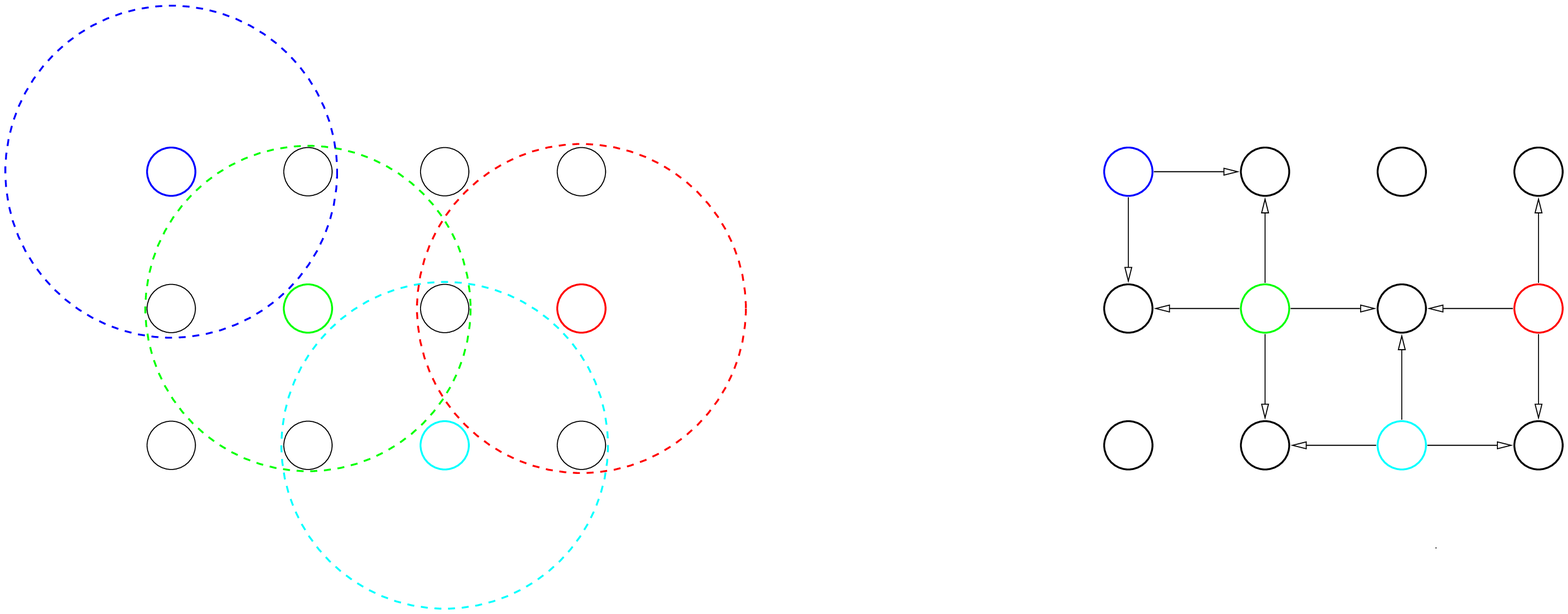}
\caption{The above figures show the possible evolution over time of a network of Broadcasting Automata\index{network of automata}. Each figure shows, on the left, the broadcast range for the automata, depicted by a dotted circle with the node at the centre, on the right, the same connectivity is shown in graph form. \label{transGraph}}
\end{figure}

In Figure~\ref{transGraph}, it is possible to see an elucidation of message passing in the Broadcasting Automata model through the process of constructing a digraph representation of the broadcasting radii. The figure depicts the automatons broadcasting radius by way of a dashed circle with the automaton broadcasting at that range shown in the centre. The corresponding graph may be constructed from this by making a directed edge in the graph from the node that is broadcasting to all nodes that are within the broadcast range, where the range is dictated by the state, $q\in Q$ and the function, $\tau(q)=r$, where $r$ is the broadcasting radius. The construction of such a graph shows both the connection to the network automata model, which operates on a similar premise with a fixed graph, but also highlights how Definition~\ref{messageRec} is constructed. Incoming edges are generated by connecting nodes, with the arrow from transmitter to node, reachable from the transmitter. Naturally such a graph, in the Broadcasting Automata model, can change over time which is shown by images further down the page as increase time steps in Figure~\ref{transGraph}.

It should be noted that this set of messages can be empty. The new state of the automaton at $u$ is now given by applying the automaton's transition function $\delta_{u}(\Gamma_{u})$. This may be applied to all of the automata in the BA and as such determines the global dynamics of the system. It can be seen that in one time step it is possible for configuration $c$ to become configuration $e$ where, for all $v\in M$, $e(v)=\delta_{v}(\Gamma_u)$. Such a function is called the global transition function. The { set of all configurations for BA} is denoted as $\mathcal{C}$.

\begin{definition}\label{transition}
The global transition function, $\mathcal{G}:\mathcal{C}\rightarrow \mathcal{C}$, represents the transition of the system at discrete time steps such that for two configurations, $c\in \mathcal{C}$ and $e\in \mathcal{C}$, where $e$ is the configuration the time step directly after $c$ such that, $e = \mathcal{G}(c)$.
\end{definition}

\begin{definition}
A computation is defined as any series of applications of the function, $\mathcal{G}$, from an initial condition, $C_0$, that leads to the set of all automata in the system being in one of their accepting states from the set, $F$, or the computation halts such that there exists no defined relation from the current configuration, $c$, under the global transition function $G(c)$. 
\end{definition}

Here a notion of reachability is given.

\begin{definition}
One configuration, $c$, is said to be {\bf reachable} from another, $e$, if from the initial configuration, $C_0$, there exists a series of valid intermediary configurations such that $c\rightarrow c'\rightarrow c''...\rightarrow e$. Where $c\rightarrow c'$ is equivalent to $\mathcal{G}(c) = c'$. It can also be stated, in a shorthand way, that where the sequence $c\rightarrow c'\rightarrow c''...\rightarrow e$ exists reachability from $c$ to $e$ may be shown as $c\leadsto e$.
\end{definition}

In the above context the \emph{time complexity} of an algorithm in the Broadcasting Automata model is determined by the number of applications of the global transition function during the  computation.

%
}


Whilst in general Broadcasting Automata may be used represent any of the common network topologies (such as, linear array, ring, star, tree, near-neighbour mesh, systolic array, completely connected, chordal ring, 3-cubes and hyper-cubes \cite{44900, 1667197}), through varying the location and transmission radii of the automata in the space, a different model is employed. 
The model used throughout the paper is a hybrid mixing the notion of transmission radius in the pathloss geometric random graph model and common network topologies. The method involves the physical placement of nodes on the Euclidean plane in a lattice structure such that the distances between points conform to the euclidean distance. Such regular lattices coincide with those structures that can be formed by previous works \cite{Spring08, Suzuki99distributedanonymous, JoRaM09} giving the basis of the forms of lattice that can be studied. This paper will strictly concern itself with the {\bf square grid lattice}\index{grid lattice} though it can be extended in to any of the other grid configurations such as the hexagonal or triangular lattices along with many others. 

The locations of the points on the lattice allow the construction of the metric space, $(M,d)$ as such the locations of the automata. Assuming a lattice structure for positioning of the automata and transmission radii, $\mathcal{R}$, is in accordance with the principles of transmission used in the pathloss geometric random graph model put forth in Ad-Hoc networks. 

A variety of topologies, the lattices that can be constructed as in \cite{Spring08} (square, hexagon and triangle), and the shapes of neighbourhoods generated by varying the transmission neighbourhood can be seen in Figure~\ref{neighLatt} where black dots represent the placement of automata on the plane according to the lattice here shown by the black lines. Neighbours are those within the dashed circle and the automata at the centre of the circle is the initial transmitter of any messages.

\begin{figure}[htp]
\centering
\includegraphics[scale=0.55]{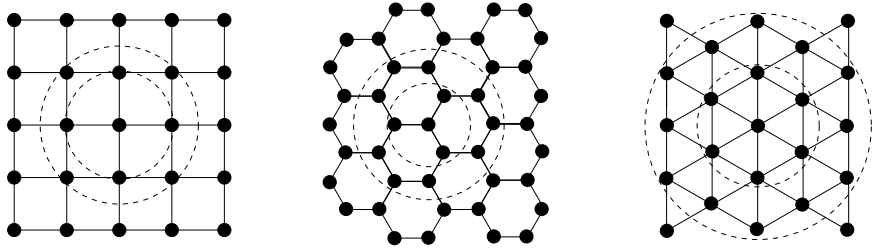}
\caption{Showing three differing lattices and how their neighbourhoods can be altered simply by varying the radii of the transmission neighbourhood which, here, is depicted as a dotted circle. All automata, shown as black dots, are able to receive messages from the automata at the centre of the circle.The three lattices depicted are (left to right) square, hexagon and triangle.\label{neighLatt}}
\end{figure}


The connectivity graph defined in ad-hoc radio networks denotes those that are able to send and receive messages with certainty however in distributed algorithms the possibility of communication errors is not ignored. In such models messages may be lost, duplicated, reordered or garbled where they must be detected and corrected by supplementary mechanisms which are mostly referred to as protocols for example the common network protocol of TCP. The Broadcasting Automata model sides with the former method of message passing. It is assumed that there will be no error in transmission, the receipt or sending of messages across the network is guaranteed however consideration is made to the synchronicity of the network due to the high sensitivity to the timing of message passing as will become clear later when discussing pattern formation with such protocols.

To this extent we consider two variants of Broadcasting Automata: {\bf synchronous} and {\bf asynchronous} (or reactive) models. In each of the variations both the transmission of a method and amount of time take for the automata to process the message is considered to be constant. It is this constant time that will afford synchronisation within the model. \tn{The two differing models are presented to show that there are possible complexity trade offs depending on how it is to be measured. If the size of the alphabet is a priority, such that it must be minimised, then the synchronous model is best, however this comes at a penalty of time complexity, where the asynchronous model fairs better.}

In the {\bf asynchronous} model upon the receipt of a message, an input symbol (or set of symbols) $\sigma\in \Sigma$, the automaton, $A$, becomes active. The automaton may only react to a non-empty set of messages, it may not make and epsilon transition. Once activated the automaton, $A$, reacts to the symbol(s), $\sigma$, according to the configuration of the automaton and responds with the transmission of an output symbol, $\lambda\in \Lambda$, to \tn{those in its transmission radius for the current state, $q\in Q$, $\tau(q)$}. The processing of the input symbols by the automaton is done in one discrete time step which is the same period for all automata in the graph. Once active the automaton must wait for another message in order to change its state it is not allowed epsilon transitions.

The {\bf synchronous} model has a singular alphabet and as such, $\Sigma = \Lambda$ with the allowance of epsilon transitions. Upon the receipt of a message, an input symbol (or set of symbols) $\sigma\in \Sigma$, the automaton, $A$, becomes active. Once active the automaton is synchronised with the rest of the graph by repeated epsilon transitions that are representative of a constant transition via input symbol as in the asynchronous model. As each such transition is considered a change in configuration of the automaton it must take a single time step. As such this guarantees the synchronisation of system. However the multiple alphabet can be simulated by associating different symbols to different time steps as shall be seen later.


The following outlines, informally, the methodology of message passing used by the two differing constructions of automata, synchronous and asynchronous.

\noindent In the {\bf synchronous} model messages are passed from automaton to automaton according to the following rules: 
\begin{enumerate}
\item An automaton, $A$, receives a message from an activating source at time, $t$; 
\item At time $t+1$, $A$ sends a message to all automata within its transmission radius dependent on its state, $q\in Q$, $\tau(q)$;
\item At time $t+2$, $A$ ignores all incoming messages for this round.
\end{enumerate}
In the {\bf asynchronous model} the following rules can be applied:
\begin{enumerate}
\item An automaton, $A$, receives a message $\sigma_i\in \Sigma$ from an activating source at time $t$;
\item The automaton, $A$, broadcasts a message $\sigma_{(i+1)\mod |\Sigma|}$ to all automata within its transmission radius, dependent upon its state, $q\in Q$, $\tau(q)$ at time $t+1$; 
\item Ignore all incoming messages at time $t+2$. 
\end{enumerate}

\tn{In both models Step 3 prevents an automaton from receiving back the message that has just been passed to the automatons neighbours by ignoring all transmissions received the round after transmission and ensuring that the messages are always carried away from the initial source of transmission. In both cases this rejection of all messages may be modelled by the addition of a state that simply does not accept input in that state, where epsilon transitions apply, or that the transition is made to another equivalent state independent of input received, where from here it is possible to receive transmissions that once again affect the logic of the program. }


Naturally, given these two models it is interesting to see if they are capable of the same computations and that indeed anything that can be done in one model can be done in the other. There are many ways with which to establish equivalence between automata and models in general. One important technique that has been used to show equivalence in connection with Turing machines is {simulation} \cite{linz2001introduction}. Here a similar proposition is made with respects to establishing the equivalence of the two models of automata that are given here, synchronous and asynchronous. 
%

\begin{proposition}\label{equiv}~\cite{Geometric2}
Both the synchronous and asynchronous models are able to simulate the other.
\end{proposition}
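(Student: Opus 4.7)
My plan is to prove the proposition by giving an explicit structure-preserving construction for each simulation direction: starting from an automaton $A$ in one model, I build an automaton $A'$ in the other so that, for any node placement and initial configuration, the observable evolution of $A'$ coincides (after a uniform time dilation and symbol relabelling) with that of $A$. The observation driving both constructions is that the only intrinsic differences between the variants are (a) the single-alphabet constraint $\Sigma=\Lambda$ together with epsilon transitions in the synchronous model, versus (b) the freedom to have $\Sigma\neq\Lambda$ combined with the prohibition on epsilon transitions in the asynchronous model; the receive/transmit/ignore rhythm of Steps~1--3 is common to both.

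For \emph{synchronous simulates asynchronous}, I take an arbitrary asynchronous automaton $A_a=(Q_a,\Sigma_a,\Lambda_a,\delta_a,\Delta_a,\tau_a,q_0,F)$ and encode the two alphabets into a single one by setting $\Sigma_s=\Lambda_s=\Sigma_a\sqcup\Lambda_a$. For each reactive state $q\in Q_a$ I attach a short chain of auxiliary synchronous states, joined by epsilon transitions, whose job is to (i) decode the incoming symbol, (ii) apply $\delta_a$, and (iii) emit the image of $\Delta_a$ on the round that matches $A_a$'s own broadcast. The function $\tau_s$ copies $\tau_a$ on the ``broadcast-ready'' states and is set to zero on the auxiliary ones, so the induced neighbourhood graph coincides at every observable step.

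For \emph{asynchronous simulates synchronous}, I would collapse each maximal chain of epsilon transitions lying between two external receptions into a single asynchronous transition. As the excerpt remarks, the epsilon transitions exist purely to keep automata in lock-step and carry no input-dependent branching; hence for every $q\in Q_s$ the state $\widehat q$ reached by exhausting the epsilon-closure is uniquely determined, and I set $\delta_a(q,\sigma)=\widehat{\delta_s(q,\sigma)}$, $\Delta_a(q)=\Delta_s(\widehat q)$, and $\tau_a(q)=\tau_s(\widehat q)$. Since both models already share the receive-send-ignore cadence, no tick or heartbeat messages need to be introduced.

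The main subtlety I anticipate is verifying that, in the synchronous-to-asynchronous simulation, the auxiliary states never broadcast prematurely and that the Step-3 ``ignore at $t+2$'' window lines up correctly after time dilation; this is secured by the radius-zero choice for auxiliary states and by making the dilation factor a common multiple of the epsilon-chain lengths occurring in $A_s$. Once this invariant is in place, a routine induction on the number of applications of the global transition function $\mathcal{G}$ of Definition~\ref{transition} pairs up configurations of $A$ and $A'$ at matched time steps and establishes the bisimulation, proving the proposition.
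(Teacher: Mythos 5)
First, note that the paper does not actually prove this proposition here: it is imported verbatim from \cite{Geometric2}, so there is no in-text proof to compare against, only the model definitions and the surrounding remarks (in particular the observation that ``the multiple alphabet can be simulated by associating different symbols to different time steps'' and that the two models trade alphabet size against time complexity). Judged against those definitions, your proposal has a genuine gap in the asynchronous-simulates-synchronous direction. You assert that the epsilon transitions ``carry no input-dependent branching'' and that the state $\widehat q$ obtained by ``exhausting the epsilon-closure'' is uniquely determined. Neither claim holds for the synchronous model as defined: once activated, a synchronous automaton performs an epsilon transition at \emph{every} tick precisely so that it can count elapsed time, and the state it occupies when the next message arrives depends on \emph{how many} ticks have elapsed --- a quantity determined by the network geometry (e.g.\ the difference of distances to two sources), not by the automaton alone. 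For a mod-$m$ counter the epsilon chain is a cycle, so there is no terminal state to collapse to; and even for acyclic chains, setting $\delta_a(q,\sigma)=\widehat{\delta_s(q,\sigma)}$ discards exactly the timing information on which the pattern-formation algorithms depend. The intended repair --- and the point of the paper's remark --- is that the asynchronous simulation must push the tick counter into the message alphabet (this is why the asynchronous protocol broadcasts $\sigma_{(i+1)\bmod|\Sigma|}$, an incremented hop counter), which is precisely the ``tick or heartbeat'' content you explicitly declare unnecessary.

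The other direction is also on shakier ground than you suggest. The synchronous model is characterised by a minimised (essentially unary) alphabet --- its protocol sends ``a message'', not an indexed one --- so declaring $\Sigma_s=\Lambda_s=\Sigma_a\sqcup\Lambda_a$ and simulating symbol-by-symbol sidesteps the defining constraint rather than working within it; the intended construction time-multiplexes the symbols of $\Sigma_a$, interpreting the arrival time of the single message modulo $|\Sigma_a|$ as the symbol index, at the cost of a slowdown. Your auxiliary radius-zero states and common-multiple time dilation are a reasonable device for padding all automata to a uniform chain length, but as written the construction proves a different (easier) statement. To close the argument you need, in both directions, an explicit correspondence between elapsed synchronous ticks and asynchronous symbol indices, and then the induction on the global transition function $\mathcal{G}$ that you sketch at the end.
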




It should also be noted that with some generalisation, to allow a variable radius neighbourhood interaction which has been explored in a limited sense \cite{Wolfram19841, Gerhardt1990392}, the Cellular Automata\index{cellular automata} (CA) model may also be used to simulate the Broadcasting Automata model. In a grid of CA it is possible to assign states that correlate to transmitters where each transmitter in the grid is assigned a unique state and all other CA are in the quiescent state. The initial transmitters state causes each of the automata who have the initial transmitter in their neighbourhood to change from the quiescent state to the transmitters state plus one over some modulo. This clearly simulates the transmission of some word over a modulo for all automata in the grid. Upon finding a CA with one of this initial set of states within its neighbourhood, and such that it is all encompassing in its neighbourhood, the second transmitter changes its state which triggers a second cascade of state changes that are equivalent to those from the first transmitter only this time the automata must take in to account the initial state that it is already in. This should provide an exact copy of the Broadcasting Automata model, assuming that CA is extended to allow such variable neighbourhoods. This does not mean however that there is an exact translation of Cellular Automata in to Broadcasting Automata and as such it is only possible to say that $BA \subseteq CA$. 
%
%
%
%
\section{Variable Radius Broadcasting over $\Zed^2$}
%
%
Whilst the model provided here covers many configurations for the underlying communication graph this paper is mainly considering the following case, where all automata are place in euclidean space according to a regular arrangement equivalent to that of a square lattice such that, using the Cartesian coordinate system, for any automata in the space its nearest neighbour is at distance $1$ and its next nearest neighbour is at distance $\sqrt{2}$. This idea is illustrated for two dimensions in Figure~\ref{radii}. \tn{This now leads to the model of Broadcasting Automata where $(M,d)$ is $M = \Zed \times \Zed$ and where for $\rho=(\rho_0,rho_1)$ and $\rho'=(\rho'_0,\rho'_1)$ the distance function is $d(\rho,\rho')=\sqrt{(\rho_0-\rho'_0)^2+(\rho_0-\rho'_0)^2}$, the Euclidean distance function for $\rho,\rho'\in M$.}
\begin{figure}[htp]
\centering
\includegraphics[scale=0.22]{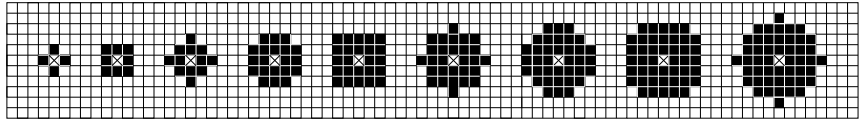}
\caption{A variety of transmission radii are shown (l-r) squared radii $r^2=\{1,2,4,5,8,9,10,13,16\}$. Crosses represent the centre of the respective discrete disc.\label{radii} }
\end{figure}

\tn{
Figure~\ref{propWaves} shows as example the automata that receive a message when the Euclidean space and square lattice are restricted to two dimensions but the radius of broadcast, $\tau(q)$ for $q\in Q$, is varied. The automaton at point $\rho\in M$ which is the source of the transmission is shown as the circle at the centre of the surrounding automata on the plane, those that are black are within the transmission range $\tau(q)$ for $q\in Q$.  Successive larger collections of automata coloured black show what happens when range can be changed to alter the automata that are included in the transmission radius, $\tau(q)\in \mathcal{R}$. If the transmission radius, $\mathcal{R}$, is equal to 1, as in Figure~\ref{propWaves} diagram $a)$ then only four of the eight automata can be reached. If the radius is made slightly larger and is equal to $\sqrt{2}$, it can encompass all eight automata in its neighbourhood as shown in diagram $b)$. Such structures are identical to the well studied neighbourhoods von Neumann\index{von Neumann neighbourhood} and Moore\index{Moore neighbourhood} respectively and as such here it shall be considered that such constructions of neighbourhoods are a generalisation of these two neighbourhoods. As we will show later, iterative broadcasting within von Neumann and Moore neighbourhoods can distribute messages in the form of a diamond wave and a square wave as shown later in Figure~\ref{fig_MVN}.}

\begin{figure}[htp]
\centering
\includegraphics[scale=0.3]{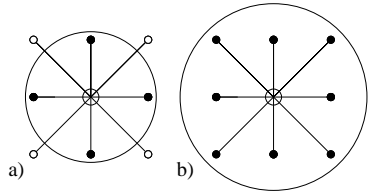}
\caption{ Diagram $a)$ represents the propagation pattern for a diamond wave (Von Neumann neighbourhood) and 
diagram $b)$ shows the propagation pattern for a square wave (Moore neighbourhood). 
\label{propWaves}}
\end{figure}


The construction of distinct radii for the circles is defined by the numbers $n$ such that $n = x^2 + y^2$ has a solution in non-negative integers $x$, $y$ \cite{oeisA001481}. Here it can be seen that $r^2$ is given for convenience as $n$ always has a convenient representation in $\Zed$ whereas it becomes cumbersome to write out either the root of the integer or its decimal representation. For an explanation as to how these numbers relate to the distinct discrete discs it must be noted that distinct discs are constructed such that $r^2$ contains a new solution in $x$ and $y$. As it can be considered that the digital disc represents all of the maximal combinations of Pythagorean triples such that $x^2+y^2\leq r^2$ then for one disc to differ from another there must be an increase of $r^2$ such that there exists a new, distinct, maximal solution for $x,y\in \Zed$. Generating these in the inverse direction by choosing $x,y\in \Zed$ such that it is a new maximal combination not contained in any of the previous, smaller discrete discs allows the construction of the list of $r^2$ that defines the sequence of distinct discrete discs. 

%


\section{Neighbourhood and Broadcasting Sequences}

The idea of propagating a pattern of square ($r^2=2$) or diamond waves ($r^2=1$), generated by repeated application of Moore or Von Neumann waves respectively, in dimension two are commonly known as Neighbourhood Sequences (see Figure~\ref{fig_MVN}). Neighbourhood Sequences\index{neighbourhood sequences} are an abstraction used to study certain discrete distance metrics that are generated upon the repeated application of certain neighbourhoods to a lattice for example the Moore neighbourhood to a square lattice. As previously discussed such neighbourhoods as von Neumann and Moore are encompassed by the more general model used here whereby the transmission radius is varied on a square lattice the first two of such transmission radii to produce distinct objects equating to those of the von Neumann and Moore neighbourhoods.

\begin{figure}[htp]
\centering
\includegraphics[scale=0.3]{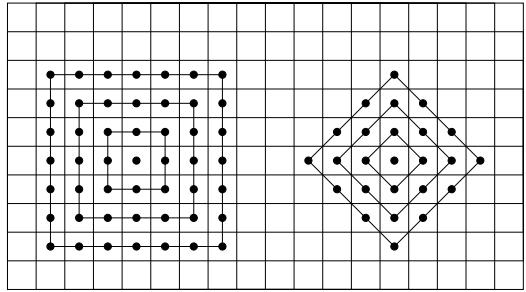}
\caption{Wave propagation with Neighbourhood Sequences on the square grid: Moore neighbourhood (left) and Von Neumann neighbourhood (right).
\label{fig_MVN}}
\end{figure}

Definitions and notation concerning neighbourhood sequences as considered in \cite{FarBajNag06}, and many other works, are now 
given here for completeness.

Let $p\in \Zed^n$ where $n\in \mathbb{N}$ and such that the $i$th coordinate of $p$ is given by $Pr_i(p)$ for $1\leq i \leq n$. 

\begin{definition}\label{mNeigh}
For $M\in \Zed$ where $0\leq M \leq n$ the points $p,q\in \Zed^n$ are $M-Neighbours$ when the following two conditions hold:

\begin{itemize}
\item $|Pr_i(p)-Pr_i(q)|\leq 1$ for $(1\leq i \leq n)$
\item $\sum_{i=1}^{n}|Pr_i(p)-Pr_i(q)|\leq M$
\end{itemize}
\end{definition}

\tn{An $n$-dimensional neighbourhood sequence is denoted $\mathcal{A}=(a(i))^{\infty}_{i=1}$, $\forall i\in \mathbb{N}$ where $a(i)\in {1,...,n}$ denotes an \emph{M-neighbourhood} by its value of $M$ such that for $a(1)$ denotes that the next neighbourhood set of points in the sequence is those that differ by at most one coordinate as given by Definition~\ref{mNeigh} where $M=1$ in this case the Von Neumann neighbourhood. If $\mathcal{A}$ is periodic then $\exists l\in \mathbb{N}$, $a(i+l)=a(i)$ $(i\in \mathbb{N})$. Such periodic sequences are given as $\mathcal{A} = (a(1),a(2),...,a(l))$. 

\begin{definition}
The $\mathcal{A}$-distance, $d(p,q;\mathcal{A})$, of $p$ and $q$ is the length of the shortest A-path(s) between them. 
\end{definition}

As the spreading of such neighbourhoods is translation invariant only an initial point of the origin need be considered w.l.o.g. 

\begin{definition}
The region occupied after $k$ applications of the neighbourhood sequence $\mathcal{A}$ is denoted as $\mathcal{A}_k=\{p\in \Zed^n:d(0,p;\mathcal{A})\leq k\}$ for $k\in \mathbb{N}$.
\end{definition}

Also, let $H(\mathcal{A}_k)$ be the convex hull, given in Definition~\ref{convHull}, of $\mathcal{A}_k$ in $\Zed^n$.

\begin{definition}\label{convHull}
 A convex hull is the smallest set of points that form a convex set as given in Definition~\ref{convSet}.
\end{definition}

\begin{definition}\label{convSet}
{\bf Convex set}. A set $C\subset  R^d$ is convex if for every two points $x, y \in C$ the whole segment $xy$ is also contained in $C$. In other words, for every $t \in [0, 1]$, the point $tx + ( 1 - t ) y$ belongs to $C$.
\cite{mat2002lectures}
\end{definition}

}

Discrete discs are formed by the Broadcasting Automata as they are arranged on the plane at integral Cartesian coordinates, $(\Zed \times \Zed)\in M$, and as such a broadcast to all automata in range, for point $v\in M$, $\tau(c(v))=r$ will cause a change in state to all those automata that form a discrete disc of radius $r$ from the point $v$. In the broadcasting automata model it is assumed that there is no restriction on the radius of transmission where it is considered that as the Von Neumann and Moore neighbourhoods can be described as the first two radii in the set of distinct discrete discs, $r^2=1$ and $r^2=2$. This means that discrete discs are quite a natural extension to the basic notion of neighbourhood sequences merely relaxing the constraints on the initial definition of $M-neighbours$ in the following way. 

\begin{definition}\label{rNeighbours}
Two points $p,q\in \Zed^n$ are $r-neighbours$ if the Euclidean distance, $d(p,q)=\sqrt{\sum^{n}_{i=0}(q_i-p_i)^2}$, is less than some $r$, used to denote the radius of a circle, such that $d(p,q)\leq r$. 
\end{definition}

Utilising the framework supplied by the work done on neighbourhood sequences it is no possible to outline {\bf Broadcasting sequences}\index{broadcasting sequences} which denote any sequence of radii, $r$, such that $R=(r_1,r_2,...,r_l)$. Labelling those points that are reachable by some application, $R_k$, of the neighbourhood sequence is another extension to the notation. All points such that $p\in R_1$ are labelled $0$, all points $p\in R_2\backslash R_1$ are labelled as $1$. More generally labels will be assigned $b$ where $k \equiv b \mod m$ and $m\in \mathbb{N}$. The work conducted here will also be restricted to the study of $\Zed^2$.

Many of the questions asked and, indeed, answered, in neighbourhood sequences, shall be useful in the exposition of broadcasting automata. These include whether a certain sequence of neighbourhoods are metrical or not \cite{1521291}. Indeed when using a general form of broadcasting automata whereby alternation of the broadcasting radius is allowed at each step as suggested here. Considering that all notions of the construction of algorithms presented here in the broadcasting automata model rely on the distance from the transmitter to the node, being able to assure that this distance will be consistent for all automata is essential. Also studied in neighbourhood sequences is the possible resultant shapes which are categorised and examined for their various properties such as their use in approximation\index{metric approximation} of euclidean distances where the isoperimetric ratio is used to compare the sequences based on the shapes that they form on the plane \cite{2010arXiv1006.3404F}. This same method is used to again show and estimation for euclidean distance as well as estimations for certain $L_p$ distances, here the astroid which will be presented in Section~\ref{Lp_Metrics}. The characterisation of the shapes generated by these neighbourhood sequences yield results about the shapes of compositions of such discrete discs on the plane and the partitions that such compositions form.


\section{Geometrical Properties of Discrete Circles}

The characterization of broadcasting sequences on ${\Zed}^2$ is closely related to the study of discrete circles on the square lattice
and their discrete representation. One of the efficient methods to describe the discrete circle is a chain coding.
%
%
%
%
The method of chain coding was first described in \cite{5219197} as a way in which to encode arbitrary geometric configurations where they were initially used, as they shall be here, to facilitate their analysis and manipulation through computational means. At the time there were many questions about the encoding of shapes and the methodologies which should be used to encode such shapes and chain coding presented an schema which was simple, highly standardised and universally applicable to all continuous curves. Whilst the definition given here differs from the definition given in \cite{5219197} it only differs for convenience when discussing discrete discs in the first octant allowing the use of only the numeric symbols $0$ and $1$ instead of $0$ and $7$ as Freeman suggested. This gives the following definition of chain coding that shall be used throughout.

\begin{definition}
From a starting point on the square lattice a {\bf chain code}\index{chain code} is a word from the alphabet $\{0,1,2,3,4,5,6,7\}$.
\end{definition}

\begin{figure}[htp]
\centering
\includegraphics[scale=0.45]{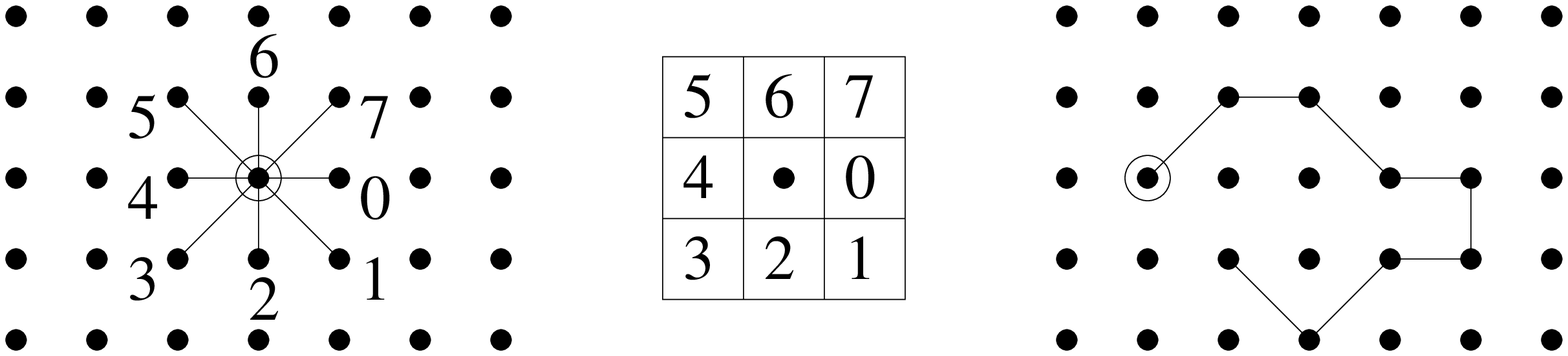}
\caption{(Left) Showing the eight possibilities of motion from the single point of origin, circled, as they may be traced out on the plane (Centre) Shows the eight possible points of motion from a central point and (Right) gives and example of an arbitrary line traced out on the lattice with a chain code of form $70102435$ originating from the circled point. \label{chainCod}}
\end{figure}

From some starting point chain codes may be used to reconstruct a shape, $S$, by translating the code's motion on to the lattice where $0$ indicates positive movement along the $x-axis$ with increasing values moving clockwise through all $8$ possible degrees of motion on the lattice, labelled from $0$ to $7$ respectively. An integral shape, $S$, can be encoded using an inverse method whereby the shape is traced from some starting point and the motion from one point to the next connected point in the shape is recorded as a chain code. For example, a straight line would be represented by the infinite repetition of a single digit such as $00000000000000...$, a line that satisfies the equation, $x-y=0$, may be encoded $77777777...$ or as the example given in Figure~\ref{chainCod} and arbitrary line may be encoded numerically as $7010235$.


The discrete disc is the basic descriptor of a transmission in broadcasting automata. It represents the total set of all automata that could be reached on the square lattice after a single broadcast of radius, $r^2$. In the following sections reasoning will follow only from the first octant, which can be seen in Figure~\ref{octants}, of the  discrete circle which shall now be defined and gives all the information that is required to categorise the discrete disc, its perimeter, but also has a convenient method of chain code construction. The need only to observe the first octant is derived from the symmetry that occurs such that it is possible to take a solution of the first octant and by successive reflection operations about the eight octants generate a full circle \cite{Bresenham:1977:LAI:359423.359432} under the assumption, which is also made here, that the circles are centred on a integral point on the plane.

\begin{definition}\label{circD}
A discrete circle is composed of points in the $\Zed^2$ set $\zeta =\{(x,y) | x^2+y^2\leq r^2,\ x,y\in \Zed,\ r\in \Real \}$ that have in their Moore neighbourhood any point in the complement of $\zeta$.
\end{definition}

\begin{figure}[htp]
\centering
\includegraphics[scale=0.60]{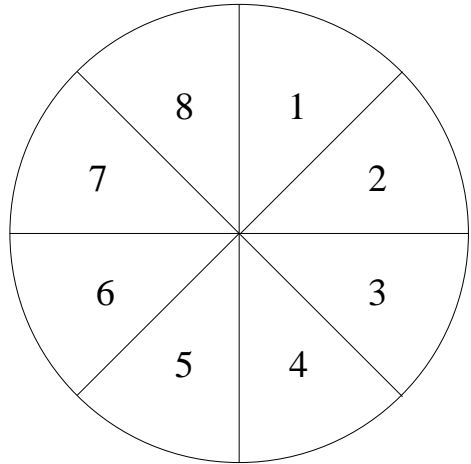}
\caption{An illustration of a circle that has been divided in to eight octants with the labelling of those octants applied accordingly.\label{octants}}
\end{figure}

\begin{figure}[htp]
\centering
\includegraphics[scale=0.80]{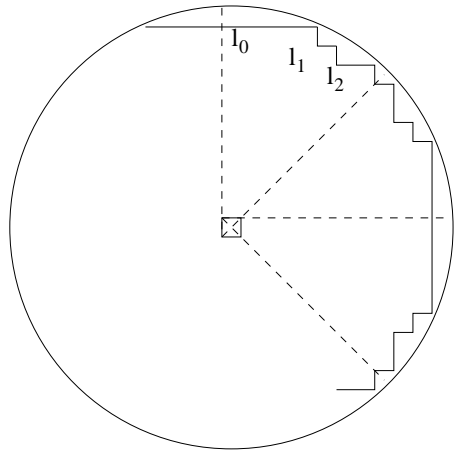}
\caption{An illustration of a discrete disc and its accompanying discrete circle of radius squared 116, where the chain code for the first octant is labelled as $l_0l_1l_2$.}\label{digiCirc}
\end{figure}

In Definition~\ref{circD} it is shown that $r\in \Real$ however many of these do not produce distinct discrete circles. The set of $r$ that includes all distinct circles are those which form Pythagorean triples which consist of two positive integers $a,b\in \Nat$, such that $a^2+b^2=r^2$.

A simple way to draw the discrete circle is to divide it into octants as can be seen in Figure~\ref{digiCirc}. First calculate only one octant of the circle and as the rest of the circle is "mirrored" from the first octant it can now be successively mirrored to allow the construction of the other sections of the circle. Let us divide the circle into eight octants labelled 1-8 clockwise from the y-axis. The notation $Octant(i)$ is used to refer to octant sector $i$. If an algorithm can generate the discrete circle segment on one octant, it can be used to generate other octants by using some simple transformations, such as rotation by $90^{\circ}$ and reflection around the $x$-axis, $y$-axis, and the $+45^{\circ}$ diagonal lines. 

Some basic results are now given about chain codes of such discrete circles. It is possible to show that in the first octant and given the definition of chain coding presented here only two digits, $0$ and $1$ are required to fully represent the discrete circle.

\begin{lemma}\label{chainStruc}
A Chain coding $w$ for the circle in the first octant is a word in $\{0,1\}^*$. 
\end{lemma}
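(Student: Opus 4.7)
The plan is to parameterise the outer rim of the discrete circle in the first octant as the sequence of lattice points $(0,y(0)), (1,y(1)), (2,y(2)), \ldots$, where $y(x)=\lfloor\sqrt{r^2-x^2}\rfloor$ is the greatest integer height whose lattice point still belongs to the disc, restricted to the range $0\le x\le x^*$ with $x^*=\max\{x : y(x)\ge x\}$. Since the trace advances clockwise with strictly increasing $x$ and consecutive points of the rim are, by Definition~\ref{circD}, Moore-adjacent on the lattice, the only admissible displacement from $(x,y(x))$ to $(x+1,y(x+1))$ is $(1,\,y(x+1)-y(x))$ with $y(x+1)-y(x)\in\{0,-1\}$; these are exactly the chain codes $0$ (east) and $1$ (south-east). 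So the lemma reduces to showing that $y(x+1)\in\{y(x),y(x)-1\}$ for every $x$ in the first octant.

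The upper bound $y(x+1)\le y(x)$ is immediate from the monotonicity of $\sqrt{r^2-x^2}$ on $x\ge 0$. For the matching lower bound $y(x+1)\ge y(x)-1$ I would use the defining inequalities $y(x+1)^2\le r^2-(x+1)^2<(y(x+1)+1)^2$ and $y(x)^2\le r^2-x^2$ to obtain the squeeze
\[
(y(x+1)+1)^2 \;>\; r^2-(x+1)^2 \;=\; (r^2-x^2)-(2x+1) \;\ge\; y(x)^2-(2x+1).
\]
It then suffices to check that $y(x)^2-(2x+1)\ge (y(x)-1)^2$, which after simplification becomes $y(x)\ge x+1$. This last inequality is precisely the statement that $(x,y(x))$ lies strictly above the $45^\circ$ diagonal, and therefore holds throughout the interior of the first octant.

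Combining the two bounds forces each displacement to be either $(1,0)$ or $(1,-1)$, so the chain symbol at every step is $0$ or $1$, establishing $w\in\{0,1\}^*$. The subtle point — and the main obstacle — is the lower bound: one has to transfer the strict inequality buried in the floor definition of $y(x+1)$ into a non-strict bound comparing $y(x+1)+1$ with $y(x)-1$, and this only closes once the octant constraint $y(x)\ge x+1$ is invoked. Everything else in the argument is a careful unfolding of definitions.
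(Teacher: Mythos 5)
Your proof is correct and follows essentially the same route as the paper's: both arguments come down to showing that the boundary can drop by at most one unit of $y$ per unit of $x$, and in both cases the decisive inequality reduces to the first-octant condition $y\ge x+1$ (the paper computes $|p_0|^2-|p_1|^2=2y_0-2x_0-2\ge 0$ for the diagonal neighbour $p_1=(x_0+1,y_0-1)$, which is your $y(x)^2-(2x+1)\ge(y(x)-1)^2$ in different clothing). Your explicit parameterisation via $y(x)=\lfloor\sqrt{r^2-x^2}\rfloor$ is a slightly cleaner packaging of the paper's exclusion of the subword ``22'' and re-encoding of ``20'' as ``1'', but it is not a different method.
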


\begin{proof}
First let us show that in the first octant for any given point $p_0=(x_0,y_0)$ on the circle at the point $p_1=(x_0+1,y_0-1)$ must also be in the circle as $|p_1|<|p_0|$. When $p_0$ is in the circle and in the first octant segment such that $x<y$ it follows that the magnitude of $p_1$ is less than $p_0$ as $|p_0|-|p_1| = (x_0^2+y_0^2)-(x_1^2+y_1^2) = (x_0^2-x_1^2) + (y_0^2-y_1^2) = (x_0^2-(x_0+1)^2) + (y_0^2-(y_0-1)^2) = (x_0^2-x_0^2 - 2x_0 - 1) + (y_0^2-y_0^2 + 2y_0 -1) = 2y_0-2x_0-2 \geq 0$. Since $p_1=(x_0+1,y_0-1)$ should be in the circle if $p_0=(x_0,y_0)$ is on the circle we have that any chain code for the circle in the first octant does not contain a subword ``22'' as it will lead to contradiction. Moreover in any subword a single symbol ``2'' should be followed by ``0'' and can be substituted by $1$.
\end{proof}

As only the first octant segment is considered the solutions can be represented using strings consisting of only $0$ and $1$ where $0$ is positive motion along the $x-axis$ and $1$ is positive motion along the $x-axis$ and negative motion along the $y-axis$. 

Indeed in the generation of such chain codes the following method shall be used which is a modification of the algorithm that is given in \cite{Bhowmick20082381}. Starting with the equation for the circle $x^2+y^2=r^2$ where it is known, by definition, that $r^2\in \Zed$. Now as solutions of the first octant are required and under the assumption that, without loss of generality, the centre of the circle is at the origin it is now possible to replace $y^2$ with $(r'-k)^2$, where $r' = \lfloor r \rfloor$, such that when $k=0$ and rearranging for $x^2$ such that $x^2=r^2-r'^2-k^2+2r'k$ the solution to the circle equation gives $x^2= r^2-r'^2$ where $r'\leq r$, as such the length of the first solution of such an equation is $\sqrt{r^2-r'^2}$. Successive values of $k$ give solutions to the length in the $x$-axis of that particular maximal Pythagorean triangle. All such Pythagorean triangles will give the final solution to the circle. 

 Here it is necessary to start with some terminology to say more precisely which parts of the chain code are considered to represent specific components in the resulting polygons. It will become much clearer later on why such distinctions are required as a single side of the polygon may differ in its chain code categorisation. The first of these categories of chain code, the most basic, is given here.

\begin{definition}
A {\bf chain code segment} is a subword of a chain code which is either a word $0^{|s_0|}$
or $10^{|s_i|-1}$.
\end{definition}

A discrete circle, $u$, is composed of chain code segments where each segment is represented as $u_i$, i.e $u=u_0u_1u_2...u_n$. Chain code segments can be expressed using a power notation for the number of repetitions e.g $100100$ may be rewritten as $(100)^2$ and if $u=u_0u_1u_2u_3= (00)(100)(100)(10)=(00)^1(100)^2(10)^1$.

It is possible to map the chain code, $u$, in $Octant(1)$, to the others octants through a function that maps the alphabet of the chain code $\{0,1\}$, to code to a new code in another octant segment. A chain code of a circle in an $Octant(n)$ can be constructing from an $Octant(1)$
first by applying the following mapping $\{0,1\} \rightarrow \{n-1 \mod 8, n \mod 8 \}$, where $0\leq n \leq 7$.
Then for the $Octant(n)$, where $n \equiv 1 \mod 2 $, or $n$ is odd, the code and the mapping itself must be reversed such that $\{0,1\} \rightarrow \{n \mod 8, n-1 \mod 8 \}$. For the rest of this paper we will only consider chain codes in the first octant.

\begin{proposition}
A circle in the first octant can be chain coded as follows:
\begin{center}
$0^{|s_0|} 10^{|s_1|-1} 10^{|s_2|-1} \ldots 10^{|s_i|-1} \ldots 10^{|s_{r'^2/2}|-1}$
\end{center}
where,
\begin{center}
$s_i=\{x |r^2 - r'^2 + 2(i-1)r' - (i-1)^2 < x^2 \leq r^2 - r'^2 + 2ir' - i^2, x \in \Zed \}$
\end{center}
$r$ is the radius and $r'$ is the floor of the radius.
\end{proposition}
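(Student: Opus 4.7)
The plan is to trace the boundary of the discrete circle in octant~$1$ from its apex $(0,r')$ toward the diagonal $y=x$, processing one row at a time. By Lemma~\ref{chainStruc}, the chain code is a word over $\{0,1\}$ only, where the symbol $0$ is an east step and the symbol $1$ is a south-east step. Since $1$ is the only symbol that decreases $y$ and it does so by exactly one unit, after the $i$-th occurrence of a $1$ the trace sits on the row $y=r'-i$. Hence the chain code decomposes into contiguous blocks $B_0,B_1,\ldots$, where $B_i$ collects the symbols written while the trace lies on row $y=r'-i$. For $i\ge 1$ each block $B_i$ opens with the descending $1$-step from row $y=r'-i+1$ and then extends eastward, producing a run of shape $10^{|s_i|-1}$; block $B_0$ starts at $(0,r')$ and consists purely of eastward moves, producing $0^{|s_0|}$.

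Next I will identify the horizontal extent within each row. A lattice point $(x,r'-i)$ lies in the disc $\zeta$ exactly when $x^{2}\le r^{2}-(r'-i)^{2}=r^{2}-r'^{2}+2ir'-i^{2}$. The points that are newly reachable on row $y=r'-i$ but were not available on row $y=r'-i+1$ are precisely the integers $x$ with
\[
r^{2}-r'^{2}+2(i-1)r'-(i-1)^{2} < x^{2}\le r^{2}-r'^{2}+2ir'-i^{2},
\]
which is by definition the set $s_i$. A short convexity argument, combined with Lemma~\ref{chainStruc} (so that no $2$ ever appears, i.e.\ one cannot drop two rows at once), shows that these new $x$-values form a single contiguous interval placed one unit to the right of the previous row's rightmost reachable point. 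Consequently the whole segment is traced by exactly one $1$ (the descent) followed by $|s_i|-1$ successive $0$s (the east steps), matching $10^{|s_i|-1}$; concatenating $B_0 B_1 B_2\cdots$ yields the asserted chain code.

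To finish I will pin down the range of $i$. Octant~$1$ terminates when the row drops to the diagonal, i.e.\ when $r'-i\le r/\sqrt{2}$, equivalently $2(r'-i)^{2}\le r^{2}$; this identifies the last block $B_{i_{\max}}$. I would verify (or, if the closed-form $r'^{2}/2$ stated in the proposition is off by a routine simplification, correct it to the form dictated by the inequality $r'-i\le r/\sqrt{2}$) that the upper subscript of $s$ indeed coincides with this last value of~$i$.

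The main obstacle I expect is not algebraic but structural: proving that the trace really visits every intermediate row $y=r'-i$ and that within each row it executes a single uninterrupted east-run after the descent, with no back-tracking and no skipped rows. This is where Lemma~\ref{chainStruc} (alphabet $\{0,1\}$, hence single-unit $y$-drops) combines with the convexity of $\zeta$ to do the heavy lifting; once this structural fact is established, matching the number of east steps in $B_i$ against $|s_i|$ is immediate from the algebraic description of $s_i$, and the proposition follows.
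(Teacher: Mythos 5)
Your argument is correct in outline, but it is worth knowing that the paper does not actually prove this proposition: its ``proof'' is a one-line deferral to \cite{Bhowmick20082381}, where this run-length encoding of the digital circle is derived. What you have written is essentially the argument that the paper only sketches in the prose immediately preceding the proposition (the substitution $y^2 \mapsto (r'-k)^2$ and the resulting per-row bound $x^2 \le r^2 - r'^2 + 2kr' - k^2$), turned into a genuine row-by-row proof; so you are filling a gap rather than duplicating anything. Three remarks. First, the ``short convexity argument'' you defer is precisely the computation already carried out in the proof of Lemma~\ref{chainStruc}: if $(x_0,y_0)\in\zeta$ with $x_0<y_0$ then $(x_0+1,y_0-1)\in\zeta$, which gives $M_i \ge M_{i-1}+1$ for the rightmost abscissae $M_i=\lfloor\sqrt{r^2-(r'-i)^2}\rfloor$ of successive rows; citing that lemma's proof directly is cleaner than re-deriving it, and together with the fact that each row's disc points form the interval $0\le x\le M_i$ it is all the structure you need. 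Second, note that $x>M_{i-1}$ and $x\le M_i$ are equivalent, for positive integers $x$, to the strict and non-strict inequalities on $x^2$ in the definition of $s_i$ — that equivalence (integrality of $x$ versus the floor in $M_i$) is the one-line verification that makes $|s_i|=M_i-M_{i-1}$ and should be stated; as written in the proposition $s_i$ ranges over all of $\Zed$, so for $i=0$ it is symmetric about the origin and must be restricted to positive $x$ for $|s_0|$ to count east steps. Third, your suspicion about the upper index is well founded: $r'^2/2$ is not the number of rows traversed in the first octant (that is roughly $r'(1-1/\sqrt2)$, determined by $r'-i\ge r/\sqrt2$ as you say), and appears to be an error inherited from the source; your plan to replace it with the bound dictated by the diagonal is the right correction.
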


\begin{proof}
It follows from \cite{Bhowmick20082381} where such a method of chain coding is introduced.
\end{proof}

\begin{proposition}
 Chain code segments, following the schema $$0^{|s_0|} 10^{|s_1|-1} 10^{|s_2|-1} \ldots 10^{|s_i|-1} \ldots 10^{|s_{r'^2/2}|-1}$$ have the following constraints to their lengths in the discrete circle: $$\lfloor \frac{s_{i}-1}{2}\rfloor-1 \leq s_{i-1} \leq s_{i}+1\ .$$
\end{proposition}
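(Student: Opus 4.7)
The plan is to proceed by a direct calculation from the closed-form expression for $|s_i|$ and then exploit the concavity of the continuous function $f(y) = \sqrt{r^2 - y^2}$ that underlies the digitization. Writing $x_j = \lfloor \sqrt{r^2 - j^2} \rfloor$ for the maximum horizontal coordinate on the disc at height $j$, one reads off from the definition of $s_i$ in the preceding proposition that
$$|s_i| = x_{r'-i} - x_{r'-i+1}$$
for $i \geq 1$. So each chain-code segment length is a first finite difference of the upper-half-circle digitization, and the proposition becomes a quantitative statement about how this finite difference can change from one step to the next.

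For the upper bound $|s_{i-1}| \leq |s_i| + 1$, the plan is first to establish the corresponding continuous statement: because $f$ is concave on $[0, r]$, the continuous first differences $f(r'-i+1) - f(r'-i)$ vary monotonically and by a uniformly bounded amount between consecutive indices. Converting this continuous bound to the integer setting introduces at most an additive error of one from the two floor operations that appear in comparing $|s_{i-1}|$ and $|s_i|$, and this is precisely the slack provided by the ``$+1$'' on the right-hand side. The argument boils down to showing that, in the worst case of rounding, $|s_{i-1}|$ can exceed $|s_i|$ by at most one.

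For the lower bound $\lfloor (|s_i|-1)/2 \rfloor - 1 \leq |s_{i-1}|$, the plan is to use concavity of $f$ in the opposite direction, combined with the crucial geometric fact that, while we remain inside the first octant (so $r'-i \geq r'/\sqrt{2}$), the slope magnitude $|f'(y)| = y / \sqrt{r^2 - y^2}$ at most roughly doubles between neighbouring integer heights. This translates into an inequality of the form $|s_i| \leq 2 |s_{i-1}| + C$ for a small explicit constant $C$; solving for $|s_{i-1}|$ and absorbing $C$ into the floor on the left-hand side and the $-1$ yields exactly the stated bound.

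The main obstacle I expect is the careful handling of the floor-function rounding, since each $|s_i|$ is itself a difference of two floors and so a comparison of $|s_{i-1}|$ and $|s_i|$ involves four floors simultaneously. I anticipate that the exact constants $+1$ and $-1$ in the proposition will emerge only from a worst-case tracking of how these four half-integer roundings can conspire. A secondary technical nuisance is the two end segments: the initial $|s_0|$, which counts pixels in the top row of the disc and so has a slightly different form from the recursive formula above, and the final segment near the octant diagonal where the defining interval for $s_i$ may not lie entirely inside the valid range. Both boundary cases should be verified by direct inspection from the definition of $s_i$ rather than swept into the main concavity argument.
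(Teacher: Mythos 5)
The paper does not actually prove this proposition --- its entire ``proof'' is a citation to Bhowmick and Bhattacharya --- so any argument from first principles is necessarily a different route from the paper's. Your identification of $|s_i|$ (for $i\ge 1$) with the finite difference $\lfloor\sqrt{r^2-(r'-i)^2}\rfloor-\lfloor\sqrt{r^2-(r'-i+1)^2}\rfloor$ is correct, and concavity plus careful floor bookkeeping is indeed the right toolkit for run-length estimates of this kind.

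However, there is a genuine gap, and it is one your own concavity step would have exposed had you pushed it through: the inequality $s_{i-1}\le s_i+1$ that you set out to prove is false as stated. Concavity of $y\mapsto\sqrt{r^2-y^2}$ forces the continuous increments $\sqrt{r^2-(r'-i)^2}-\sqrt{r^2-(r'-i+1)^2}$ to \emph{decrease} as $i$ increases (the arc is flat at the pole and steepens toward the $45^{\circ}$ diagonal), so it is $s_i$ that is bounded above by $s_{i-1}$ plus rounding slack, not the reverse; the gap $s_{i-1}-s_i$ grows like $\sqrt{r}$ near the top of the octant, so no floor bookkeeping can rescue a ``$+1$'' bound in the written direction. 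Concretely, for $r^2=100$ one has $|s_1|=4$ (the points $x\in\{1,2,3,4\}$ at height $9$) and $|s_2|=2$ ($x\in\{5,6\}$ at height $8$), both inside the first octant, and $4\not\le 2+1$; similarly $r^2=65$ gives $|s_1|=3$, $|s_2|=1$. The proposition as printed appears to have the roles of $s_{i-1}$ and $s_i$ interchanged relative to the result actually established in the cited reference, which bounds $s_i$ in terms of $s_{i-1}$: moving from the pole toward the diagonal, a run may lengthen by at most one and may shrink to no less than roughly half. So your ``upper bound'' plan cannot succeed as written --- carried out honestly it proves the reverse inequality, which is the correct statement --- and your lower-bound plan has the same orientation problem. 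In addition, the ``slope at most doubles between neighbouring heights'' heuristic fails near the top of the octant (for $r=10$ the ratio of consecutive continuous increments is $\sqrt{19}/(6-\sqrt{19})\approx 2.7$), which is exactly why the true bound needs the extra $-1$ of slack. I would first correct the statement (swap $s_{i-1}$ and $s_i$), then rerun your concavity and slope-ratio arguments in that orientation, treating $s_0$ and the terminal segment separately as you already propose.
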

\begin{proof}
It, again, follows from \cite{Bhowmick20082381} where this property is proved.
\end{proof}


As previously noted terminology must be introduced to distinguish the varying parts of the chain code and in order to talk about what they represent. In order to characterise the discrete circle two notions are introduced, \emph{Line Segments} and \emph{Gradients of Line segments}.The following definition for the gradient gives a method for extracting such information from the series of digits that construe the line as a word.

\begin{definition}
A {\bf gradient}, $G(u)$, of a chain code subword in $\{0,1\}$, $u$, is given by $G(u)=\frac{\#_1(u)}{|u|}$, where the function $\#_1$ returns the number of $1's$ found in the chain code and $|u|$ is the length of word $u$. 
\end{definition}

Line segments are in reference to actual lines that these sections of chain code would represent on the plane. That is if one were to draw a line and then attempt to translate this line in to a chain, under the assumption that the gradient of the line is rational, for reasons which can be seen in the definition of a gradient, it would be done via a periodic series of chain code segments. Taking a single period of this chain all the information that is required to identify and reconstitute this line to any length is now known. In this way the name may interpreted quite literally as a segment of a line in chain coded form such that any repetition of this object produces a line. With this in mind it will naturally be of benefit to be able to know a little more about the line that has been chain coded as a line segment. 

\begin{definition}\label{lineSeg}
A {\bf Line segment}, $l_i=u_ju_{j+1}...u_{j+n}$, is the shortest contiguous subsequence of chain code segments which maintain an increasing gradient such that $ G(l_i)<G(l_{i+1})$, $\forall i>0$, \tn{where the last chain code segment in $l_i$, $u_j+n$, is contiguous to the first chain code segment in the next line segment, that is, $l_{i+1}=u_{j+n+1}u_{j+n+2}...u_{j+n+m}$.}
\end{definition}

\begin{figure}[htp]
\centering
\includegraphics[scale=0.30]{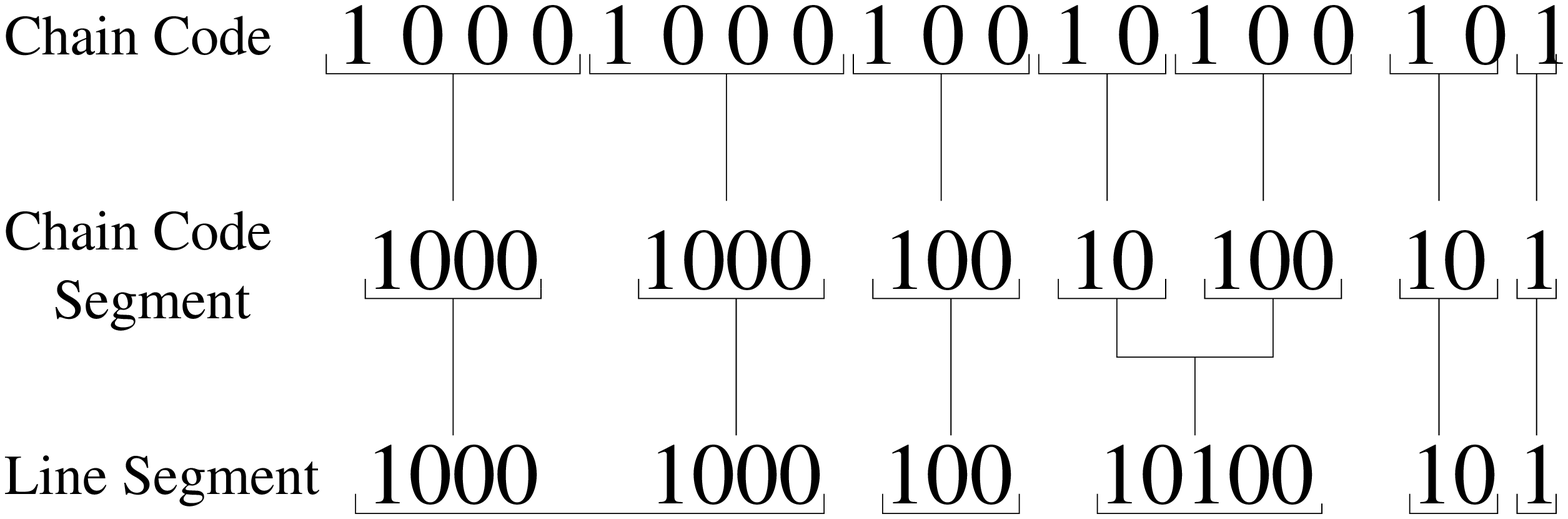}
\caption{Showing the the relationship between chain codes, chain code segments and line segments.}
\end{figure}

As, ultimately, all of the discrete discs that are discussed here, and, more importantly, that are possible to construct, are polygons, such polygons are composed of line segments which represent the sides of the convex shape on the plane. Further, when categorising the repeated broadcast of such discrete discs, later discussed as composition, the shapes that are generated are known to be similar, in the strict geometric sense. A definition for chain codes, translating from the geometry of the Reals, is given.

\begin{definition}
For any two chain codes, $u,v$, comprising line segments such that, $u=l_0l_1...l_m$ and $v=l'_0l'_1...l'_n$, for $m=n$, given by Definition~\ref{lineSeg}, are similar (geometrically) if there is a constant $k \in \Nat$ such that $u = l'\mathrlap{_{0}}^{k} l'\mathrlap{_{1}}^{k} ...l'\mathrlap{_{n}}^{k} $.
\end{definition}

In this way it is possible to say that two shapes are similar in a geometric sense such that they contain the same number of distinct line segments but the number of each of these line segments is different by some constant. 

With these definitions stated it is possible to begin the categorisation of the set of discrete circles with the first observation which extends an already known notion about the chain code of the discrete disc. Following \cite{Bhowmick20082381} it is known that in the chain code $u$ any chain code segments with increasing lengths, such that $|u_i|<|u_{i+1}|<...<|u_{i+n}|$, may increase it by at most $1$ from the length of a previous segment, i.e. $|u_{i+1}| \leq |u_i|+1$ for all $i$.

\begin{figure}[htp]
\centering
\includegraphics[scale=0.5]{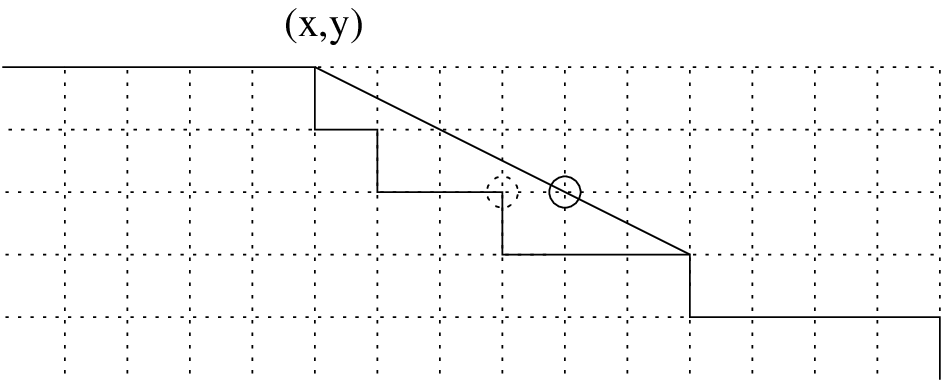}
\includegraphics[scale=0.5]{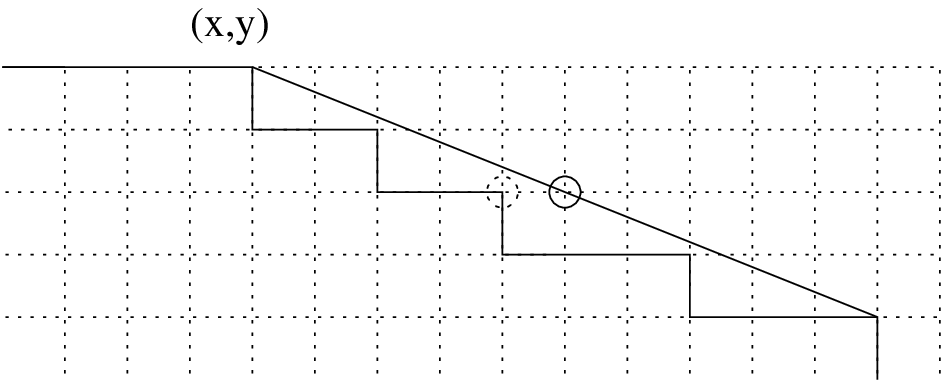}
\includegraphics[scale=0.2]{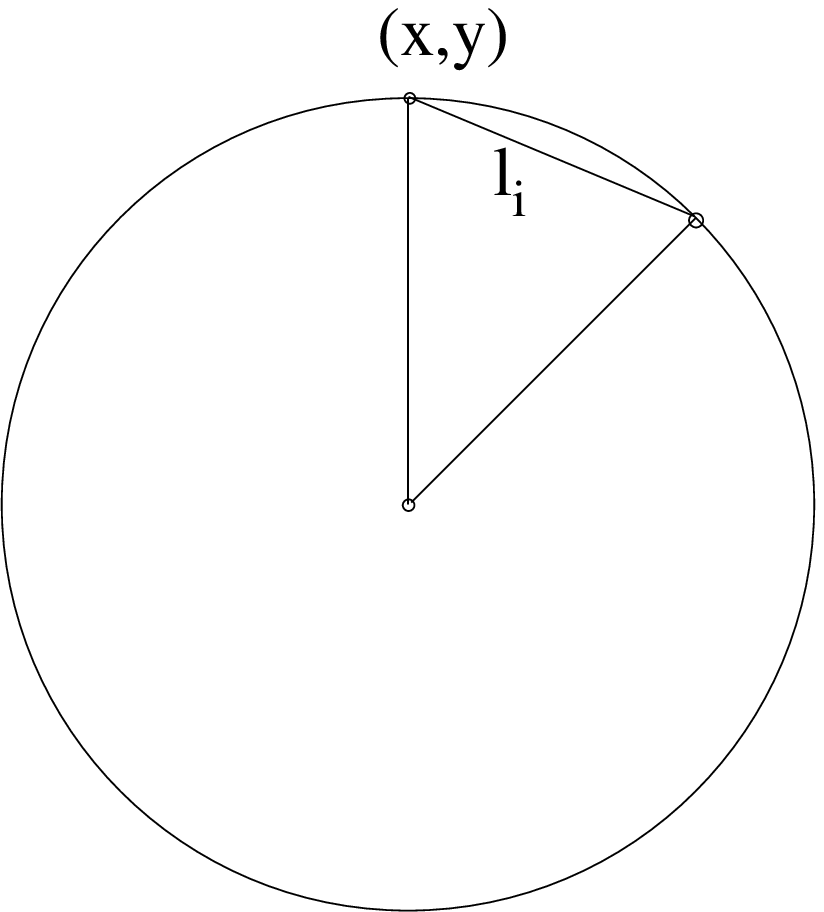}
\caption{(Left and Centre) Showing the initial point $(x,y)$ the chord and the point on the chord, the solid circle. (Right) Showing a chord on a circle from $(x,y)$ }\label{existence}
\end{figure}

By direct construction it is easy to check that a line segment can be of the form $10^n$ or $10^n 10^{n+1}$. However we can show that no line segment may be composed of more than two chain codes which increase in length by one. 

\begin{lemma}\label{lineLength}
No line segment can be of the form $10^n 10^{n+1} 10^{n+2} \ldots 10^{n+i}$, where $i>1$.
\end{lemma}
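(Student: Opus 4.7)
My plan is to argue by contradiction using the geometry of the underlying continuous circle rather than symbolic manipulation of chain codes. Suppose three consecutive chain code segments $s_{k-1}, s_k, s_{k+1}$ of the first-octant chain code have lengths $m, m+1, m+2$ (setting $m = n+1$ this is exactly the forbidden pattern $10^n 10^{n+1} 10^{n+2}$). For each index $i$, let $y_i = y_0 - i$ denote the row in which segment $s_i$ ends and set $R_i = \sqrt{r^2 - y_i^2}$, the exact $x$-coordinate of the continuous circle at height $y_i$. Because the chain code construction places the rightmost in-disc lattice point of row $y_i$ at $X_i = \lfloor R_i \rfloor$, we have $|s_i| = X_i - X_{i-1}$. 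Writing $\alpha_i = R_i - X_i \in [0,1)$ gives the identity
\[ d_i := R_i - R_{i-1} = |s_i| + \alpha_i - \alpha_{i-1}, \]
which confines $d_i$ strictly to the open interval $(|s_i|-1,\,|s_i|+1)$.

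The crux of the argument is strict concavity of the sequence $(R_i)$. From $R_i^2 = r^2 - y_i^2$ and $y_{i-1} - y_i = 1$ one computes $R_i^2 - R_{i-1}^2 = 2y_i + 1$ and $R_{i+1}^2 - R_i^2 = 2y_i - 1$; dividing each by the corresponding sum of radii and using $R_{i+1} + R_i > R_i + R_{i-1}$ yields $d_{i+1} < d_i$ strictly throughout the first octant (this is the expected geometric statement that the circle's horizontal extent grows more slowly per unit drop in height as one descends). With this in hand the lemma follows by a one-line squeeze: the range bounds force $d_{k-1} < m+1$ and $d_{k+1} > m+1$, while strict concavity forces $d_{k+1} < d_k < d_{k-1} < m+1$, which directly contradicts $d_{k+1} > m+1$.

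The general case $i > 2$ is immediate: any chain code of the form $10^n 10^{n+1} \ldots 10^{n+i}$ contains the three-segment prefix just ruled out. The main obstacle I anticipate is establishing strict concavity of $R_i$ cleanly, in particular verifying that both denominators $R_i + R_{i-1}$ and $R_{i+1} + R_i$ are strictly positive throughout the window under consideration; this is automatic here because the hypothesis begins with a symbol `$1$', forcing the three segments to lie past the initial plateau $0^{|s_0|}$ and hence strictly below the topmost row. Beyond that the argument is routine arithmetic with the fractional-part bounds already recorded above.
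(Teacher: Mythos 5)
Your proof is correct, but it takes a genuinely different route from the paper's. The paper argues via the chordal (convexity) property of the circle: if the block $10^n10^{n+1}10^{n+2}$ occurred, the lattice point $(x+2(n+2),\,y-2)$ reached by $10^n10^{n+2}$ from the segment's start would lie strictly outside the digital disc, yet it sits on the chord joining the segment's two endpoints and so must lie inside the disc --- a contradiction --- and the general case $i>1$ is handled by re-running the same argument on the longer block. You instead work analytically with the continuous boundary $R_i=\sqrt{r^2-y_i^2}$: the identity $|s_i|=\lfloor R_i\rfloor-\lfloor R_{i-1}\rfloor$ pins each first difference $d_i=R_i-R_{i-1}$ to the open interval $(|s_i|-1,|s_i|+1)$, strict concavity gives $d_{k+1}<d_k<d_{k-1}$, and three consecutive lengths $m,m+1,m+2$ then force $m+1<d_{k+1}<d_{k-1}<m+1$. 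Your version has two advantages: it avoids the delicate point in the paper's argument that the chain-code endpoints are lattice points only approximately on the continuous circle (so the chord property needs care to apply rigorously), and it actually proves the stronger fact that \emph{no} three consecutive chain-code segments anywhere in the first octant can have lengths $m,m+1,m+2$ --- which is why your reduction of the case $i>1$ to the three-segment prefix is legitimate and immediate, whereas the paper must re-examine the longer pattern (since ``is not a line segment'' is not inherited by subwords, but ``cannot occur as consecutive segments'' is inherited by superwords). The paper's argument, in exchange, is more visual and stays entirely within the discrete/geometric vocabulary of chain codes and chords. Your anticipated obstacle (positivity of the denominators $R_i+R_{i-1}$) is indeed handled by the observation that the pattern begins with a ``$1$'', so all rows involved lie strictly below the top of the disc and every $R_i$ with $i\ge 1$ is strictly positive.
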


\begin{proof}
Let us first show that any line segment which is part of the discrete circle cannot be of the form $10^n 10^{n+1} 10^{n+2}$ for any $n>1$. We will prove it by contradiction. Assume that the line segment $l_i=10^n 10^{n+1} 10^{n+2}$ with the gradient $1/n+1$
starting at a point $p$ with coordinates $(x,y)$ and finishing at $(x+3n+6,y-3)$. Therefore a point $(x+2(n+2),y-2)$ , i.e. a point which can be reached by a code $10^n 10^{n+2}$ from a point $(x,y)$, is above the discrete circle. By the definition of the chordal property of the circle any line that joins two points on the circle must bound, within the triangle formed from the two points on the circle and its centre, points which are again within the circle. On the other hand a point $(x+2(n+2),y-2)$, shown circled in Figure~\ref{existence}, belongs to a chord between end points of a chord $(x,y)$ and $(x+3n+6,y-3)$ and therefore should be within a discrete circle. So it is not possible to have a line segment with the following chain code $10^n 10^{n+1} 10^{n+2}$. The same argument holds for the general case $10^n 10^{n+1} 10^{n+2} \ldots 10^{n+i}$, where $i>1$ since the extension of the line segment still keep the point $(x-2,y+2(n+2))$ above the discrete circle and on the other hand this point will be in the triangle formed by a centre of the circle and two end points of the line segment since its gradient is equal to $\frac{i+1}{n(i+1)+(i+1)+i(i+1)/2}=\frac{1}{n+1+i/2} \leq \frac{1}{n+2}$, where $i \geq 2$. 
\end{proof}

Further to this weak restriction, that the chain codes may not be monotonically increasing or indeed non-decreasing, it is possible to give a more definite generalisation of the structure of the chain codes. This structure, which must be adhered to for all discrete discs, is given here.

\begin{theorem}\label{lineSegFor}
Any line segments on the discrete circle with non-negative gradients
should be in one of the following forms 
$(10^{n})^*$, $(10^{n})(10^{n+1})^*$, $(10^{n})^*(10^{n+1})$.
\end{theorem}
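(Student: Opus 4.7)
The plan is to combine three ingredients already established: the local length constraint from the previous proposition (giving $s_{i-1} \leq s_i + 1$), the global prohibition of $10^n 10^{n+1} 10^{n+2}$ from Lemma~\ref{lineLength}, and the ``shortest contiguous subsequence with increasing gradient'' clause of Definition~\ref{lineSeg}. Fix a non-trivial line segment $l = u_j u_{j+1} \cdots u_{j+m}$ with each $u_k$ of the form $10^{n_k}$. The first step is to show that only two consecutive integer exponents $n$ and $n+1$ can appear among $\{n_1, \ldots, n_{m+1}\}$: a jump upward of more than one is blocked by the local constraint, an increase of one followed by another increase is blocked by Lemma~\ref{lineLength}, and a downward drop of more than one would insert a chain code segment of strictly larger gradient inside $l$, forcing a subdivision that contradicts the line-segment definition.

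The second step is to rule out interior ``zig-zags,'' that is, subwords of $l$ of the form $10^{n+1} 10^{n} 10^{n+1}$ or $10^{n} 10^{n+1} 10^{n}$. The tool is the same chord-convexity argument that powers Lemma~\ref{lineLength}: the endpoints of the three-segment window lie on or inside the discrete circle, so the chord between them lies inside the disc; however, the lattice point reached by the alternative path that swaps the middle segment with one matching the outer two sits on or arbitrarily close to this chord yet violates convexity by lying strictly on the wrong side of the discrete circle. This is the main obstacle of the proof, because unlike in Lemma~\ref{lineLength} the offending triple may be buried deep inside $l$, so the ``test point'' used there cannot be read directly off the absolute coordinates of the starting corner; we need a uniform replacement that depends only on the local three-segment window, and we must verify that the resulting out-of-circle / inside-chord contradiction goes through for both orientations of the zig-zag and at any position along the octant.

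Once the no-zig-zag property is established, the exponent sequence admits at most one transition, so $l = (10^n)^a (10^{n+1})^b$ for some $a, b \geq 0$. The final step is to exclude the case $a \geq 2$ and $b \geq 2$ simultaneously. If both held, the prefix $(10^n)^a$ would be a maximal run of constant gradient $1/(n+1)$ and the suffix $(10^{n+1})^b$ a maximal run of the distinct constant gradient $1/(n+2)$, and the minimality clause of Definition~\ref{lineSeg} would then force the split $l \to (10^n)^a \,|\, (10^{n+1})^b$, making $l$ two line segments with a strict gradient change rather than one. Hence at most one of $a$ or $b$ exceeds $1$, leaving exactly the three surviving shapes $(10^n)^*$, $(10^n)(10^{n+1})^*$, and $(10^n)^*(10^{n+1})$ as claimed.
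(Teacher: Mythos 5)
Your high-level decomposition (restrict to two consecutive exponents, reduce to $(10^{n})^{a}(10^{n+1})^{b}$, then forbid $a,b\geq 2$) is reasonable, and you correctly identify the chord--convexity argument of Lemma~\ref{lineLength} as the engine. But the two steps that carry the real content do not go through as you describe them. Your step~2 is explicitly left unverified (``we need a uniform replacement\dots and we must verify\dots''), and in fact it cannot be verified in the purely local form you propose: triples of the form $10^{n}10^{n+1}10^{n}$ genuinely occur in the chain codes of sufficiently large discrete circles, straddling the boundary between two consecutive line segments whose gradients both lie strictly between $1/(n+2)$ and $1/(n+1)$ (e.g.\ $\dots(10^{n})^{a}10^{n+1}\mid(10^{n})^{a+1}10^{n+1}\dots$). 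A chord/test-point argument that sees only the three-segment window would therefore prove a false statement; and if you let it see the line-segment boundaries to avoid this, the argument becomes circular, since the line-segment structure is what the theorem is determining. Checking the obvious candidate witness confirms the failure: for $10^{n}10^{n+1}10^{n}$ starting at $(x_0,y_0)$, the chord meets the column $x_0+n+1$ at height $y_0-3(n+1)/(3n+4)\in(y_0-1,y_0)$, so no lattice point is simultaneously above the path and on or below the chord.

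Your step~3 is the place where the paper does its actual geometric work, and your replacement of it by the minimality clause of Definition~\ref{lineSeg} is a genuine error. Splitting $(10^{n})^{a}(10^{n+1})^{b}$ into $(10^{n})^{a}$ followed by $(10^{n+1})^{b}$ yields gradients $1/(n+1)$ followed by $1/(n+2)$, which is a \emph{decrease}; the definition requires $G(l_i)<G(l_{i+1})$ for consecutive line segments, so it forbids exactly the subdivision you invoke rather than forcing it. Whether $(10^{n})^{2}(10^{n+1})^{2}$ can appear in the chain code of a discrete circle is a geometric fact about circles and cannot be extracted from how one chooses to parse the code. The paper settles it by the same convexity device as Lemma~\ref{lineLength}: if $(10^{n})^{2}(10^{n+1})^{2}$ started at a boundary point, the lattice point reached after $(10^{n})(10^{n+1})$ (i.e.\ $2n+3$ steps across and $2$ down) lies strictly outside the chain code yet on the chord joining the two endpoints of the four-segment window, hence inside the Euclidean disc --- a contradiction; and padding the window on the left by $(10^{n})^{m_1}$ or on the right by $(10^{n+1})^{m_2}$ preserves the witness, which disposes of the general $m,k\geq 2$ case. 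That explicit witness, not the definition of line segment, is the missing ingredient in your proposal.
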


\begin{proof}
Following Lemma~\ref{lineLength} we can restrict number of cases
since any concatenations of more than two chain codes which increase in length by one
may not be part of a line segment.
The line segments are sub-words of a chain code with increasing gradients, so if a subword
$(10^{n})^m$ is surrounded by any chain code segments $10^{n_1}$ and $10^{n_2}$, where 
$n_1, n_2 \ne {n \pm 1}$ , then $(10^{n})^m$ is a line segment.
In the case were $n_1, n_2 = {n \pm 1}$ a subword $(10^{n})^m (10^{n+1})^k$
can be surrounded by only chain codes $10^{n_3}$ and $10^{n_4}$, where 
$n_3 \ne n - 1$ and $n_4 \ne n +2$ (by Lemma~\ref{lineLength}).
We will show now that the only line segments of the form
$(10^{n})^m (10^{n+1})^k$ can be where either $m=1$ or $k=1$.
The proof is similar to Lemma~\ref{lineLength}. 
Let us first show that the line segment from a point $(x,y)$
cannot have the following chain code $(10^{n})^2 (10^{n+1})^2$.
If we assume that such chain code may correspond to a line segment 
such that a point $(x-2, y+2n+3)$, shown in the centre diagram in Figure~\ref{existence}, is above the chain code (i.e. our of a circle) and at the same 
time is on a chord between points $(x,y)$ and $(x-4, y + 4n +6)$, so should 
be in a discrete circle. Extending the chain code $(10^{n})^2 (10^{n+1})^2$
from the left by $(10^{n})^{m_1}$ or from the right by $(10^{n+1})^{m_2}$
for any $m_1, m_2 >0$ will not change the property of the point $(x-4, y + 4n +6)$.
So it can be in one of the following forms either $(10^{n})^m 10^{n+1}$ or $10^{n} (10^{n+1})^k$.
\end{proof}

\section{Iterative Composition}

Iterative composition of discrete circles may create different polygons, which later can also be used for forming
non-convex shapes.  The following definitions, and further characterisations, shall be required in order to discuss the properties of such composition and develop the proofs. 

\tn{
\begin{definition}
A discrete disc\index{discrete disc} is composed of points in the $\Zed^2$ set $\zeta^{r^2} =\{(x,y) | x^2+y^2\leq r^2,\ x,y\in \Zed,\ r\in \Real \}$.
\end{definition}

\begin{definition}\label{composition}
The composition of $l$ digital discs is defined as $\zeta^{r_0}\circ \zeta^{r_1}\circ...\circ\zeta^{r_l}$ for $r_i\in \Nat$, where, $\zeta^r \circ \zeta^{r'} = \{a+b|a\in \zeta^r,\ b\in\zeta^{r'}\}$ and is equivalent to $H(\mathcal{A}_l)$ where $\mathcal{A}=r_0,r_1,...,r_l$ as given in Definition~\ref{convHull} and Definition~\ref{rNeighbours} respectively.
\end{definition}}

\begin{definition}
The composition $u\circ v = w$ represents the composition of the chain codes in the first octant for their respective discs, $\zeta^u$ and $\zeta^v$, which results in $w$ the chain code for the first octant of the resultant disc of $\zeta^u\circ \zeta^v$. This method can be seen in Figure~\ref{exis}.
\end{definition}

\begin{figure}[htp]
\centering
\includegraphics[scale=0.8]{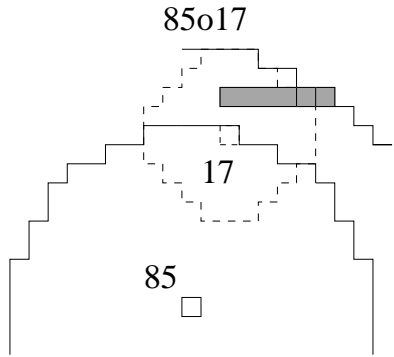}
\caption{Showing the composition of squared radii $Disc(85)\circ Disc(17)$ where $Disk(85)$ is shown as a solid line and $17$'s first composition is shown dashed. The hatched area represents all contributions from the previous compositions.}\label{exis}
\end{figure}




\begin{example}
Iterative composition of discrete circles\index{discrete circles} may create different polygons. For example, in the following picture, Figure~\ref{91626}, there are two differing cases, in one case we iteratively apply the digital circle with squared radius $9$ resulting in the equilateral octagon. In the second case squared radii $16$ and $26$ are periodically applied starting with $16$.
\end{example}

\begin{figure}[htp]
\centering
\includegraphics[scale=0.20]{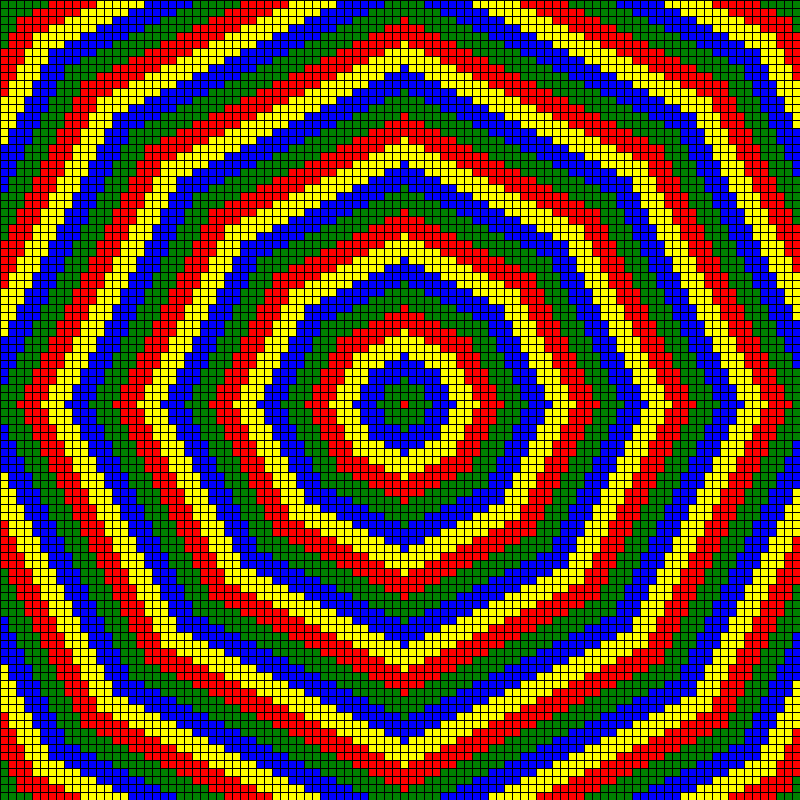}
\includegraphics[scale=0.20]{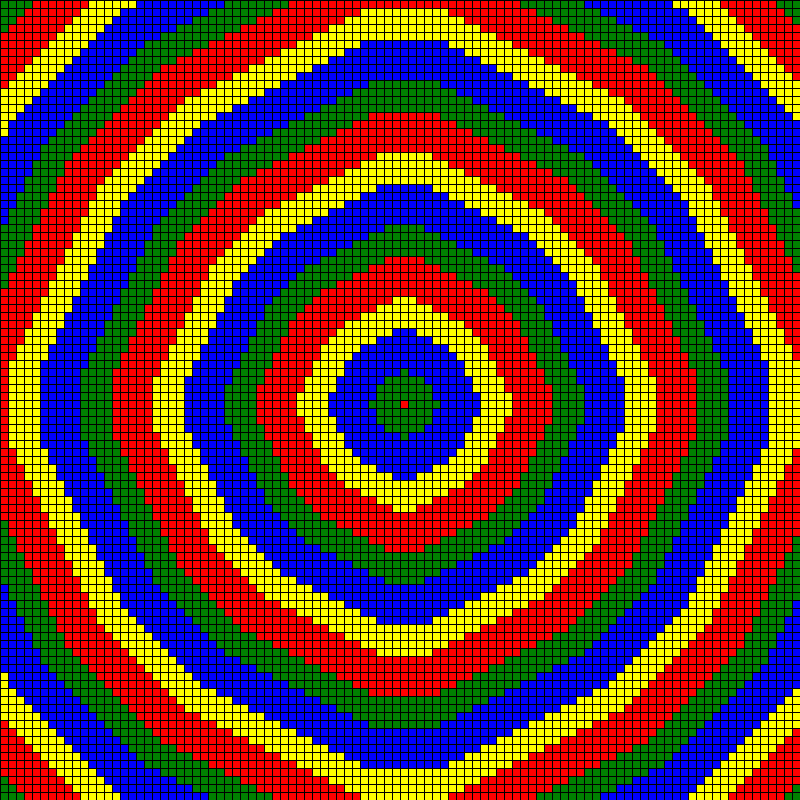}
\caption{The above figures illustrate (left) the constant iteration of the discrete disc of squared radius $9$ and (right) the alternation between the transmission of squared radii $16$ then $26$. }\label{91626}
\end{figure}

From these definitions it is now possible to describe a naive algorithm, and derive its correctness, for the composition of any number of chain codes that represent discrete discs. The result is again a convex polygon that represents the combination of the constituent parts. Such an algorithm merely systematically checks all possible combinations of the chain code and reports the largest that is found for that step. Later it shall be seen that if the chain code of the discrete disc is categorised further it is possible to give a simple linear time algorithm with a proof of correctness that is derived from the Minkowski sum along with a combinatorial proof.

\begin{lemma}\label{direct_composition}
Given two chain codes, $u=u_0u_1...u_n$ and $v=v_0v_1...v_m$, in form $\{0,1\}^*$ then $u \circ v=w=0^{|w_0|}10^{|w_1|-1}...10^{|w_{m+n-1}|-1}$, such that $$|w_k|=max(\{\displaystyle\sum\limits_{i=0}^{n} |u_i| + \displaystyle\sum\limits_{j=0}^{k-n} |v_j|\ |\ 0 \leq n \leq k \}).$$
\end{lemma}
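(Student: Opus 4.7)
The plan is to reduce the identity to the Minkowski-sum characterisation of composition and then translate the $x$-extent at each $y$-level into chain code language in the first octant.

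First, I would unpack Definition~\ref{composition}: $\zeta^u \circ \zeta^v = \{a+b : a \in \zeta^u,\, b \in \zeta^v\}$ is the Minkowski sum of two convex digital discs, hence again a convex lattice polygon. In the first octant, the chain code $u = u_0 u_1 \ldots u_n$ describes the upper-right boundary starting from the topmost lattice point $(0, r_u)$: the initial run $u_0 = 0^{|u_0|}$ moves the trace by $(+|u_0|, 0)$, while each subsequent segment $u_i = 10^{|u_i|-1}$ for $i \geq 1$ moves it by $(+|u_i|, -1)$. Consequently, after executing segments $u_0, \ldots, u_i$, the trace reaches the rightmost lattice point of $\zeta^u$ at vertical displacement $i$ below the apex, located at $x$-coordinate $X^u_i := \sum_{\ell=0}^{i} |u_\ell|$. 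The analogous statement holds for $v$ and $X^v_j := \sum_{\ell=0}^{j} |v_\ell|$.

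Next, I would invoke the standard width-profile identity for Minkowski sums: the rightmost lattice point of $\zeta^u \circ \zeta^v$ at vertical displacement $k$ below the composite apex has $x$-coordinate $X^w_k = \max_{0 \leq i \leq k}(X^u_i + X^v_{k-i})$, since any boundary point at that level decomposes as the sum of boundary points at levels $i$ and $k-i$ of the summands and we wish to maximise $a_x + b_x$. This is precisely the expression in the lemma (with the author's running index $n$ renamed to $i$). Reading the chain code $w$ off the resulting convex polygon gives the form $0^{|w_0|} 10^{|w_1|-1} \cdots 10^{|w_{m+n-1}|-1}$, with each segment length recovered as the incremental difference $X^w_k - X^w_{k-1}$ (setting $X^w_{-1} := 0$); under the convention that $|w_k|$ on the left-hand side denotes cumulative $x$-displacement after $k$ segments, this is exactly the claimed formula.

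The main obstacle will be the combinatorial bookkeeping at the index boundaries. One must verify (i) that the resulting chain code has exactly $m+n-1$ nontrivial segments, which follows because $X^w_k$ is eventually constant equal to $X^u_n + X^v_m$ only at $k = n+m$ and the indexing of $w_k$ inherits this range; and (ii) that the incremental sequence $(X^w_k - X^w_{k-1})_k$ is a non-increasing sequence of positive integers compatible with the chain code constraints of Theorem~\ref{lineSegFor}. Part (ii) follows from concavity of Minkowski sums: both $X^u_\cdot$ and $X^v_\cdot$ are concave in their index (being cumulative widths of convex discs traversed row by row), and the upper envelope $k \mapsto \max_{i+j=k}(X^u_i + X^v_j)$ of a sum of concave sequences is itself concave, so its consecutive differences are non-increasing, which precisely matches the structural constraint on chain code segment lengths.
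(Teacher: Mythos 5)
Your core argument is essentially the paper's own: the paper also reads the composition as placing $v$ at every point of $u$ (i.e.\ the Minkowski sum) and observes that at each vertical level $k$ the rightmost extent is the maximum of $X^u_i+X^v_{k-i}$ over all splits $i+(k-i)=k$, which is exactly the displayed max-plus convolution; you have also correctly spotted that $|w_k|$ in the statement must be read as a cumulative displacement rather than a segment length. One caveat on your closing paragraph: the claim that $X^u_\cdot$ is concave in its index, and hence that the increments of $X^w_\cdot$ are non-increasing, is false --- chain code segment lengths of a discrete circle are not monotone (e.g.\ a line segment of the form $(10^{n})(10^{n+1})$ is explicitly permitted by Theorem~\ref{lineSegFor}), so consecutive run lengths can increase. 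Fortunately the lemma only needs each increment to be a positive integer, which follows immediately from $|v_j|\geq 1$ (as the paper's corollary notes), so this does not damage the proof of the stated formula; the finer structural compatibility is deferred to Lemma~\ref{lineSegOrd} and Theorem~\ref{compoThe} in the paper rather than extracted from concavity here.
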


\begin{proof}
The naive way of generating the composition $u\circ v$ is for all points on the chain code of the disc $u=0^{|u_0|}10^{|u_1|-1}...10^{|u_i|-1}...10^{|u_n|-1}$, to be the centre of the disc $v=0^{v_0}10^{|v_1|-1}...10^{|v_j|-1}...10^{|v_m|-1}$. This is equivalent to placing the centre of the chain code $v$ at every point defined by the chain code $u$ generating $w$. The maximal point in each chain code segment (i.e if $u_i=100$ then the coordinate of the final $0$) need only be considered, clearly covering all others. Let $u_i\circ^\prime v_j = w_{i+j}$ denote the centring $v$ at the coordinate reached by the maximal point of $u_i$ in which we consider the maximal point attained by chain code segment $v_j$ and represent a possible length of the chain code for $w$ at $w_{i+j}$. The maximal of all such combinations of lengths, those for which the sum of $i,j$ are equivalent, is required and is defined as the longest contiguous subsequence of length $k$ from $v=v_0v_1...v_j$ and $u=u_0u_1...u_i$ for all $i,j$ such that $i+j=k$ which represents $w_k$ for $0 \leq k < m+n$ i.e:
\begin{center}
$max  
\begin{pmatrix}
\vert u_0 \vert + \vert v_0 \vert \\
\end{pmatrix}
 = \vert w_0 \vert$
\hspace{1cm}
$max  
\begin{pmatrix}
|u_0| + |v_0| + |v_1| \\
|u_0| + |u_1| + |v_0| \\
\end{pmatrix}
 = \vert w_1 \vert$
%

\vspace{0.4cm}

$max  
\begin{pmatrix}
|u_0| + |v_0| + |v_1| + |v_2|\\
|u_0| + |u_1| + |v_0| + |v_1|\\
|u_0| + |u_1| + |u_2| + |v_0|\\
\end{pmatrix}
 = \vert w_2 \vert$

$$...$$

$max  
\begin{pmatrix}
|u_0| + |v_0| + |v_1| +...+ |v_k|\\
|u_0| + |u_1| + |v_0| + ...+ |v_{k-1}|\\
...\\
|u_0| + |u_1| + ... |u_i| + |v_0| + ...+ |v_{k-i}|\\
...\\
|u_0| + |u_1| + ... + |u_k| + |v_0|\\
\end{pmatrix}
 = \vert w_k \vert$

$$...$$

$|u_0| + |u_1| + ... + |u_n| + |v_0| + |v_1| +...+ |v_m| = |w_{m+n}|.$
\end{center}
\end{proof}

Such an algorithm allows a characterisation of the composition of discrete discs. Indeed, here, it can be shown that the composition of chain codes, which have been shown to be a word $W\in \{0,1\}^*$, is, again, a word, $W'\in \{0,1\}^*$.

\begin{corollary}
Given two chain codes $u$, $v\in \{0,1\}^*$. The composition $u\circ v\in \{0,1\}^*$
\end{corollary}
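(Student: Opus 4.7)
The plan is to read the conclusion straight off the explicit form produced by Lemma~\ref{direct_composition}. That lemma asserts that for $u,v \in \{0,1\}^*$ with the standard chain code decomposition, the composition is
\[
u \circ v \;=\; w \;=\; 0^{|w_0|}\,1 0^{|w_1|-1}\,1 0^{|w_2|-1}\cdots 1 0^{|w_{m+n-1}|-1},
\]
where the lengths $|w_k|$ are defined as maxima of sums of the segment lengths of $u$ and $v$. So my first step would simply be to invoke this formula.

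Next I would argue that each elementary block appearing in the formula, namely $0^{|w_0|}$ and $1 0^{|w_k|-1}$, is itself a word in $\{0,1\}^*$ whenever the exponent is a nonnegative integer, and that $\{0,1\}^*$ is closed under concatenation. Hence $w$ lies in $\{0,1\}^*$ as soon as every exponent is well defined.

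The only residual obligation is therefore to check that $|w_k|\geq 1$ for every $k \in \{0,1,\ldots,m+n-1\}$, so that $|w_k|-1 \geq 0$ and the block $1 0^{|w_k|-1}$ is legitimate. For this I would observe that $|w_k|$ is by definition a maximum over nonempty sums of the segment lengths $|u_i|,|v_j|$, and every chain code segment in the canonical form is of type $0^{|s_0|}$ or $10^{|s_i|-1}$ with $|s_0|,|s_i|\geq 1$. Consequently every summand in the maximum is at least $1$, whence $|w_k| \geq 1$.

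The corollary is essentially syntactic, a direct unwrapping of Lemma~\ref{direct_composition}, so I do not anticipate any substantive obstacle. The only pitfall to watch for would be the boundary case where some exponent could collapse to zero, which I handle by the length-$\geq 1$ argument above; no combinatorial or geometric input beyond Lemma~\ref{direct_composition} is needed.
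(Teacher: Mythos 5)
Your proof is correct and takes essentially the same route as the paper: both unwrap Lemma~\ref{direct_composition} and reduce the claim to the positivity of all chain code segment lengths $|u_i|,|v_j|$. The only cosmetic difference is that the paper phrases the boundary check as strict growth of the successive maxima ($l(k)<l(k+1)$, each obtained from the previous by appending one further nonzero segment length) rather than your direct bound $|w_k|\geq 1$; if the $|w_k|$ in the lemma are read as cumulative lengths, as its displayed computation suggests, the monotonicity version is the form that actually guarantees each block $10^{*}$ has a nonnegative exponent, but the substance of the argument is the same.
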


\begin{proof}
Each segment in $u$, $v$, is non-zero $\forall i\ |u_i|\neq 0, |v_i|\neq 0$. It is followed from Lemma~\ref{direct_composition} that by enlarging $k$, for $l(k+1)$, the previous set of solutions is extended by a chain code from $u$ or $v$ so $l(k) < l(k+1)$. Thus the composition will be in $\{0,1\}^*$.
\end{proof}

Also as a derivation from the algorithm an observation about the properties of the composition function $\circ$.

\begin{theorem}
The composition $\circ$ of two chain codes, $u$ and $v$ is commutative, i.e. $u \circ v = v \circ u$.
\end{theorem}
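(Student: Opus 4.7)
The plan is to reduce commutativity of $\circ$ to commutativity of vector addition, which is how composition is fundamentally defined on the underlying discs.

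First I would fall back to Definition~\ref{composition}, which states $\zeta^u \circ \zeta^v = \{a+b \mid a \in \zeta^u, b \in \zeta^v\}$. This is a Minkowski sum, and since coordinate-wise addition in $\Zed^2$ is commutative, we immediately have $\zeta^u \circ \zeta^v = \zeta^v \circ \zeta^u$ as point sets. Since the chain code of a disc (in a fixed octant with a fixed orientation convention) is uniquely determined by the disc's boundary, the chain codes must also be equal: $u \circ v = v \circ u$.

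As a more self-contained alternative, I would argue directly from Lemma~\ref{direct_composition}. The lemma expresses the $k$-th segment length of $w = u \circ v$ as
\[
|w_k| = \max\Bigl\{\, \textstyle\sum_{p=0}^{i}|u_p| + \sum_{q=0}^{j}|v_q| \;\Big|\; i+j=k,\; 0\le i,\; 0\le j \,\Bigr\}.
\]
Swapping the roles of $u$ and $v$ in this formula gives exactly the same set of sums (each ordered pair $(i,j)$ with $i+j=k$ simply becomes $(j,i)$ with $j+i=k$), so the maxima coincide term by term. Therefore the chain codes $u \circ v$ and $v \circ u$ have identical segment-length sequences, and by the form $0^{|w_0|} 1 0^{|w_1|-1} \ldots$ they are the same word.

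There isn't really a hard step here; the only thing that needs care is making sure the chain-code representation of a convex discrete disc in the first octant is unique (so that equality of point sets, or equality of the length sequence, translates to equality of words). This follows from the canonical form established in Theorem~\ref{lineSegFor} and the preceding characterisation of chain codes as words in $\{0,1\}^*$ with prescribed segment structure. Given that, the Minkowski-sum argument is essentially one line, and the combinatorial argument from Lemma~\ref{direct_composition} is a simple symmetry observation in the indexing.
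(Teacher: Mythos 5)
Your second, combinatorial argument is essentially the paper's own proof: the authors establish commutativity by observing that the set of sums in Lemma~\ref{direct_composition} defining $|w_k|$ is symmetric under exchanging $u$ and $v$, so the maxima agree for every $k$. Your first argument via the Minkowski sum $\{a+b \mid a\in\zeta^u,\, b\in\zeta^v\}$ is a valid and arguably cleaner alternative route (the paper itself invokes the Minkowski sum only later, for Theorem~\ref{compoThe}), and your remark about needing uniqueness of the first-octant chain-code representation is a reasonable point of care, though not one the paper dwells on.
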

\begin{proof}
Following Lemma~\ref{direct_composition} it is possible to exchange $u$ and $v$ and still obtain the same result,
showing commutativity of composition of chain codes $u$ and $v$. The following two sets representing 
$k$-th level of a new chain code are equal as well as their maximums:
$$\{S|S= \displaystyle\sum\limits_{i=0}^{n} |u_i| + \displaystyle\sum\limits_{j=0}^{k-n} |v_j| ,0 \leq n \leq k \} =
\{S|S= \displaystyle\sum\limits_{i=0}^{n} |v_i| + \displaystyle\sum\limits_{j=0}^{k-n} |u_j| ,0 \leq n \leq k \} $$
\end{proof}

\tn{It is also notable that having shown commutativity chain code composition it may be possible that it is an Abelian group, where the identity element is simply the circle of radius $0$, associativity is derived from the algorithms ordering of line segments by gradient, which is naturally the same irrespective of the order it is carried out in, the inverse is simply the removal of one set of line segments from a chain code. However, it is currently unknown whether or not the operation is closed.}

Further, a proof of the ordering of the line segments in the discrete circle is here required in order to validate the proof of the composition theorem that is given as one of the main tools and results of this section.

\begin{lemma}\label{lineSegOrd}
Given line segments, $l_i$, of a form produced by the digitisation of a circle, which are composed of chain code segments of its first octant $a=10^{m-1}$ and $b=10^{m}$, then the ordering of their gradient for combinations of $a$ and $b$ is $$G(b)<...<G(ab^*)<...<G(ab)<...<G(a^*b)<G(a)\ .$$
\end{lemma}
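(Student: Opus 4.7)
The plan is first to reduce everything to a single closed form. Since $a = 10^{m-1}$ and $b = 10^m$ each carry exactly one $1$ in words of length $m$ and $m+1$ respectively, one immediately reads off $G(a) = \tfrac{1}{m}$ and $G(b) = \tfrac{1}{m+1}$. Counting $1$'s and letters in any interleaving of $j$ copies of $a$ and $k$ copies of $b$ (only the multiplicities matter for the gradient) gives
$$G(a^j b^k) \;=\; \frac{j+k}{jm + k(m+1)} \;=\; \frac{j+k}{(j+k)m + k},$$
which is the weighted mediant of $\tfrac{1}{m}$ and $\tfrac{1}{m+1}$ and therefore lies strictly between $G(b)$ and $G(a)$ whenever both $j,k \geq 1$.

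Next I would establish the two monotonicities that together assemble the chain. Fixing $j = 1$ and comparing $G(ab^k)$ with $G(ab^{k+1})$ by cross-multiplication, the difference of the two cross-products collapses to a single $-1$, showing $G(ab^{k+1}) < G(ab^k)$ for every $k \geq 1$; this yields the descending half $G(b) < \cdots < G(ab^3) < G(ab^2) < G(ab)$, where the strict lower bound $G(b)$ comes from the mediant observation above and is approached in the limit $k\to\infty$. By the symmetric calculation, fixing $k = 1$ gives $G(a^j b) < G(a^{j+1} b)$ with common upper bound $G(a)$, producing the ascending half $G(ab) < G(a^2 b) < G(a^3 b) < \cdots < G(a)$. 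Concatenating the two chains through their shared pivot $G(ab) = \tfrac{2}{2m+1}$ delivers the stated ordering.

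The main obstacle is essentially bookkeeping rather than mathematical depth: one must observe that $ab = ab^1 = a^1 b$ is the unique element common to both families $ab^*$ and $a^* b$, so that it plays the role of a "midpoint" gluing the two monotone sequences together cleanly. Once that pivot identification is in place, every comparison reduces to cross-multiplying two fractions whose cross-product difference is $\pm 1$, which is routine arithmetic and does not require any geometric input beyond the closed form already derived.
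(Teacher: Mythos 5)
Your proof is correct and follows essentially the same route as the paper: both write each gradient as an explicit fraction $\frac{j+k}{jm+k(m+1)}$ and verify the ordering by direct cross-multiplication of adjacent terms. Your organization (the mediant bound for the endpoints plus two monotone chains glued at the pivot $ab$) is a somewhat tidier bookkeeping of the same arithmetic, but not a different argument.
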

\begin{proof}
The proof is a simple comparison of the gradients which shows, where $|a|=m$, $\frac{1}{m+1}<\frac{n}{(n-1)(m+1)+m}<\frac{n-i}{(n-1-i)(m+1)+m} <\frac{n-i}{(n-1-i)(m)+m+1}<\\<\frac{n}{(n-1)(m)+m+1}< \frac{n}{(n-1)m+m+1} $, where $i<n-1$, for $G(b)< G(ab^{n-1})<G(ab^{n-1-i})<G(a^{n-1-i}b)<G(a^{n-1}b)<G(a)$ respectively.
\end{proof}

Given the preceding validation about the convexity of the discrete discs it is now possible to employ the Minkowski sum to conclude and validate the composition of any number of discrete discs using the composition function.

\begin{theorem}\label{compoThe}
{\bf Composition Theorem.} Given two chain codes $u$ and $v$ which contain line segments 
$l^{u}_1 l^{u}_2 \ldots l^{u}_t$ and $l^{v}_1 l^{v}_2 \ldots l^{v}_{t'}$
with strictly increasing gradients.
The chain code of a composition $u \circ v$ can be constructed 
by combining line segments of $u$ and $v$ and ordering them
by increasing gradient.
\end{theorem}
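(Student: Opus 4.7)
The plan is to reinterpret the chain-code composition geometrically as the Minkowski sum of the corresponding convex digital discs, and then invoke the classical fact that the Minkowski sum of two convex polygons in the plane has directed edges equal to the merge-by-angle of the edges of the two summands. This turns the theorem into a statement about ordering polygon edges, which matches the statement about gradients exactly.

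First I would observe that by Definition~\ref{composition}, $\zeta^u \circ \zeta^v = \{a+b : a \in \zeta^u,\ b \in \zeta^v\}$ is precisely the Minkowski sum of the two point sets. Each digital disc is a convex polygon (its first-octant boundary being a word in $\{0,1\}^*$ by Lemma~\ref{chainStruc}, with the other octants obtained by the symmetry mappings already described), and by Definition~\ref{lineSeg} its first-octant boundary is partitioned into line segments $l^u_1,\ldots,l^u_t$ whose gradients $G(l^u_i)$ are strictly increasing. Each line segment therefore corresponds to a single directed edge of the convex polygon, whose slope equals $G(l^u_i)$, and similarly for $v$. Thus the edge sequence of the polygon in the first octant, ordered by direction, is exactly the sequence of line segments in order of increasing gradient.

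Next I would apply the Minkowski-sum edge merge property: for convex polygons $P$ and $Q$, the boundary of $P+Q$ is obtained by merging the directed edges of $P$ and $Q$ in cyclic order of their direction vectors (translating each edge so that they concatenate in the merged order). Restricted to the first octant, where direction is determined by gradient and all gradients are non-negative, this merge is exactly a merge-sort of the $l^u_i$ and the $l^v_j$ by increasing gradient. Reading off the resulting first-octant boundary as a chain code yields the chain code asserted by the theorem. As a cross-check, I would verify that this agrees with the maxima computed by Lemma~\ref{direct_composition}: each prefix-sum $|w_k|$ equals the $x$-extent of $\zeta^u + \zeta^v$ at the corresponding height, which is exactly the sum of the $x$-extents of the prefixes of $\zeta^u$ and $\zeta^v$ up to matching heights, and Lemma~\ref{lineSegOrd} guarantees that the internal ordering of chain code segments inside each merged line segment matches the canonical form of Theorem~\ref{lineSegFor}.

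The hard part will be handling gradient ties and the mixed-form line segments $(10^n)(10^{n+1})^*$ and $(10^n)^*(10^{n+1})$ permitted by Theorem~\ref{lineSegFor}. When $G(l^u_i)=G(l^v_j)$ the two segments must be concatenated rather than interleaved, and I must check that the concatenation still satisfies the structural constraint ruled out by Lemma~\ref{lineLength}, i.e.\ that no monotone chain $10^n 10^{n+1} 10^{n+2}$ ever appears at the junction. This amounts to a small case analysis on the allowed forms from Theorem~\ref{lineSegFor}: a segment ending in $10^{n+1}$ glued to one starting with $10^{n}$ can be reordered using commutativity of concatenation on equal-gradient pieces to respect Lemma~\ref{lineSegOrd}, and no configuration forbidden by Lemma~\ref{lineLength} arises because the gradients on either side of the join differ strictly from the next merged gradient. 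Once this local consistency is verified, the Minkowski merge argument delivers the theorem.
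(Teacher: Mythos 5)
Your route is genuinely different from the paper's: the paper explicitly declines the Minkowski-sum argument (``one of the ways to prove the above statement is to use a similar result about the Minkowski sum\dots Here, a purely combinatorial proof is given'') and instead proceeds by induction on the number of line segments already merged, using the explicit maxima of Lemma~\ref{direct_composition} together with Lemma~\ref{lineSegOrd} and Theorem~\ref{lineSegFor} to show that appending the next line segment in gradient order realizes the maximum at every level. Your plan has the right skeleton, and the continuous edge-merge fact is the correct geometric intuition; the paper itself endorses this as a viable starting point, provided one ``then prove[s] that it holds for the ordering of the segments of the digital circles in $\Zed^2$.''

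That proviso is where your proposal is thin. The step you call a ``cross-check'' --- that each $|w_k|$ of Lemma~\ref{direct_composition} is attained by the gradient-merged prefix --- is the entire content of the theorem at the discrete level, and it does not follow formally from the edge-merge property of the convex hulls. The maxima in Lemma~\ref{direct_composition} range over prefix sums of the chain codes, and these prefix sums are not concave: inside a line segment of the form $(10^{n})(10^{n+1})^{*}$ the increments are $n{+}1, n{+}2, n{+}2,\ldots$, which increase, so the greedy optimality that a sup-convolution of concave functions would give you is not automatic. This is exactly what the paper's ``shift'' computation handles, using Theorem~\ref{lineSegFor} to bound how far an individual chain-code segment can deviate from the average slope of its line segment and Lemma~\ref{lineSegOrd} to order the mixed forms. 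Relatedly, for convex lattice polygons $(P\cap\Zed^2)+(Q\cap\Zed^2)$ need not a priori coincide with $(P+Q)\cap\Zed^2$, so identifying the boundary of the discrete composition with the digitization of the real Minkowski sum also requires justification. By contrast, the tie-handling and junction analysis you single out as ``the hard part'' is comparatively routine (the paper passes over it with ``larger or equal''). In short: a legitimate and different strategy, but the weight sits on the wrong step, and completing it would force you to reproduce essentially the paper's exchange argument anyway.
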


\begin{proof}
One of the ways to prove the above statement is to use a similar result about the Minkowski sum \cite{schneider1993convex} of convex polygons in $\mathbb{R}^2$ and then to prove that it holds for the ordering of the segments of the digital circles in $\mathbb{Z}^2$. Here, a purely combinatorial proof is given in terms of chain codes, chain code segments and line segments. The above statement is proved by induction. The fact that the base case for the composition of the first two line segments holds can be seen by directly checking the expression from Lemma~\ref{direct_composition}. 

Assume that the statement of the lemma holds and the first $z$ line segments of $u \circ v$ were composed from a set $LS=\{l^u_1, l^u_2, \ldots , l^u_i, l^v_1, l^v_2, \ldots , l^v_j \}$ and contains $x$ chain code segments. Let us also denote a set with other line segments from $u$ and $v$ as $\bar{LS}$. Without loss of generality suppose that the last line segment that has been added is $ l^v_j$, so $G(l^v_j) \geq G(l')$, where $l' \in LS$ and $G(l^v_j) \leq G(l'')$, where $l'' \in \bar{LS}$. By adding the next line segment $l'''$ which will have a gradient larger or equal than all other line segments in $LS$ and smaller or equal than gradients of $\bar{LS}$ it must be shown that the extended chain code of $u \circ v$ is still correct, i.e. it satisfies to Lemma~\ref{direct_composition}. 

First of all note that adding a new part of the chain code would not change the previous $x$ layers. If a current maximum is in the form $$|u_p| + |u_{p-1}|+ ... + |u_0|+ |v_0|+|v_1|+ ... + |v_q|$$ then the next $\#_1(l''')$ sums will be extended by chain codes from $l'''$. If $l'''$ is a line segment in $u$ the gradient of $G(l^v_j)$ is greater then $G(l''')$ and, taking into account Lemma~\ref{lineSegOrd} and Theorem~\ref{lineSegFor}, it can be seen that $$|u_{p+1}| +|u_p| + |u_{p-1}|+ ... + |u_0|+ |v_0|+|v_1|+ ... + |v_q|$$ is a maximum within the following set $$\{|u_{p+1+shift}| + ... +|u_p| + |u_{p-1}|+ ... + |u_0|+ |v_0|+|v_1|+ ... + |v_{q-shift}|, shift \in \Nat \}$$ since the shift will represent removal of a larger value from the $v$ component and appending the smaller value from the $u$ component. 

Repeating the procedure and extending the sum by one value it can be seen that, again, $$|u_{p+2}|+|u_{p+1}| + |u_p| + |u_{p-1}|+ ... + |u_0|+ |v_0|+|v_1|+ ... + |v_q|$$ is a maximum within a set $$\{|u_{p+2+shift}| + ... +|u_p| + |u_{p-1}|+ ... + |u_0|+ |v_0|+|v_1|+ ... + |v_{q-shift}|, shift \in \Nat \}$$ following the same reasoning. Similar arguments can be applied for the case were $l''' \in v$.
\end{proof}

\begin{example}
Let us illustrate the composition of two chain codes corresponding to digital circles with squared radii 45 and 9 in the first octant. The chain code of the first octant is $\zeta^{45}= 0001$ and for $\zeta^{9} = 10$ under composition this yields $\zeta^{45}\circ \zeta^{9}=000101$.
\end{example}

From the preceding notions it is now possible to describe a linear time algorithm, which vastly out performs the exhaustive search of all combinations that has previously been given, which is able to form the composition of the chain codes of any two discrete circles. 
%
%
As Lemma~\ref{lineSegOrd} shows that all of the line segments will be in a non-increasing order it is possible to construct such line segments by counting the number of $0's$ after each delimiting $1$ combining those that increase in size with the preceding smaller chain code segment. Having found all line segments for both chain codes it is a simple case of, starting with the chain code with the line segment with the largest gradient, adding that element to a new chain code which is the composition of the initial two. Continue choosing the line segment with the largest gradient from either of the initial chain codes until there are no elements left in either of the original circles. The result of the algorithm is the composition of the two circles.
The following proposition gives a proof of the algorithm and the notion that it is linear time computable. 

\begin{proposition}
Two discrete circles chain codes can be composed into a single chain code in a linear time.
\end{proposition}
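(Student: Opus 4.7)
The plan is to prove linearity by showing that each of the three natural phases in the algorithm sketched just above, parsing, gradient-ordering, and assembly, runs in time linear in the combined length of the two input chain codes. The correctness of the output is already guaranteed by the Composition Theorem (Theorem~\ref{compoThe}), so the proposition reduces to a complexity claim about a standard merge procedure.

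First I would argue that both input chain codes $u$ and $v$ can be parsed into their constituent line segments in a single left-to-right pass. Using the characterisation in Theorem~\ref{lineSegFor}, every line segment is one of $(10^n)^*$, $(10^n)(10^{n+1})^*$, or $(10^n)^*(10^{n+1})$, so the parser only needs to track the current run-length of zeros between successive $1$'s; whenever the next run differs from what the current line segment's pattern allows, it closes off the current segment and starts a new one. This costs $O(|u|+|v|)$ time and produces sequences $l^u_1,\ldots,l^u_t$ and $l^v_1,\ldots,l^v_{t'}$ with strictly increasing gradients, as required by Theorem~\ref{compoThe}.

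Second, I would observe that by Lemma~\ref{lineSegOrd} together with Theorem~\ref{lineSegFor}, the gradient of each line segment is determined by a small constant amount of data (the underlying $n$ and whether the pattern is $(10^n)^*$, $(10^n)(10^{n+1})^*$, or $(10^n)^*(10^{n+1})$, together with the repetition counts). Hence two gradients can be compared in $O(1)$ by cross-multiplication, avoiding any floating-point issues. Given this, the two sorted sequences of line segments can be merged in the standard two-pointer fashion in time $O(t+t')=O(|u|+|v|)$, outputting the line segments of $u\circ v$ in gradient order. Finally, expanding each output line segment back into a chain code string over $\{0,1\}$ is $O(|u\circ v|)$, and $|u\circ v|\le |u|+|v|$ since, by construction, each contributing segment is emitted exactly once; so the assembly phase is also linear.

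The only delicate point, and the place I would spend the most care, is justifying the $O(1)$ gradient comparison uniformly across the three segment shapes of Theorem~\ref{lineSegFor}, and making sure that merging preserves the \emph{strictly} increasing gradient property needed for the output to itself be a valid chain code of a discrete disc. The strictness is immediate if all line segments in $u$ and $v$ have pairwise distinct gradients, but if a segment of $u$ and a segment of $v$ share the same gradient, they must be concatenated into a single output line segment rather than listed separately; handling this merge-with-coalescing step correctly, and verifying via Theorem~\ref{compoThe} that the coalesced segment still matches the form of Theorem~\ref{lineSegFor}, is the main obstacle, but it remains a constant-time decision per merge step and therefore does not disturb the overall linear bound.
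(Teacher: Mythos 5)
Your proposal is correct and follows essentially the same route as the paper's own proof: parse each chain code into its line segments in one pass by counting the zeros after each delimiting $1$, then merge the two gradient-ordered lists of line segments as justified by Theorem~\ref{compoThe} and Lemma~\ref{lineSegOrd}. You are more explicit than the paper about the $O(1)$ gradient comparison and the coalescing of equal-gradient segments, but these are refinements of the same argument rather than a different approach.
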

\begin{proof}
As Lemma~\ref{lineSegOrd} shows that all of the line segments will be in a non-increasing order it is possible to construct such line segments by counting the number of $0's$ after each delimiting $1$ combining those that increase in size with the preceding smaller line segment. Having found all line segments for both chain codes it is a simple case of starting with the chain code with the line segment with the largest gradient adding that to a new chain code which is the composition of the initial two.
\end{proof}

It is also now possible to state a more formal definition of `similar' such that it is possible to mathematically determine whether two objects are `similar', geometrically. The following algorithm is given as one application of such an algorithm and also hints at the notion of composing one discrete disc from other discrete discs which allows an insight in to the primality of discrete discs those that cannot be composed from any other set of discrete discs. 

\begin{theorem}
Given a finite broadcasting sequence of radii $R=(r_1,r_2,..., r_l)$ and
a convex polygon\index{convex polygon} $P$, it is decidable whether there are radii such that the chain coding of the composition of  is {\bf similar} to $P$.
\end{theorem}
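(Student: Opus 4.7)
The plan is to reduce this existential decision problem to a linear Diophantine feasibility question, which is decidable by integer linear programming. Since composition of chain codes is commutative, any composition built from radii drawn from $R=(r_1,\ldots,r_l)$ is determined by a vector of non-negative multiplicities $(c_1,\ldots,c_l)\in\mathbb{N}^l$ (with $c_i$ counting how many times radius $r_i$ is used). By the Composition Theorem, the line segments of the composed chain code are precisely the union, with multiplicities, of the line segments of each contributing $\zeta^{r_i}$, re-ordered by strictly increasing gradient.

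First I would preprocess the inputs: for each $r_i \in R$, compute the first-octant chain code using the explicit formula from the paper, then decompose it into line segments via Definition~\ref{lineSeg}, recording each distinct gradient together with its multiplicity. Likewise, trace the first-octant boundary of $P$ and decompose it into line segments. Let $\{L_1,\ldots,L_N\}$ be the (finite) set of line segments appearing either in $P$ or in some $\zeta^{r_i}$, and associate to each disc a multiplicity vector $M_i\in\mathbb{N}^N$ and to $P$ a vector $M_P\in\mathbb{N}^N$.

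For similarity, the composition must exhibit exactly the set of gradients of $P$, each appearing with multiplicity equal to a common positive integer multiple of its count in $P$; any extra line segment would introduce a foreign gradient and break similarity. Thus I would discard any radius $r_i$ whose support $\mathrm{supp}(M_i)$ is not contained in $\mathrm{supp}(M_P)$ and, among the surviving radii, ask whether there exist $c_1,\ldots,c_l\in\mathbb{N}$ and $k\in\mathbb{N}^+$ satisfying
\[
\sum_{i=1}^{l} c_i\, M_i \;=\; k\, M_P.
\]
This is an integer linear programming feasibility problem (equivalently, a Presburger-definable predicate on $(c_1,\ldots,c_l,k)$), and is therefore decidable; concretely one first tests whether $M_P$ lies in the rational cone generated by the surviving $M_i$ and then performs a finite Diophantine check on the denominators arising from that cone membership.

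The main obstacle, and the real content of the proof, is justifying the correspondence between geometric similarity of polygons and proportionality of line-segment multiplicity vectors. One direction follows from the Composition Theorem: if $\sum_i c_i M_i = k M_P$ then the composed chain code, listed in gradient order, is literally the first-octant chain code of $P$ with every line segment repeated $k$ times, which is precisely the chain coding of a polygon similar to $P$. The converse direction requires ruling out accidental merging of adjacent line segments during composition and confirming that the decomposition into line segments is uniquely determined by the polygon. Once this geometric-to-combinatorial equivalence is pinned down, invoking decidability of ILP closes the argument.
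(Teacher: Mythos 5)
Your proposal is correct and takes essentially the same approach as the paper: the paper's proof likewise encodes each disc and the target polygon $P$ as integer vectors of line-segment multiplicities indexed by gradient, and reduces the question to solving a linear Diophantine system $\sum_i c_i M_i = k M_P$ with an unknown scaling factor $k$. Your version merely fills in details the paper leaves implicit (the support-containment pruning and the justification that similarity corresponds to proportionality of multiplicity vectors).
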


\begin{proof}
The algorithm computes all line segments, and their corresponding gradients, for the chain codes of the set of digital disks with radii in, $\Real$, and the convex polygon, $P$. Each digital disk $r_i\in R$ can be represented as a vector with $k$ values, where $k$ is the cardinality of the set of gradients and each vector component corresponds to a particular gradient with an integer value standing for the number of line segments with this gradient in $R$. Finally we would need to solve a system of linear Diophantine equation over positive integers to check whether there is a set of factors for defined vectors that may match a vector for $P$ with another unknown factor.
\end{proof}

\section{Broadcasting Sequences and their Limitations}
Although the composition of circles gives us a large range of polygonal shapes,  an analysis of broadcasting sequences \index{broadcasting sequences}
shows that there certain limitations for such composition.
It's seen that by Lemma~\ref{lineLength} there are only three possible forms which the lines that comprise the discrete circle may take. Translating these in to their respective gradients gives the following three possible gradient forms 
$$G_1 = G((10^{n-1})^m) = \frac{m}{mn} = \frac{1}{n} $$ 
$$G_2 = G((10^{n-1})(10^{n})^m) = \frac{m+1}{n+m(n+1)}$$ 
$$G_3 = G((10^{n-1})^m(10^{n})) = \frac{m+1}{nm+n+1}\ .$$

The above gradients are reduced to a minimal form in order to elucidate the exclusivity of the gradients.

\begin{proposition}\label{posGrad}
It is only possible to express gradients in the first octant of a circle in a reduced form $\frac{1}{n}$, $\frac{a}{a(n+1)-1}$ or $\frac{a}{an+1}$ for $a,n\in \Nat$.
\end{proposition}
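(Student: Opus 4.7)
The plan is to take the three explicit gradient expressions $G_1, G_2, G_3$ derived just above the statement (which by Theorem~\ref{lineSegFor} and Lemma~\ref{lineLength} exhaust the possible line segment forms in the first octant) and show that each can be algebraically rewritten in one of the three advertised forms, and moreover that the rewritten fractions are already in lowest terms.

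First I would handle the three forms in turn. The case $G_1 = 1/n$ is immediate. For $G_2$, I would rearrange
\[
G_2 = \frac{m+1}{n + m(n+1)} = \frac{m+1}{(m+1)n + m} = \frac{m+1}{(m+1)(n+1) - 1},
\]
so setting $a = m+1 \in \Nat$ yields $a/(a(n+1)-1)$. For $G_3$, I would rearrange
\[
G_3 = \frac{m+1}{nm+n+1} = \frac{m+1}{(m+1)n + 1},
\]
and again set $a = m+1$ to obtain $a/(an+1)$. This covers every possible line segment by Theorem~\ref{lineSegFor}.

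Next I would verify that each of these fractions is reduced, which is essential if the statement is to claim ``reduced form.'' For $\frac{a}{a(n+1)-1}$, any common divisor $d$ of numerator and denominator satisfies $d \mid a(n+1) - (a(n+1)-1) = 1$, so $d=1$. For $\frac{a}{an+1}$, any common divisor $d$ satisfies $d \mid (an+1) - n\cdot a = 1$, so $d=1$. The form $\frac{1}{n}$ is trivially reduced. Combining the two steps, every gradient of a first-octant line segment lies in one of the three listed families in reduced form.

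There is no real obstacle here: the argument is purely a symbolic manipulation plus two one-line coprimality checks, and the heavy lifting — showing that only line segments of the three shapes $(10^n)^*$, $(10^n)(10^{n+1})^*$, $(10^n)^*(10^{n+1})$ can appear — has already been carried out in Lemma~\ref{lineLength} and Theorem~\ref{lineSegFor}. The only minor care is in bookkeeping the substitution $a = m+1$ and noting that $a \geq 1$, so that the parametrisation indeed ranges over $\Nat$ as claimed.
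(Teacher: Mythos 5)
Your proposal is correct and follows essentially the same route as the paper: rewrite $G_2$ and $G_3$ by substituting $a=m+1$ into the denominators to obtain $\frac{a}{a(n+1)-1}$ and $\frac{a}{an+1}$. Your explicit coprimality checks actually tighten a point the paper leaves implicit (it simply asserts the fraction $\frac{a}{b}$ with $a\perp b$ forces $m=a-1$, which tacitly assumes $\gcd(m+1,\,n+m(n+1))=1$), so your version is, if anything, slightly more complete.
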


\begin{proof}
As $G_1$ is already in a minimal form it is obvious that this can only express those gradients, $g$, of the form $\frac{1}{n}$. For $G_2$ and $G_3$ the following method is employed, $G_2=\frac{m+1}{n+m(n+1)}=\frac{a}{b}$, where $a\bot b$ from which it follows that $m=a-1$ such that through substitution of $m$ in to denominator, $\frac{a}{a(n+1)-1}$ is arrived at. Similarly for $G_3$ by substitution the equation $\frac{a}{an+1}$ arises.
\end{proof}

It now becomes more clear of what cannot be expressed and the limitations inherent in the hulls of the broadcasting sequences.

\begin{proposition}
Not all rational gradients, $0\leq g\leq 1$, are expressed by the lines that comprise the discrete circle in the first octant. 
\end{proposition}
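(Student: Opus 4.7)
The plan is to use Proposition~\ref{posGrad} directly: the three forms $\frac{1}{n}$, $\frac{a}{a(n+1)-1}$, and $\frac{a}{an+1}$ enumerate every gradient that can possibly appear on a discrete circle in the first octant, so it suffices to exhibit a rational in $[0,1]$ that matches none of them. First I would record that the last two forms are already in lowest terms, since $\gcd(a, a(n+1)-1) = \gcd(a,-1) = 1$ and $\gcd(a, an+1) = \gcd(a,1) = 1$. Consequently, if $\frac{a}{b}$ with $\gcd(a,b)=1$ and $a \geq 2$ is an expressible gradient, then necessarily
\[
b \equiv 1 \pmod{a} \quad \text{or} \quad b \equiv -1 \pmod{a}.
\]

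The next step is to find an irreducible fraction $\frac{a}{b}$ with $0 < \frac{a}{b} \leq 1$ whose denominator avoids both of these residue classes modulo $a$. For $a \in \{2,3,4\}$ every residue coprime to $a$ is already $\pm 1 \pmod a$, so these cases are vacuously excluded. The first value that offers a new residue class is $a = 5$, where $2$ and $3$ are coprime to $5$ but neither is congruent to $\pm 1 \pmod 5$. This gives the concrete counterexample $\frac{5}{7}$: one verifies directly that $7$ is not of the form $5n+1$ or $5(n+1)-1$ for any $n \in \Nat$, so no $a,n$ realise the fraction in any of the three allowed forms.

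I would then, for completeness, note that the argument produces an infinite family of forbidden gradients, for instance $\frac{5}{5k+2}$ for every $k \geq 1$, all of which lie in $(0,1)$ and are in lowest terms but violate the required congruence. No genuine obstacle arises here: the argument is essentially a modular arithmetic observation on the denominators permitted by Proposition~\ref{posGrad}, with the only mildly nontrivial step being the recognition that $a \leq 4$ is too small to produce a counterexample and that $a = 5$ is therefore the minimal case to consider.
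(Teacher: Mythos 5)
Your proof is correct and follows essentially the same route as the paper: both rely on the reduced forms from Proposition~\ref{posGrad} and exhibit a numerator-$5$ fraction whose denominator is not congruent to $\pm 1 \pmod 5$ (the paper checks $\frac{5}{8}$ where you check $\frac{5}{7}$). Your congruence framing is just a more systematic packaging of the same verification, with the minor bonus of producing an infinite family of inexpressible gradients.
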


\begin{proof}
A counter example is given as proof. It is impossible to express any such rational of the form $\frac{5}{8}$. It is clear that it is not possible for $G_1$ to express such a fraction. For $G_2$ and $G_3$ the following suffices. $G_2=\frac{a}{a(n+1)-1}$ where $a=5$ such that $\frac{5}{5n+4}$ where there is no such $n\in \Nat$ such that $G_2=\frac{5}{8}$. Similarly for $G_3=\frac{a}{an+1}$ where $a=5$ and $\frac{5}{5n+1}$ there is no such $n\in \Nat$ such that $G_3=\frac{5}{8}$
\end{proof}

It also becomes clear from the composition theorem (Theorem~\ref{compoThe}) itself that it is not possible to generate any further gradients or new line segments through composition and in turn successive broadcasts of radii may not generate any polygons with chain codes that include gradients not previously present.

\begin{corollary}
The set of line segments, and as such gradients, that compose any discrete circle are closed under composition.
\end{corollary}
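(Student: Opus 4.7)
The plan is to derive the corollary as an essentially immediate consequence of the Composition Theorem (Theorem~\ref{compoThe}). First I would fix notation: let $\mathcal{G}$ denote the set of gradients that appear as $G(l)$ for some line segment $l$ of the (first-octant) chain code of some discrete circle, and let $\mathcal{L}$ denote the corresponding set of line segment words (which, by Theorem~\ref{lineSegFor}, are exactly those of the form $(10^n)^*$, $(10^n)(10^{n+1})^*$, or $(10^n)^*(10^{n+1})$). The goal is to show that if $u$ and $v$ are chain codes of discrete circles, then every line segment occurring in $u \circ v$ belongs to $\mathcal{L}$, and therefore every gradient in $u \circ v$ belongs to $\mathcal{G}$.

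The key step is to apply the Composition Theorem directly. By Theorem~\ref{compoThe}, the chain code of $u \circ v$ is obtained by taking the multiset of line segments $\{l^u_1,\dots,l^u_t,l^v_1,\dots,l^v_{t'}\}$ of $u$ and $v$ and concatenating them in order of increasing gradient. In particular, every line segment present in the resulting chain code is one that was already present in $u$ or in $v$, and hence lies in $\mathcal{L}$; correspondingly every gradient present in $u\circ v$ lies in $\mathcal{G}$. This shows that no new gradients can ever be produced by a single composition, so $\mathcal{G}$ is closed under the binary operation $\circ$. Closure under finite broadcasting sequences $(r_1,\dots,r_l)$ then follows by an obvious induction on $l$, using commutativity and the composition theorem at each step.

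The only subtlety to address is what happens when two line segments $l^u_i$ and $l^v_j$ with the same gradient are placed adjacently by the sorting step, since a priori their concatenation might not itself be a single element of $\mathcal{L}$. However, this does not affect the corollary: the gradient produced is the common gradient of $l^u_i$ and $l^v_j$, which already lies in $\mathcal{G}$, so no new gradient arises. I would make this observation explicit in a single sentence, emphasising that the statement concerns the set of gradients (equivalently, the set of admissible line segment forms), not the specific multiplicities in which they appear. The main obstacle, then, is not really an obstacle: one simply has to be careful to distinguish between gaining no new \emph{gradients} (which is what the Composition Theorem immediately gives) and preserving the exact \emph{line segment word forms} (which requires only the trivial observation just mentioned).
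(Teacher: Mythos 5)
Your proposal is correct and follows exactly the route the paper intends: the corollary is stated as an immediate consequence of Theorem~\ref{compoThe}, since the composition merely reorders the existing line segments of $u$ and $v$ by gradient and so cannot introduce new ones. Your extra remark about adjacent equal-gradient segments is a sensible clarification but does not change the argument.
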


As noted here the shapes that are generatable are limited by their convexity as well as by the gradients of the lines that compose them. In the next section some of these limitations will be removed or relaxed using multiple broadcast sequences and an aggregation function with which to map all values to a single value.

\section{Reducing Restrictions Through Aggregation}

This section makes uses of the notion of aggregation to reduce the restrictions that are imposed by the continual composition, or simple construction of discrete discs. It can be shown that it is possible in certain cases, which shall be exposited both here and in further sections where it is employed to show the approximation of the astroid metric.

As all points may be labelled by their distance over some arbitrary modulo value an extension to the current work is proposed whereby two differing $r-neighbourhood$ sequences, $A$ and $A'$ are used to label the $\Zed^2$ lattice from the same point $p$. At any point, $p'$, there are now two labels such that $p'=(i,j)$ where by $k \equiv i \mod m$ for the sequence $A$ and $k' \equiv j \mod m$ for the sequence $A'$. Here two differing functions for the aggregation of values of $i$ and $j$ which define new shapes on the lattice and in turn new metrics. The new shapes defined by the lattice are not necessarily convex.

As all of the discrete disks which make up the labellings of the lattice are composed of discrete lines it is possible to analyse the effects of the combinations of disks by considering only the intersections of the lines that make up the disks. Consider a series of parallel lines expressed in the form $y_0=m_0x_0+c_0$ where $c_i\in \Zed$ is an arbitrary constant or offset, $m_i\in \Rat$ is any gradient permitted by the line segments of a discrete circle and $x,y\in \Zed$ are the usual Euclidean coordinates. These lines are such that each successive line is of a distance $w_i$ from the last which shall imitate the width between iterations of the discrete circles, for the first line segment this is equivalent to $\lfloor \sqrt{r^2}\rfloor$, and the discrete lines that they generate. All coordinates such that $m_0x_0+c_0+kw_0\leq y_0< m_0x_0+c_0+(k+1)w_0$ are labelled as $k_0$. A second set of parallel lines differing from the first are defined as $y_1=m_1x_1+c_1$ where all coordinates such that $m_1x_1+c_1+k_1w_1\leq y_1< m_1x_1+c_1+(k_1+1)w_1$ are labelled $k_1$. The intersection of these two areas is thus $(k,k')$. Increasing the offset of $k_0$ and $k_1$ to $k_0+1$ and $k_1+1$ naturally results in the labelling of the area of their intersection as $(k_0+1, k_1+1)$. The ordering of the tuple is not relevant to the functions that aggregate them. 

\tn{The first of the functions here have previously been studied in \cite{Geometric} with regards to geometric computations on the lattice. It is of particular interest in the Broadcasting Automata model where the notion of waves as observed in nature are used to control a large number of distributed automata. Being able to predict the resultant shapes that are formed by the transmission of waves is, naturally, largely advantageous in that it affords the ability to predict and manipulate the formations on the plane. Such formations can then be used for a variety of computational duties such as partitioning and geometric computation.

}

Two functions for aggregation will now be introduced with the relevant equations for resultant patterning of the lattice, where $m = 4$ and the values of $i,j \in \{0,1,2,3\}$.

\begin{definition}
The {\bf moir\'{e}} aggregation function is given in the table below.

\begin{minipage}[t]{0.5\linewidth}
\vspace{20pt}
{\it moir\'{e}(i,j)} \ \ \ \  =
\end{minipage}
\begin{minipage}[t]{0.5\linewidth}
\vspace{0pt}
\begin{tabular}{|l|l|l|l|l|}
  \hline
  $\oplus$ & 0 & 1 & 2 & 3  \\ \hline
  0 & a & b & c & b \\ \hline
  1 & b & a & b & c \\ \hline
  2 & c & b & a & b \\ \hline
  3 & b & c & b & a \\ 
  \hline
\end{tabular}
\end{minipage}
\end{definition}

\begin{definition}
The {\bf Anti-moir\'{e}} aggregation function can be expressed simply as addition over modulo 4 and is given in the table below.

\begin{minipage}[t]{0.5\linewidth}
\vspace{20pt}
{\it Anti-moir\'{e}(i,j)} \ \ \ \  =
\end{minipage}
\begin{minipage}[t]{0.5\linewidth}
\vspace{0pt}
\begin{tabular}{|l|l|l|l|l|}
  \hline
  $\oplus$ & 0 & 1 & 2 & 3 \\ \hline
  0 & a & b & c & d \\ \hline
  1 & b & c & d & a \\ \hline
  2 & c & d & a & b \\ \hline
  3 & d & a & b & c \\
  \hline
\end{tabular}
\end{minipage}

\end{definition}

\begin{proposition}
The gradients of the new lines formed by the moir\'{e} equation can be predicted using the equation $$\frac{w_o\cdot m_1- w_1\cdot m_0}{w_0-w_1}\ .$$ Variables refer to the, line gradient, $m_i$, and the width of the line, $w_i$.
\end{proposition}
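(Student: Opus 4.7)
The plan is to identify the ``new lines'' of the moir\'e pattern as the loci where the two labels coming from the two parallel-line families satisfy a fixed linear relation, and then read off the gradient of that locus by direct algebra. Inspection of the moir\'e table shows that the label $a$ arises exactly when $i=j$, the label $c$ when $|i-j|=2$ modulo $4$, and the label $b$ on the remaining cells. Hence the diagonal bands that form the visible new pattern correspond to following points at which $k_0 - k_1 = n$ for some fixed integer $n$ (taken modulo $4$), and their boundaries are precisely the lines whose gradient the proposition measures.

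First I would write the labels at $(x,y)$ explicitly as $k_0 = \lfloor (y - m_0 x - c_0)/w_0 \rfloor$ and $k_1 = \lfloor (y - m_1 x - c_1)/w_1 \rfloor$, which is immediate from the strip definition given earlier in the section. The boundary of the region where $k_0 - k_1 = n$ is then the affine curve
\[
\frac{y - m_0 x - c_0}{w_0} \;-\; \frac{y - m_1 x - c_1}{w_1} \;=\; n.
\]
Multiplying through by $w_0 w_1$ and collecting the coefficients of $y$ and $x$ gives
\[
(w_1 - w_0)\, y \;=\; (w_1 m_0 - w_0 m_1)\, x + \bigl(n w_0 w_1 + w_1 c_0 - w_0 c_1\bigr),
\]
so, provided $w_0 \neq w_1$, the slope of every such new line is
\[
\frac{w_1 m_0 - w_0 m_1}{w_1 - w_0} \;=\; \frac{w_0 m_1 - w_1 m_0}{w_0 - w_1},
\]
which is exactly the claimed formula. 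Varying $n$ alters only the constant term, so one obtains a uniformly spaced parallel family of new lines, consistent with the visible moir\'e fringes.

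The algebra itself is essentially a one-step elimination, so the main obstacle is conceptual rather than computational: one must justify that the \emph{discrete} fringes observed on $\Zed^2$ genuinely inherit this continuous slope, i.e.\ that the cell-by-cell staircase boundary between strips of differing label tracks the derived affine line with bounded deviation, so that speaking of a single ``gradient'' on the integer grid is legitimate. A secondary caveat is the degenerate case $w_0 = w_1$, in which the formula is undefined; there the two stripe widths coincide and the fringes either become parallel to one of the original families or overlap into broader bands, and that case would need to be noted as a separate exception rather than recovered from the formula.
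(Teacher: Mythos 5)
Your proof is correct and takes essentially the same approach as the paper: both identify the moir\'{e} fringes with the loci where the difference of the two strip indices $k_0-k_1$ is constant (since the table depends only on $i-j \bmod 4$) and extract the slope from the line equations $y=m_ix+c_i+k\,w_i$. The paper computes two consecutive like-labelled intersection points and the slope between them, whereas you read the slope directly off the level-set equation; the caveats you flag (the degenerate case $w_0=w_1$ and the discrete staircase deviation) are not treated in the paper's proof either.
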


\begin{proof}
The form of the moir\'{e} function is such that if the intersection of one area is labelled as $(k,k')$ then the next area that is labelled similarly will be of the form $(k+1,k'+1)$. Encoding this as the intersections of lines $y=m_0x+c_0+k\cdot w_0$ with $y=m_1x+c_1+k'\cdot w_1$ and $y=m_0x+c_0+(k+1)\cdot w_0$ with $y=m_1x+c_1+(k'+1)\cdot w_1$. All that is left is to find the gradient of the two intersections. Solving the first case, $x_0=\frac{c_1-c_0-k\cdot w_0-k'\cdot w_1}{m_0-m_1}$ and $y_0=m_0(\frac{c_1-c_0-k\cdot w_0-k'\cdot w_1}{m_0-m_1})+c_0+k\cdot w_0$ . The second, $x_1=\frac{c_1-c_0-(k+1)\cdot w_0-(k'+1)\cdot w_1}{m_0-m_1}$ and $y_1=m_ 0(\frac{c_1-c_0-(k+1)\cdot w_0-(k'+1)\cdot w_1}{m_0-m_1})+c_0+(k+1)\cdot w_0$. The gradient of the two points can be found, $\frac{y_1-y_0}{x_1-x_0}$, which results in the gradient of the labelling applied to the lattice.
\end{proof}

\begin{proposition}
The gradients of the new lines formed by the Anti-moir\'{e} equation can be predicted using the equation $$\frac{w_o\cdot m_1+ w_1\cdot m_0}{w_0+w_1}\ .$$ Variables refer to, line gradient, $m_i$, and the width of the line, $w_i$.
\end{proposition}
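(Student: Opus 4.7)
The plan is to mirror the structure of the proof given just above for the moir\'{e} proposition, with the key observation that the Anti-moir\'{e} rule behaves differently from the moir\'{e} rule in exactly one respect: the way a ``constant-label'' curve moves from one cell to the next. Since Anti-moir\'{e} is addition modulo $m$, two intersection areas with labels $(k,k')$ and $(k'',k''')$ receive the same colour precisely when $k+k' \equiv k''+k''' \pmod m$. Locally the next similarly labelled cell along the interference direction is therefore $(k+1,\,k'-1)$, whereas in the moir\'{e} case it was $(k+1,\,k'+1)$. This sign flip in the $k'$ increment is the only thing that has to change in the calculation.

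Following that template, I would first set up the two families of parallel strip boundaries $y = m_0 x + c_0 + k w_0$ and $y = m_1 x + c_1 + k' w_1$ and solve for the intersection point of $L_0(k) \cap L_1(k')$:
\begin{equation*}
x_0 \;=\; \frac{c_1-c_0-k\, w_0 + k'\, w_1}{m_0-m_1}, \qquad y_0 \;=\; m_0 x_0 + c_0 + k w_0 .
\end{equation*}
Then I would write down the analogous expressions for the intersection point of $L_0(k+1) \cap L_1(k'-1)$, giving a second point $(x_1,y_1)$. The only quantity that matters in the end is the displacement, and a direct subtraction yields
\begin{equation*}
x_1 - x_0 \;=\; \frac{-w_0 - w_1}{m_0 - m_1}, \qquad y_1 - y_0 \;=\; m_0 (x_1 - x_0) + w_0 ,
\end{equation*}
where the numerator of $x_1-x_0$ is $-w_0-w_1$ (rather than $-w_0+w_1$ as in the moir\'{e} case); this is precisely where the sign flip enters.

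Finally, I would compute the slope as $(y_1-y_0)/(x_1-x_0)$, substitute the two displacement formulas, and simplify:
\begin{equation*}
\frac{y_1-y_0}{x_1-x_0} \;=\; m_0 + \frac{w_0 (m_0-m_1)}{-(w_0+w_1)} \;=\; \frac{m_0(w_0+w_1) - w_0(m_0-m_1)}{w_0+w_1} \;=\; \frac{w_0 m_1 + w_1 m_0}{w_0+w_1},
\end{equation*}
which is the claimed gradient. The main obstacle is not really technical but conceptual: one must justify why $(k+1,k'-1)$ is the correct ``next'' cell of the same Anti-moir\'{e} colour (as opposed to other pairs summing to the same residue), and this requires a brief local argument that among all pairs $(k'',k''')$ with $k''+k''' \equiv k+k' \pmod m$, the one immediately adjacent in the strip arrangement is exactly $(k\pm 1, k'\mp 1)$, since moving by any other admissible pair traverses strips of a different label first. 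The rest is routine algebra of the kind already performed in the moir\'{e} proof.
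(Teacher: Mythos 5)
Your proof is correct and follows essentially the same route as the paper's: identify the two consecutive intersection points that receive the same Anti-moir\'{e} label --- here $(k+1,k'-1)$ in place of the moir\'{e} case's $(k+1,k'+1)$ --- and take the slope of the displacement between them. If anything, yours is the more complete version, since you justify why the sum $k+k'$ is the invariant determining the label and carry the algebra through to the stated formula $\frac{w_0 m_1 + w_1 m_0}{w_0+w_1}$, whereas the paper asserts the index shift without comment and stops before the final simplification (its displayed intersection coordinates also carry a sign slip on the $k'\cdot w_1$ term that your formula corrects).
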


\begin{proof}
The form of the moir\'{e} function is such that if the intersection of one area is labelled as $(k,k')$ then the next area that is labelled similarly will be of the form $(k+1,k'+1)$. Encoding this as the intersections of lines $y=m_0x+c_0+k\cdot w_0$ with $y=m_1x+c_1+(k'+1)\cdot w_1$ and $y=m_0x+c_0+(k+1)\cdot w_0$ with $y=m_1x+c_1+k'\cdot w_1$. All that is left is to find the gradient of the two intersections. Solving the first case, $x_0=\frac{c_1-c_0-k\cdot w_0-(k'+1)\cdot w_1}{m_0-m_1}$ and $y_0=m_0(\frac{c_1-c_0-k\cdot w_0-(k'+1)\cdot w_1}{m_0-m_1})+c_0+k\cdot w_0$ . The second, $x_1=\frac{c_1-c_0-(k+1)\cdot w_0-k'\cdot w_1}{m_0-m_1}$ and $y_1=m_0(\frac{c_1-c_0-(k+1)\cdot w_0-k'\cdot w_1}{m_0-m_1})+c_0+(k+1)\cdot w_0$. The gradient of the two points can be found, $\frac{y_1-y_0}{x_1-x_0}$, which results in the gradient of the labelling applied to the lattice.
\end{proof}

Composition of this form is given as example in Figure~\ref{aggregate}. Here the two lines are shown on the lattice correspond to the gradients of the two differing aggregation functions.
The following proposition is now noted.

\begin{figure}[htp]
\centering
\includegraphics[scale=0.50]{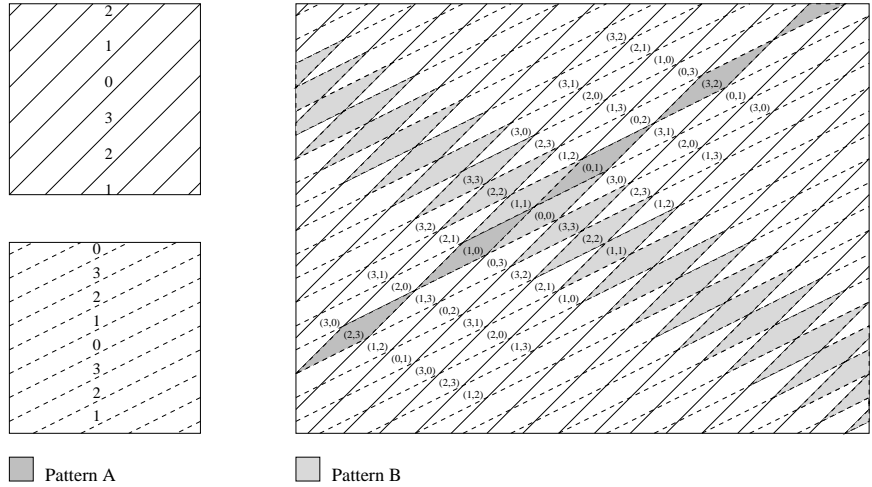}
\caption{The two forms of aggregation via the above functions. Here pattern A gives the line formed by the anti-moire function and the pattern b gives the line as formed by the moire function.}\label{aggregate}
\end{figure}

\begin{proposition}
For any two discrete lines on the $\Zed^2$ lattice it is possible to vary the observed gradients for resultant from aggregate functions by varying the width of the lines.
\end{proposition}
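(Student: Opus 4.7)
The plan is to leverage the two closed-form gradient expressions established in the immediately preceding propositions and to show that, with the underlying line gradients $m_0$ and $m_1$ held fixed, each aggregate gradient is a non-constant function of the widths $w_0, w_1$ whenever $m_0 \neq m_1$. First I would reparameterise by the ratio $t = w_1/w_0 > 0$, rewriting the moir\'e and anti-moir\'e formulas as
\[
g_M(t) \;=\; \frac{m_1 - t\,m_0}{1 - t}, \qquad g_A(t) \;=\; \frac{m_1 + t\,m_0}{1 + t}.
\]
Direct differentiation then gives
\[
g_M'(t) \;=\; \frac{m_1 - m_0}{(1-t)^2}, \qquad g_A'(t) \;=\; \frac{m_0 - m_1}{(1+t)^2},
\]
each of which is nonzero on its domain as soon as $m_0 \neq m_1$. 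Consequently both $g_M$ and $g_A$ are strictly monotone in $t$, so any two admissible pairs of widths with different ratios $w_1/w_0$ must yield distinct observed gradients.

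Next I would translate this analytic fact back into the discrete lattice setting. Admissible widths $w_i$ arise as spacings between successive parallel discrete lines produced by iterated broadcasts of discs, and can be realised as positive integers by appropriate choice of squared radii. Since every positive rational ratio $t$ is attainable by some integer pair $(w_0,w_1)$, the monotonicity above guarantees that different choices of widths genuinely yield different observed gradients. One can further describe the reachable range: as $t$ sweeps $(0,\infty)$, the anti-moir\'e gradient $g_A(t)$ traces out the open interval between $m_0$ and $m_1$, while the two branches $t \in (0,1)$ and $t \in (1,\infty)$ of $g_M$ together cover $\mathbb{R}\setminus\{m_0,m_1\}$, including values outside $[m_0,m_1]$ and, in particular, rationals that by Proposition~\ref{posGrad} cannot themselves occur as gradients on any discrete circle.

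The main obstacle I anticipate is clean handling of the degenerate cases. When $m_0 = m_1$ the two line families are parallel, no interference structure is produced, and the formulas collapse; for the moir\'e formula one must additionally exclude the pathological ratio $t = 1$, where the denominator vanishes. These degeneracies are intrinsic to the geometry rather than artefacts of the proof and simply need to be placed in the hypothesis of the statement. With those cases set aside, the strict monotonicity of $g_M$ and $g_A$ in $t$ established above immediately yields the proposition.
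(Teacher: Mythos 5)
Your proof is correct and follows essentially the same route as the paper, which simply observes that in the formulas $\frac{w_0 m_1 \pm w_1 m_0}{w_0 \pm w_1}$ an alteration of the widths directly changes the gradient; you make that observation rigorous by reducing to the ratio $t = w_1/w_0$, differentiating, and isolating the genuine degeneracies ($m_0 = m_1$, $t = 1$), which is if anything more careful than the paper's one-line argument (and correctly notes that scaling both widths equally changes nothing). One small slip in your side remark: the two branches of $g_M$ sweep out $(-\infty, m_0) \cup (m_1, +\infty)$ (taking $m_0 < m_1$), not all of $\mathbb{R}\setminus\{m_0, m_1\}$ — the open interval between $m_0$ and $m_1$ is reached by $g_A$, not $g_M$ — but this does not affect the proposition itself.
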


\begin{proof}
The proof is an observation of the equations for the gradients of the lines resultant from aggregation. Observing the results of two aggregating functions, $\frac{w_o\cdot m_1+ w_1\cdot m_0}{w_0+w_1}$ and  $\frac{w_o\cdot m_1- w_1\cdot m_0}{w_0-w_1}$, any alteration to the width of the line which represents results in a direct change in the gradient.
\end{proof}

It is now possible to formulate the following statement which shows the increase in expressivity that comes from using the composition of two broadcast sequences.

The number of lines formed by aggregation are not limited by the restrictions that are demonstrated in Proposition.~\ref{posGrad} where it is now possible to construct new lines through the varying of widths and indeed new polygons may now be formed with this technique such that they are non-convex.

\section{Formation of Polygons Through Aggregation.}

Having shown that it is possible to construct different gradients from line sections it is natural to now observe what happens when whole circles are intersected and the relationships that are formed by the aggregation functions that have been introduced. It is noted that a diagram for the formation of polygons in case of the moir\'{e} aggregation function is shown in Figure~\ref{moireComp}. Two digital discs, those with squared radii two and 25 such that the discs are $\zeta^2$ and $\zeta^{25}$, are composed generating, from the two convex polygons, and new, previously unreachable, polygon which is $non-convex$. 

\begin{figure}[htp]
      \includegraphics[scale=0.6]{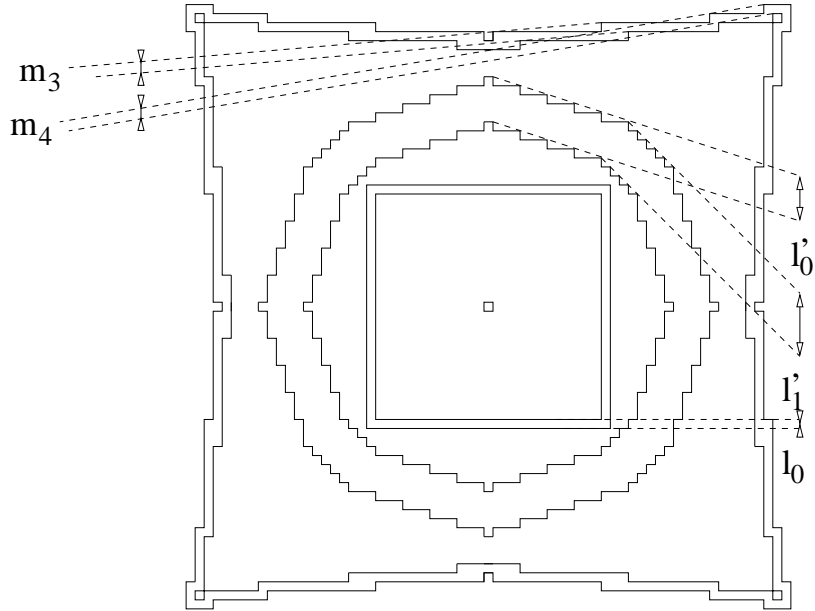}
      \caption{The above schematic depicts the broadcast of two discrete circles. The first, inner, circle (of squared radius two where $\zeta^2 = l_0$ with gradient $m_0$) and the second discrete circle (of squared radius $25$ where $\zeta^{25} = l'_0 l'_1$ with gradients $m_1$ and $m_2$ respectively) the outer construction shows the resulting moir\'{e} lines here labelled $m_3$ and $m_4$. Arrowed lines show line width measurements, the respective $w_i$, here, $w_0=1$, $w_1=5$, $w_2=7$, $w_3=\frac{5}{4}$ and $w_4=\frac{7}{6}$.}\label{moireComp}
\end{figure}

The following examples given in Figure~\ref{moantimo} also elucidate the differences between the two aggregation functions showing the non-convex polygon generated by the two digital discs that are represented by, $\zeta^2$ and $\zeta^9$. 

\begin{figure}[htp]
\centering
\includegraphics[scale=0.10]{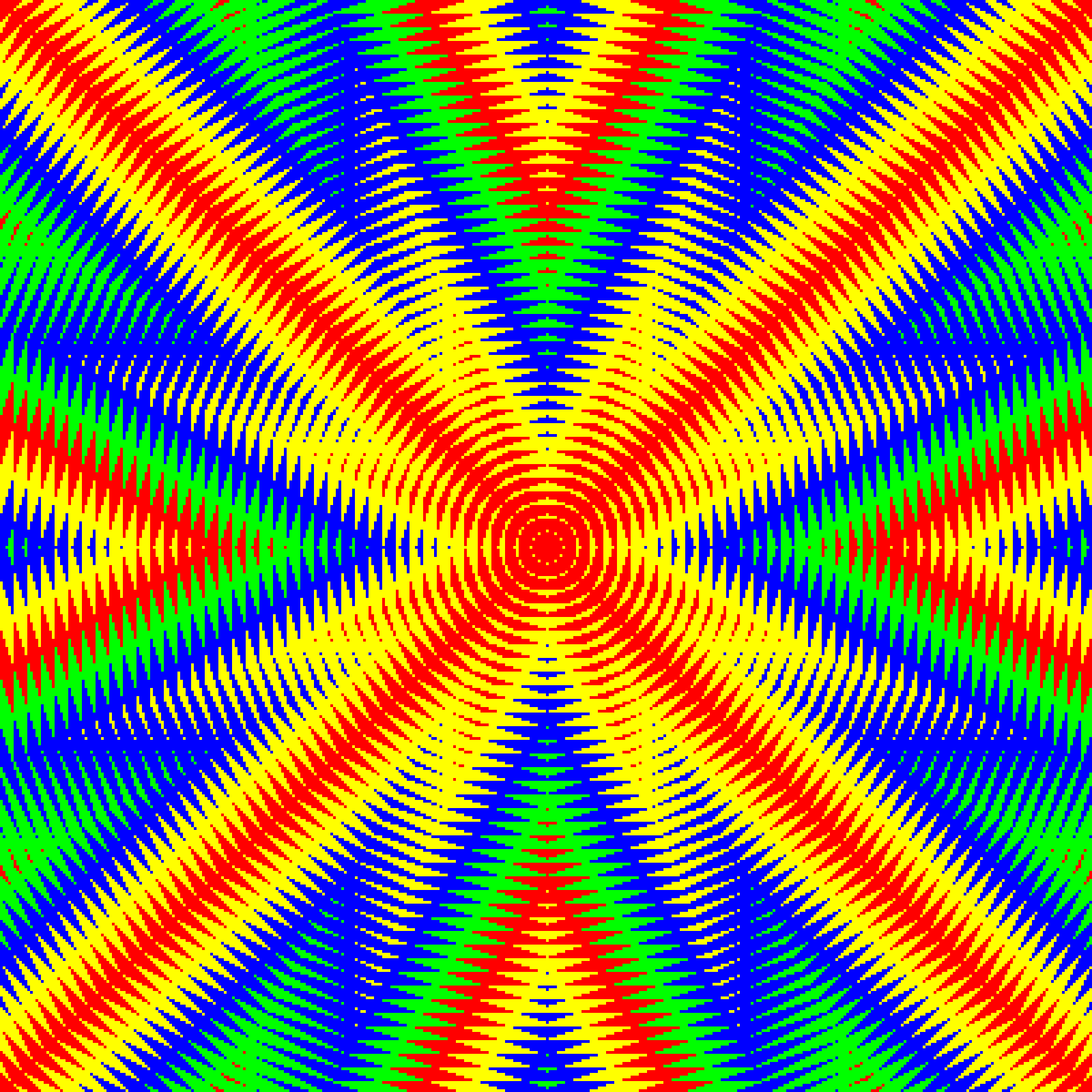}
\includegraphics[scale=0.10]{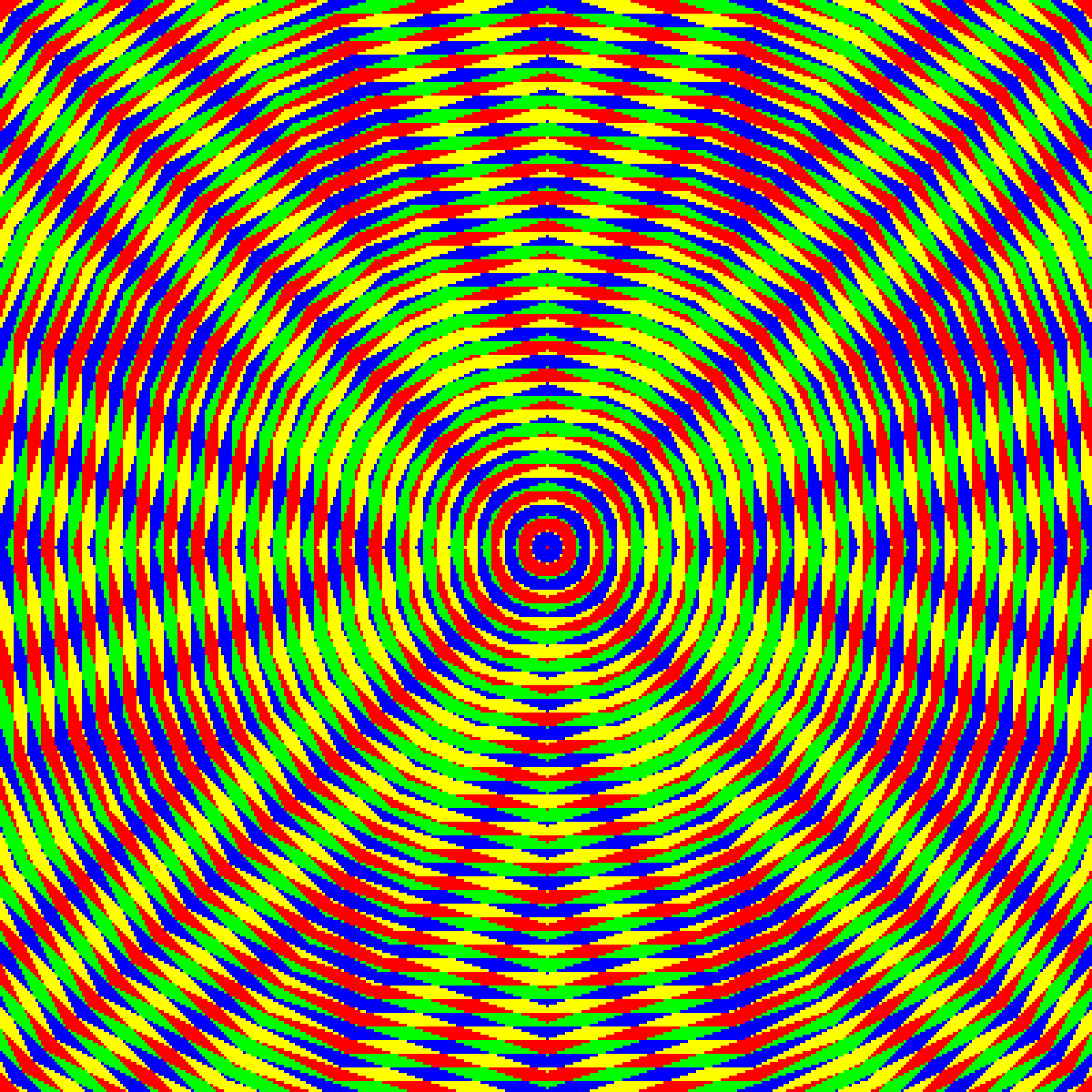}
\caption{Above gives an example of aggregation of two broadcast sequences, one of the discrete disc $ \zeta^{32}$ and the other $\zeta^{36}$, from a central point on the diagram and with the two aggregation functions (left) moir\'{e} and (right) Anti-moir\'{e}.}\label{moantimo}
\end{figure}

It is also possible to gain a better picture of the overall shapes produced by the anti moire equation by reducing the number of labels, and so the colours, further. This is done by a process of merging certain values or reducing the values over some modulo which in this example, Figure~\ref{antimocomb}, is two. The reduction is given by the following aggregation function:

\begin{minipage}[t]{0.5\linewidth}
\vspace{20pt}
{\it Anti-moir\'{e}-Mod2(i,j)} \ \ \ \ =
\end{minipage}
\begin{minipage}[t]{0.5\linewidth}
\vspace{0pt}
\begin{tabular}{|l|l|l|l|l|}
  \hline
  $\oplus$ & 0 & 1 & 2 & 3 \\ \hline
  0 & a & b & b & a \\ \hline
  1 & b & b & a & a \\ \hline
  2 & b & a & a & b \\ \hline
  3 & a & a & b & b \\
  \hline
\end{tabular}
\end{minipage}

\begin{figure}[htp]
\centering
\includegraphics[scale=0.2]{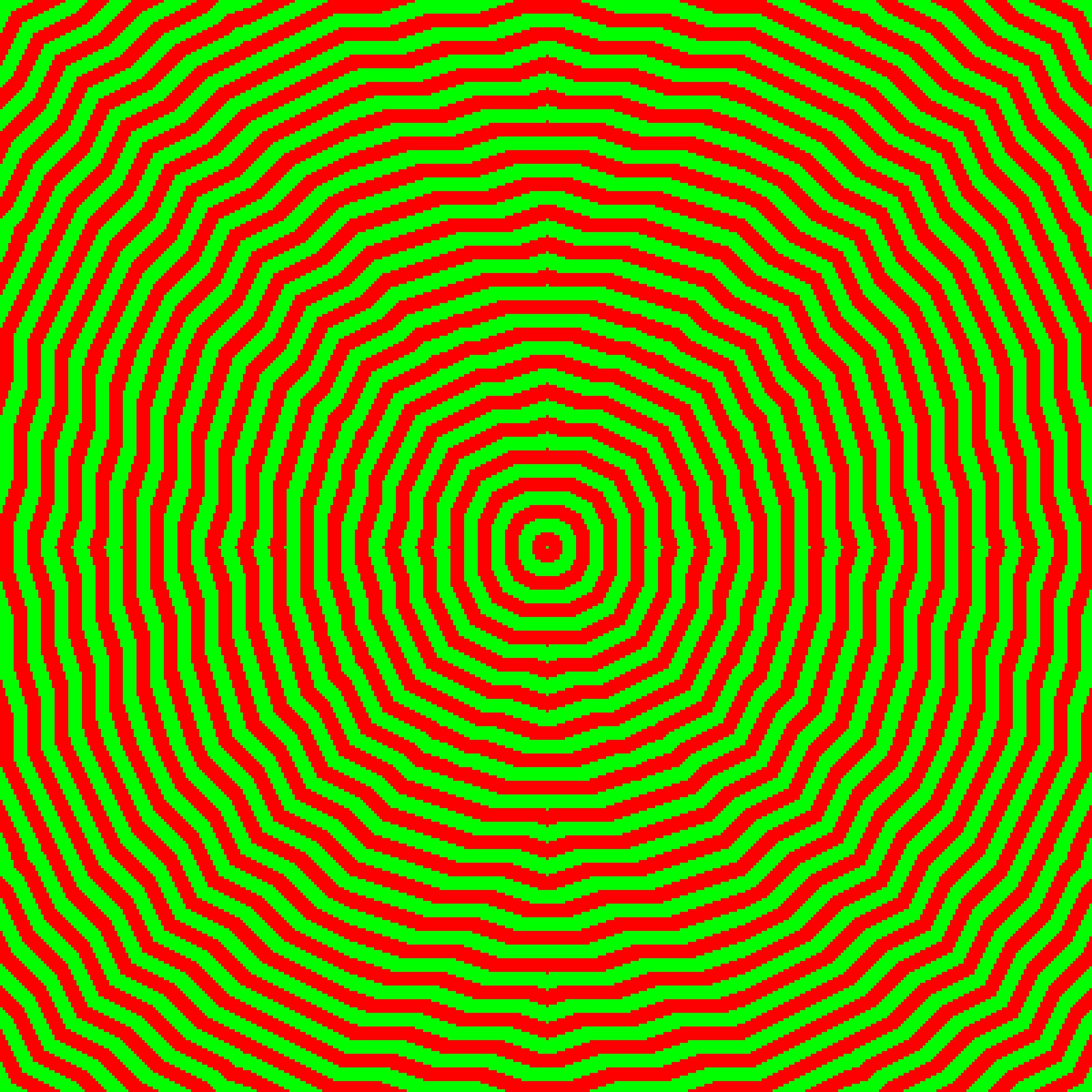}
\caption{The above figure depicts the anti moire pattern after the merging of two sets of two colours and initially generated by the discrete discs of squared radius $32$ and $36$ from the central point on the plane.}\label{antimocomb}
\end{figure}

\section{Approximating $L^p$ Metrics with Broadcasting Automata}\label{Lp_Metrics}

Previously, von Neumann and Moore neighbourhoods have been used to achieve an approximation of the Euclidean metric through some mixing of the neighbourhoods. This has been done in many different settings such as periodic, non-periodic combinations of the two neighbourhoods, regular and non-regular, i.e, hexagonal, triangular grids to which different sequences, i.e applying different definitions of neighbourhood, are applied, as well as a variety of methods for defining just how an approximation of Euclidean distance by neighbourhood sequences should be defined and measured, where most of these techniques discuss notions of digital circularity such as the isoperimetric ratio, perimeter comparisons, etc. In short this has been one of the main studies with regards to neighbourhood sequences. There is ultimately, as has been previously discussed, a large barrier to the extension of this body of work in the approximation of the more general $L^p$ metrics due to the impossibility of constructing any non-convex polygon from the composition of the two convex polygons which represent the Moore and von Neumann neighbourhoods. The astroid\index{astroid} is part of the `family' of $L^p$ metrics of which the Moore neighbourhood, $L^{\infty}$, the von Neumann neighbourhood, $L^1$, and the Euclidean metric, $L^2$ are the most well known. In general an $L^p$ space\index{$L^p$ metrics} is defined by the formula $\parallel x\parallel_p=(|x_1|^p+|x_2|^p+...+|x_n|^p)^{\frac{1}{p}}$ such that $\parallel x\parallel_p$ is the $p$-norm.

It is in this section that a new method for the generalisation of metric approximation\index{metric approximation} with broadcasting neighbourhoods shall be given and, using only two broadcasting radii, here given in terms of $r^2$, and the moir\'{e} aggregating function, explore the ability of this model to approximate the astroid and, as such, a new class of metrics outside of the reach of neighbourhood sequences. 

It has previously been seen that the methodology of combining two broadcasting sequences using an aggregating function can be used to extend the possible resulting polygons. In this section it shall be seen that this can be employed in the solution to a practical problem. The problem, in this case, is the approximation of $L^p$ metric, $L^{2/3}$, which forms an \emph{astroid} \cite{yates1974curves}. The astroid can be expressed by the equation, $x^{2/3}+y^{2/3}=r^{2/3}$, where $x$ and $y$ are those of the Cartesian plane and, as with the equation for the circle from which this equation is generalised, the $r$ is the 'radius' of the astroid.

A few preliminaries with regards to the astroid which are taken from \cite{yates1974curves} will be required to understand the methods proposed for comparing the approximation, via broadcasting sequence and aggregation, and the actual astroid. The astroid was discovered in Roemer in 1674 whilst searching for the best form of gear teeth. It is a hypercycloid of four cups and can be described by a point on a circle of radius $\frac{3}{4}\cdot a$ rolling on the inside of a fixed circle of radius $a$. The resultant shape has a perimeter, $L$, of $L=6a$ where $a$ is the radius of the outer fixed circle and is the maximal point reached by the astroid. The area, $A$, encapsulated is given by $A=\frac{3}{8}\pi a^2$. Finally, the point at which the $x$ and $y$ coordinates are equivalent on the perimeter of the astroid can be expressed as $\frac{r}{2}$. The equation is known to have applications in magnetism where the Stoner-Wohlfarth astroid curve separates regions with two minima of free energy density from those with only one energy minimum and is a geometric representation of the model of the same name \cite{Thiaville19985}.

\begin{figure}[htp]
\centering
\includegraphics[scale=0.32]{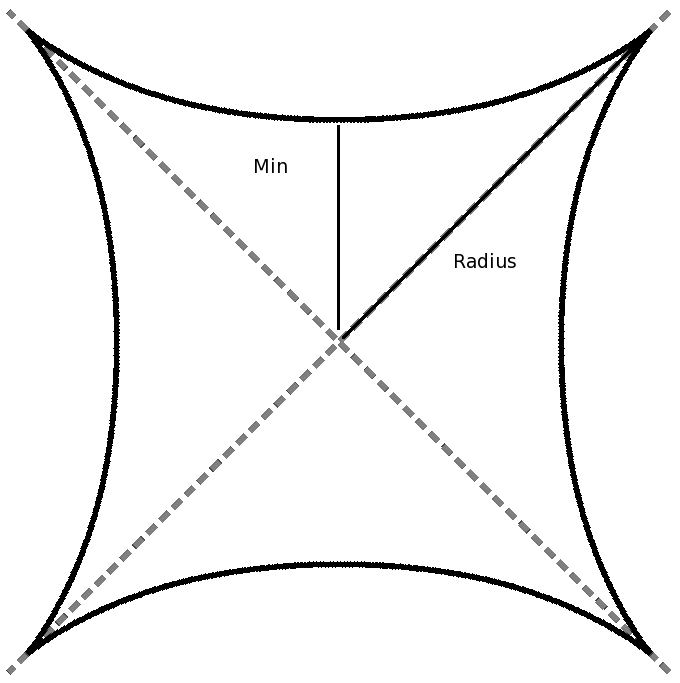}
\includegraphics[scale=0.4]{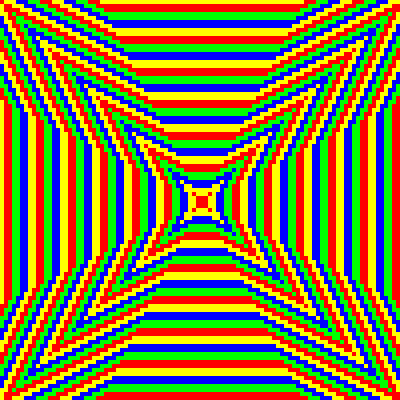}
\caption{The above figure depicts the astroid, here, rotated by $45^\circ$ (left) compared to its best, found, approximation with broadcasting sequences (right), an aggregation of the discrete discs of squared radius, $2$ and $5$. }\label{astroid}
\end{figure}

In \cite{springerlink:10.1007/117749388, Hajdu2004101} a number of methods for how well a particular neighbourhood sequence approximates the euclidean distance are given. The authors of \cite{springerlink:10.1007/117749388} consider any sequence, including those sequences which are non-periodic, of neighbourhoods with which to approximate the euclidean distance. In order to show how well, or poorly, the sequence approximates the euclidean distance it is compared to the euclidean circle through a variety of methods. One such method is through the use of the isoperimetric ratio or the noncompactness ratio. Here, the attempt is to measure $\frac{P^2}{A}$ where, $P$, is the perimeter, and, $A$, is the area. This measure is conjectured to be minimal for the circle where it is $4\pi$, however, this does not help when looking at non-convex shapes, such as those produced when approximating the an astroid, due to the measure being optimised for convex and symmetrical shapes. 

Other ways of approximating the euclidean circle are suggested such as a perimeter based approximation and area based approximation. With respects to are based approximation there are two techniques used. The first is that of the inscribed circle based approximation. This method attempts to find a sequence that generates the polygon, generated by the neighbourhood sequence, which is closest to the polygon having the given circle as the inscribed circle. The second of these methods is the covering circle based approximation such that polygon must be covered by the circle. More methods are given but rely on properties distinct to the circle and so are not discussed here. 

Further, in \cite{Hajdu2004101}, another measure of circularity is considered using three differing methods. They formulate three approximation problems whereby the aim is to find the neighbourhood sequence that minimises the error in each formulation. The problems may be informally described as the following: problem one requires finding the neighbourhood sequence that best minimises the size of the symmetric difference between some neighbourhood sequence at step $k$, given as $A_k$ and the Euclidean circle of radius $k$; the second problem attempts to find the sequence that best minimises the complement of the neighbourhood generated by $A_k$ with it's smallest inscribed circle; the third problem is a discretisation of the second problem where an $A_k$ must be found such that it minimises the complement with the largest discrete disc that can be inscribed within.

It is the second of these problems from \cite{Hajdu2004101} that shall be explored as a method for showing a best approximation of the astroid with combined broadcasting sequences. In this case the method is the most simplistic to calculate and also the least dependent on any of the properties that are only present in the circle. As such a more formal representation of the problem can now be posed, here, given as, $Area(H(f_k(A, B))\backslash G'_k\leq Area(H(f_{k'}(A, B'))\backslash G'_{k'})$, where, $f_k(A, B)$ is the $k$th polygon generated by the aggregation of the broadcasting sequences, $A$, which here will be the discrete disc, $r^2=2$, and the broadcasting sequence, $B,B'$, here a variable which is to be optimised to find the best approximation. There is a simple change to $G_k$, originally used in \cite{Hajdu2004101} to represent the circle of radius $k$, to convert it to $G'_k$ in that $G'_k=\{q\in \Real^2:L^{\frac{2}{3}}(0,q)\leq k\}$ as simple conversion of the metric from that of the euclidean distance, $L^2$, to the astroid distance, $L^{\frac{2}{3}}$.

The complexity of calculating these compositions for the purposes of optimisation means that only experimental data shall be given here as a proof of validity of the approximation of the concave metrics, which the astroid represents. As the astroids require a rotation by $45^\circ$ in order to match the polygon that is generated by the aggregation, $f(A, B)$, as defined before. In matching the point that is at the largest euclidean distance from the origin, which for simplicity, and without loss of generality, is considered the initial point from which all broadcasting occurs. This point is matched to the same point on some polygon on some composition, $f(A, B)$, and the complement of the areas compared. The following table is produced with this method.

The function of aggregation must also be defined. Here the choice is moir\'{e} aggregation without the modulo restriction, although, such a restriction is retained in the images for simplicity. Such a function can now be defined as, $f(A_i,B_j) = |i-j|$ for the $i$th and $j$th iteration of the sequence $A$ and $B$ respectively. For the function $f_k(A,B)$ where there exists some $A_i$ and $B_j$ such that $|i-j|=k$.

The min point for the $L^{\frac{2}{3}}$ is calculated by $\frac{r}{2}$ where $r$ is the radius of the astroid, the $r$ in $x^{\frac{2}{3}}+y^{\frac{2}{3}}=r^{\frac{2}{3}}$. The $B$ is the second broadcasting sequence, here, given as the $r^2$ of the disc. Images for each of the resultant, aggregated images are given in Table~\ref{astpic1} and Table~\ref{astpic2}.

\begin{table}[position specifier]
\centering
\begin{tabular}{| c | c | c | c | c | c | c | c | c |} \hline
  \multicolumn{3}{|c|}{Astroid - $L^\frac{2}{3}$} & \multicolumn{3}{|c|}{moir\'{e} - $f_k(A,B)$} & \multicolumn{3}{|c|}{}\\ \hline
  Radius & Min & Area & Radius & Min & Area & B & k & Complement \\ \hline
  17 & 8.5 & 340 & 17 & 11 & 461 & 5 & 5 & 121  \\ \hline
  17 & 8.5 & 340 & 17 & 13 & 753 & 9 & 8 & 413  \\ \hline
  17 & 8.5 & 340 & 17 & 13 & 873 & 10 & 9 & 533 \\ \hline
  18 & 9 & 382 & 18 & 16 & 1121 & 13 & 11 & 739 \\ \hline
  18 & 9 & 382 & 18 & 14 & 1033 & 16 & 11 & 651 \\ \hline
  18 & 9 & 382 & 18 & 14 & 1001 & 17 & 11 & 619 \\ \hline
  17 & 8.5 & 340 & 17 & 15 & 1041 & 37 & 13 & 701 \\ \hline
  17 & 8.5 & 340 & 17 & 16 & 1141 & 45 & 14 & 801 \\ \hline
  17 & 8.5 & 340 & 17 & 16 & 1093 & 61 & 14 & 753 \\ \hline
  17 & 8.5 & 340 & 17 & 16 & 1181 & 82 & 15 & 841 \\ \hline
\end{tabular}
\caption{Illustrating the experiments done with regards to the approximation of the astroid.}\label{compAst}
\end{table}
\begin{table}[h!]
\centering
\begin{tabular}{c c}

\includegraphics[scale=0.35]{r5.png} 
&

\includegraphics[scale=0.35]{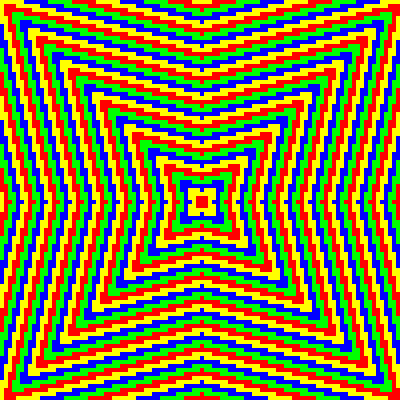}
\\

\includegraphics[scale=0.35]{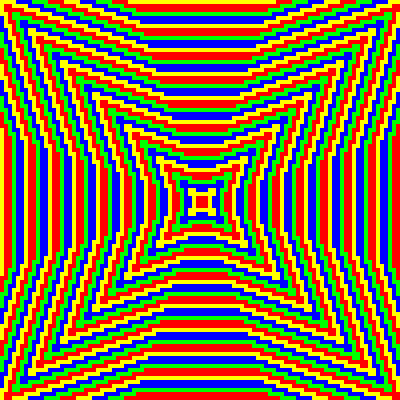}
&

\includegraphics[scale=0.35]{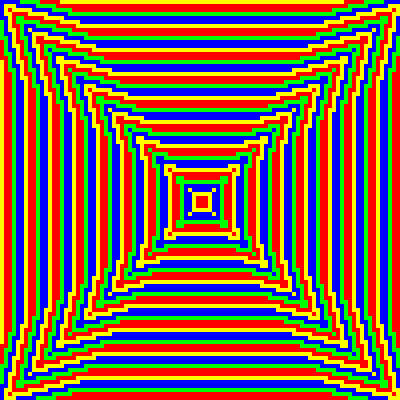}
\\
\end{tabular}
\caption{The above diagrams show the patterns generated by the aggregation of the two discs, where one is the disc of squared radius $2$ and the other is varied, in these diagrams (from left to right and line by line) there are discs of squared radius, $5$, $9$, $10$ and $13$.}\label{astpic1}
\end{table}

\begin{table}[position specifier]
\centering
\begin{tabular}{c c}
\includegraphics[scale=0.35]{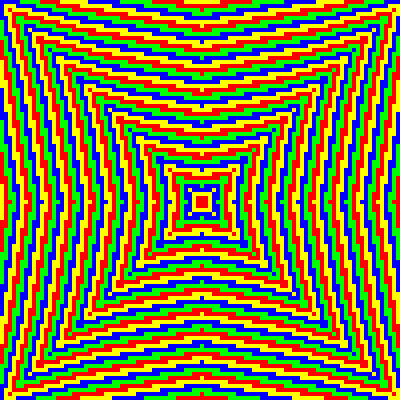}
&

\includegraphics[scale=0.35]{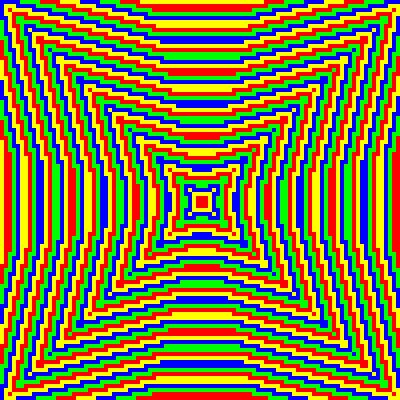}
\\

\includegraphics[scale=0.35]{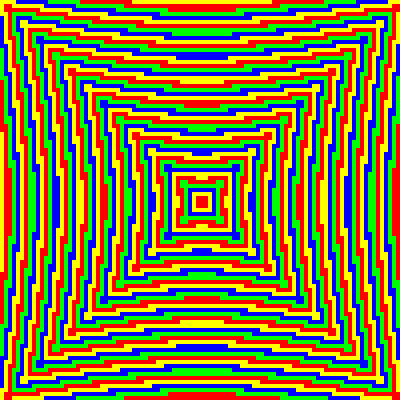}
&

\includegraphics[scale=0.35]{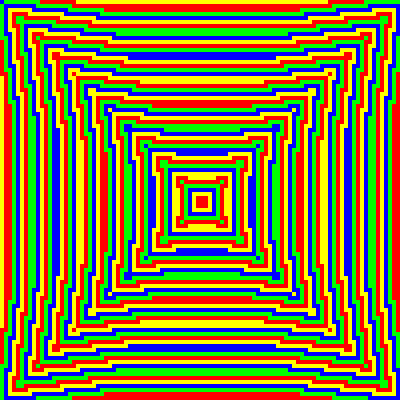}
\\

\includegraphics[scale=0.35]{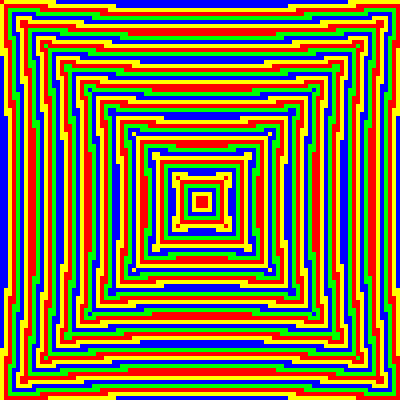}
&

\includegraphics[scale=0.35]{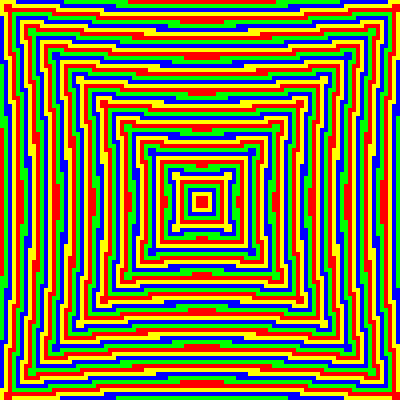}
\\

\end{tabular}

\caption{The above diagrams show the patterns generated by the aggregation of the two discs, where one is the disc of squared radius $2$ and the other is varied, in these diagrams (from left to right and line by line) there are discs of squared radius, $16$, $17$, $37$, $45$, $61$ and $82$.}\label{astpic2}

\end{table}


Tables~\ref{compAst} and ~\ref{compAst2} give a series of comparisons for $Area(H(f_k(A, B)))\backslash G'_k$. From this table it is possible to observe that the best approximation, from those constructed, though there is a trend towards worsening approximations as $B$ increases, that the simplest composition yields the best results. In this case this value for $B$ is the disc generated by the squared radius $r^2$ of $5$, this disc being constructed by the next smallest squared radius that constructs a new discrete disc. The following table now looks, again, experimentally, at how the approximation changes as $k$ increases. The table notes that the approximation weakens as it increases perhaps indicating a divergence between the astroid and the polygon generated by $f_K(A,B)$.

\begin{table}[position specifier]
\centering
\begin{tabular}{| c | c | c | c | c | c | c | c | c |} \hline
  \multicolumn{3}{|c|}{Astroid - $L^\frac{2}{3}$} & \multicolumn{3}{|c|}{moir\'{e} - $f_k(A,B)$} & \multicolumn{3}{|c|}{}\\ \hline
  Radius & Min & Area & Radius & Min & Area & B & k & Complement \\ \hline
  2  & 1    & 5   & 2  & 1     & 13  & 5 & 1      & 8 \\ \hline
  5  & 2.5  & 30  & 5  & 3     & 65  & 5 & 2      & 35 \\ \hline
  8  & 4    & 75  & 8 & 5      & 157 & 5 & 3      & 82 \\ \hline
  11 & 5.59 & 143 & 11 & 7     & 289   & 5  & 4   & 146 \\ \hline
  18 & 9    & 382   & 18 & 9   & 461   & 5  & 5   & 79 \\ \hline
  21 & 10.5 & 520   & 21 & 11  & 673  & 5  & 6    & 153 \\ \hline
  24 & 12   & 679   & 24 & 13  & 925& 5 & 7       & 246 \\ \hline
  27 & 13.5 & 859   & 27 & 15  & 1217& 5 & 8      & 358 \\ \hline
  30 & 15   & 1060   & 30 & 17 & 1549& 5 & 9      & 489 \\ \hline
  33 & 16.5 & 1283   & 33 & 19 & 1921& 5 & 10     & 628 \\ \hline
\end{tabular}
\caption{Illustrating the experiments done with regards to the approximation of the astroid where both $A$ and $B$ are fixed.}\label{compAst2}
\end{table}

\section{Pattern Formations and Periodic Structures in $\Zed^2$}

It is natural after observing the variety of effects that are the result of the application of an aggregation function to look at functions which are themselves shapes of some form. Here, functions of this form and their resultant patterns, imposed on the grid according to their application, are given. Such functions are only restricted by their symmetry, a result of the unordered nature of the tuples to be aggregated.

The first function, depicted in Figure~\ref{discreteDisc882636} (Left) takes the form of a discrete disc itself, in this case one which is also represented by the discrete disc of squared radius five. The functions here have also been increased in size and the size of the modulo for the labels has been increased. The use of discrete discs of squared radius 8 is important here as it constructs, in some sections of the lattice, a perfect reproduction of the shape given in the aggregating function. The following table describes the aggregating function and the details of the image.

\begin{figure}[htp]
\centering
\includegraphics[scale=0.12]{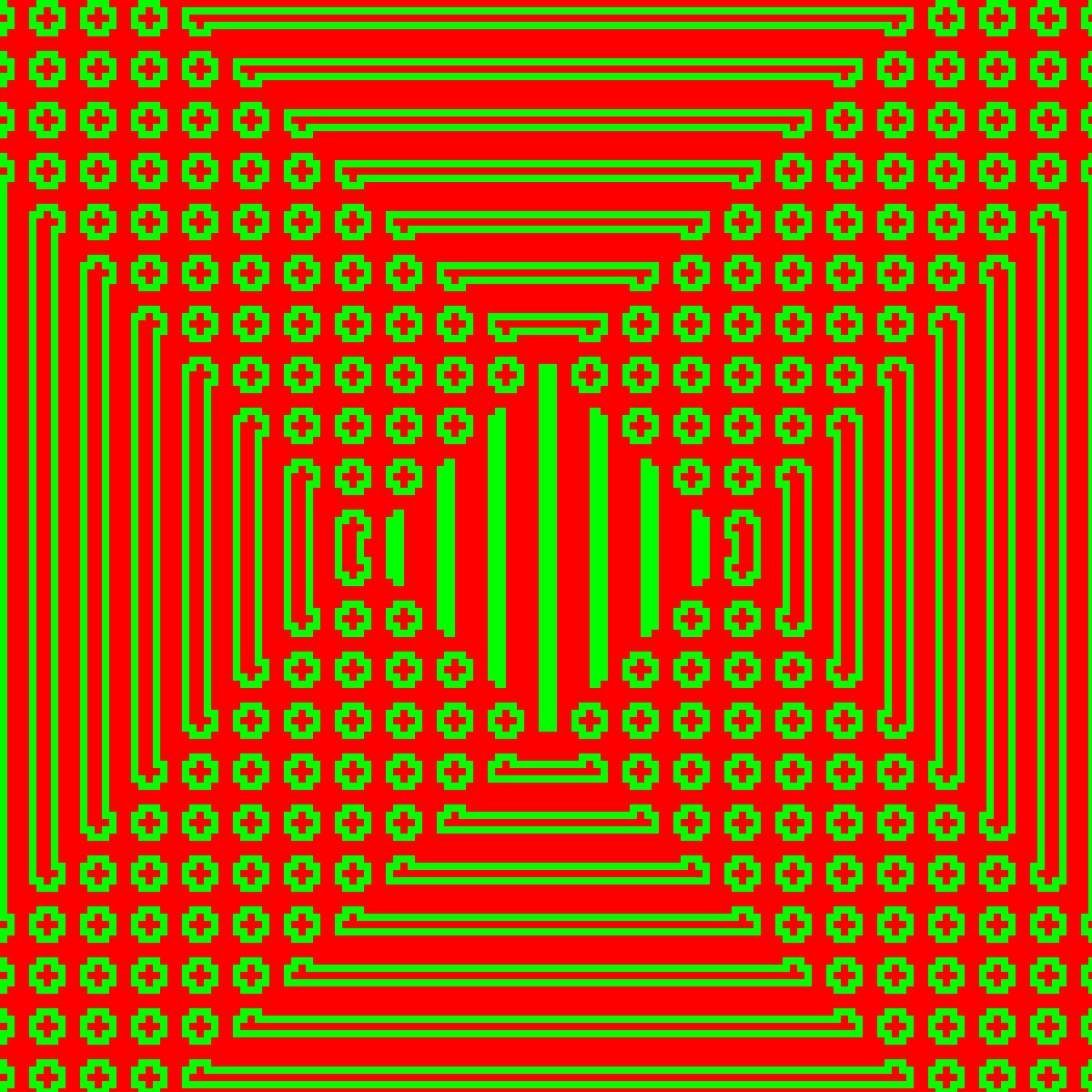}
\includegraphics[scale=0.12]{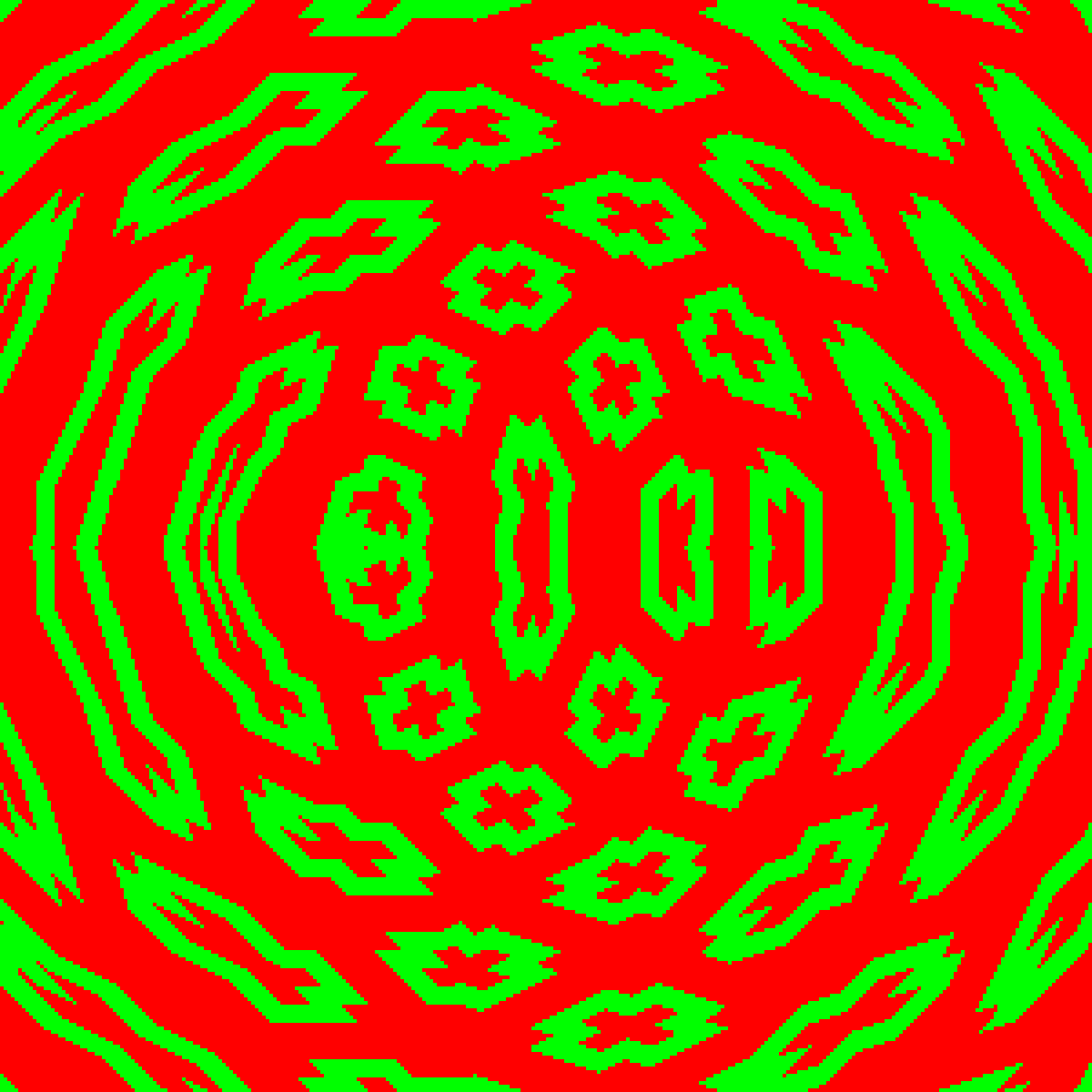}
\caption{(Left) Patterns generated by the aggregating function representing the discrete disc of squared radius five and the labelling of the lattice given by two broadcasting sequences of squared radius eight. (Right) Patterns generated by the aggregating function representing the discrete disc of squared radius five and the labelling of the lattice given by two broadcasting sequences, one of  squared radius 26 and the other of squared radius 36. }\label{discreteDisc882636}
\end{figure}

\begin{minipage}[t]{0.5\linewidth}
\vspace{0pt}
\begin{description}
\item[Array Size:] 300
\item[Centre 1:] (100,150)
\item[Centre 2:] (200,150)
\item[Radius 1:] 8
\item[Radius 2:] 8
\item[Modular Labelling:] (0,1,2,3,4,5,6)
\item[Aggregation Function:] Shown right.
\item[Figure~\ref{discreteDisc882636} (Left)]
\end{description}
\end{minipage}
\hfill 
\begin{minipage}[t]{0.5\linewidth}
\vspace{0pt}
\begin{tabular}{|l|l|l|l|l|l|l|l|}
  \hline
  $\oplus$ & 0 & 1 & 2 & 3 & 4 & 5 & 6\\ \hline
  0        & a & a & a & a & a & a & a\\ \hline
  1        & a & a & b & b & b & a & a\\ \hline
  2        & a & b & b & a & b & b & a\\ \hline
  3        & a & b & a & a & a & b & a\\ \hline
  4        & a & b & b & a & b & b & a\\ \hline
  5        & a & a & b & b & b & a & a\\ \hline
  6        & a & a & a & a & a & a & a\\ \hline
\end{tabular}
\end{minipage}

\vspace{0.4cm}

This reproduction of the aggregating function is not always exact. It is possible to skew and deform the representation of the image described by altering the radii of the circles that are used to form the underlying labelling of the lattice. The following image, Figure~\ref{discreteDisc882636} (right) gives an example of such a deformation.

\begin{minipage}[t]{0.5\linewidth}
\vspace{0pt}
\begin{description}
\item[Array Size:] 300
\item[Centre 1:] (100,150)
\item[Centre 2:] (200,150)
\item[Radius 1:] 26
\item[Radius 2:] 36
\item[Modular Labelling:] (0,1,2,3,4,5,6)
\item[Aggregation Function:] Shown right.
\item[Figure~\ref{discreteDisc882636} (Right)]
\end{description}
\end{minipage}
\hfill 
\begin{minipage}[t]{0.5\linewidth}
\vspace{0pt}

\begin{tabular}{|l|l|l|l|l|l|l|l|}
  \hline
  $\oplus$ & 0 & 1 & 2 & 3 & 4 & 5 & 6\\ \hline
  0        & a & a & a & a & a & a & a\\ \hline
  1        & a & a & b & b & b & a & a\\ \hline
  2        & a & b & b & a & b & b & a\\ \hline
  3        & a & b & a & a & a & b & a\\ \hline
  4        & a & b & b & a & b & b & a\\ \hline
  5        & a & a & b & b & b & a & a\\ \hline
  6        & a & a & a & a & a & a & a\\ \hline
\end{tabular}
\end{minipage}

\vspace{0.4cm}

The two following figures, Figure~\ref{b}, demonstrates changes that occur when manipulating the number of colours, changing the aggregation function and altering the underlying tuples that generate the overall shape of the colourings.

\begin{figure}[htp]
\centering
\includegraphics[scale=0.12]{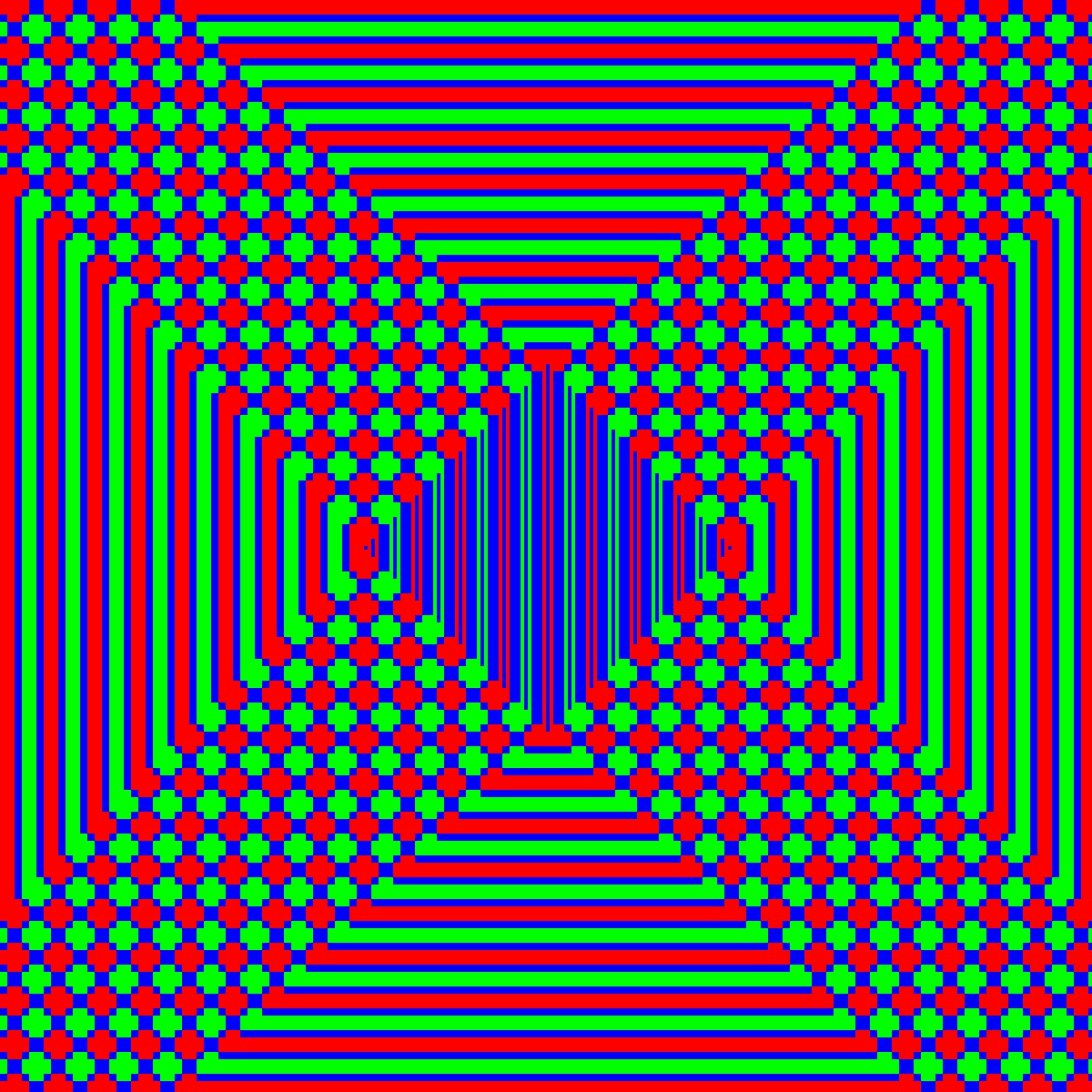}
\includegraphics[scale=0.12]{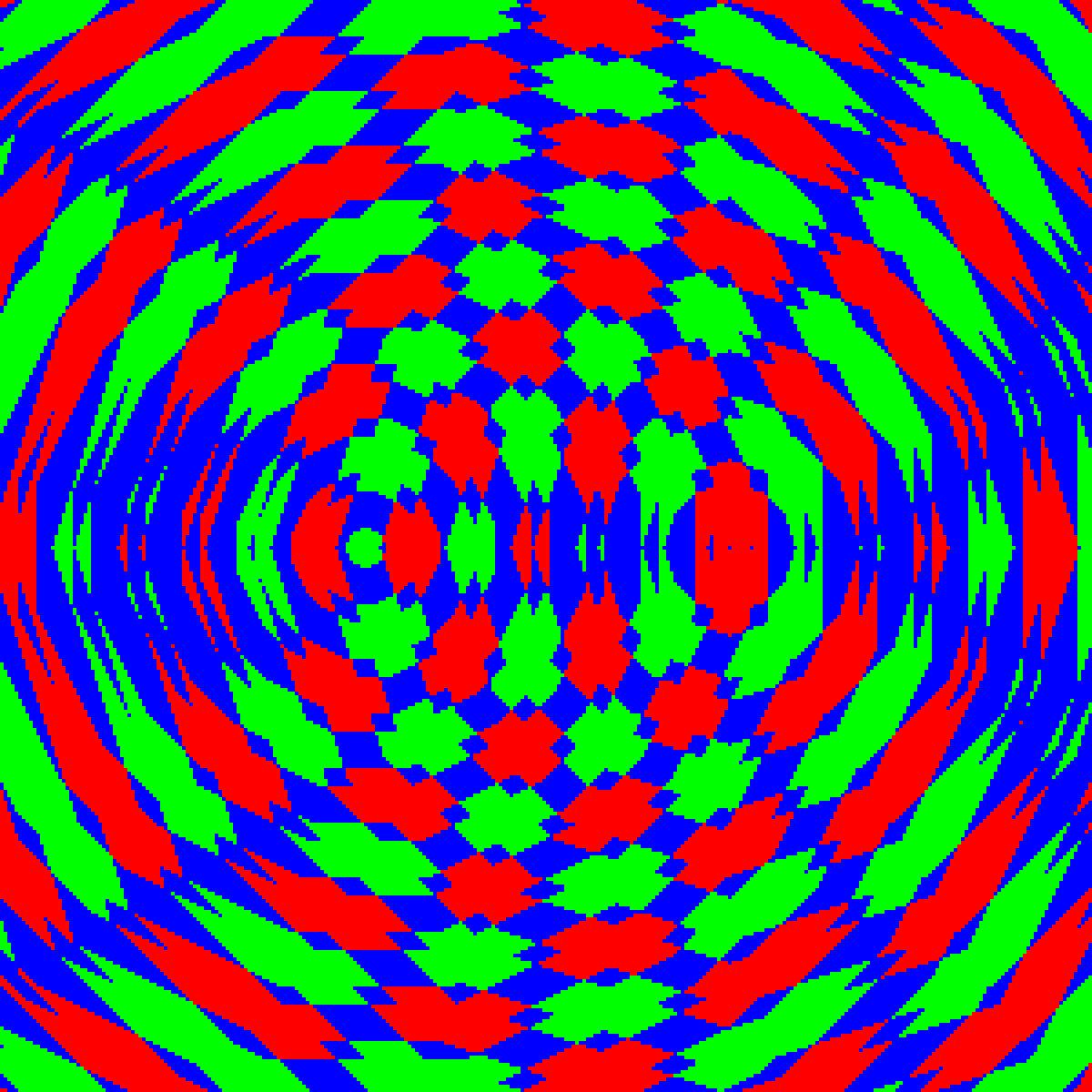}
\caption{Patterns generated by the aggregating function represented by the table and the labelling of the lattice given by two broadcasting sequences, (right) both of squared radius eight (left) one of  squared radius 26 and the other of squared radius 36.}\label{b}
\end{figure}

\begin{minipage}[t]{0.5\linewidth}
\vspace{0pt}
\begin{description}
\item[Array Size:] 300
\item[Centre 1:] (100,150)
\item[Centre 2:] (200,150)
\item[Radius 1:] 8
\item[Radius 2:] 8
\item[Modular Labelling:] (0,1,2,3,4,5)
\item[Aggregation Function:] Shown right.
\item[Figure~\ref{b} (left)]
\end{description}
\end{minipage}
\hfill 
\begin{minipage}[t]{0.5\linewidth}
\vspace{0pt}

\begin{tabular}{|l|l|l|l|l|l|l|l|}
  \hline
  $\oplus$ & 0 & 1 & 2 & 3 & 4 & 5 \\ \hline
  0        & b & b & c & c & b & b \\ \hline
  1        & b & c & a & a & c & b \\ \hline
  2        & c & a & a & a & a & c \\ \hline
  3        & c & a & a & a & a & c \\ \hline
  4        & b & c & a & a & c & b \\ \hline
  5        & b & b & c & c & b & b \\ \hline
\end{tabular}
\end{minipage}

\vspace{0.4cm}

\begin{minipage}[t]{0.5\linewidth}
\vspace{0pt}
\begin{description}
\item[Array Size:] 300
\item[Centre 1:] (100,150)
\item[Centre 2:] (200,150)
\item[Radius 1:] 26
\item[Radius 2:] 36
\item[Modular Labelling:] (0,1,2,3,4,5)
\item[Aggregation Function:] Shown right.
\item[Figure~\ref{b} (rigth)]
\end{description}
\end{minipage}
\hfill 
\begin{minipage}[t]{0.5\linewidth}
\vspace{0pt}

\begin{tabular}{|l|l|l|l|l|l|l|l|}
  \hline
  $\oplus$ & 0 & 1 & 2 & 3 & 4 & 5 \\ \hline
  0        & b & b & c & c & b & b \\ \hline
  1        & b & c & a & a & c & b \\ \hline
  2        & c & a & a & a & a & c \\ \hline
  3        & c & a & a & a & a & c \\ \hline
  4        & b & c & a & a & c & b \\ \hline
  5        & b & b & c & c & b & b \\ \hline
\end{tabular}
\end{minipage}

\vspace{0.4cm}
\begin{figure}[htp]
\centering
\includegraphics[scale=0.9]{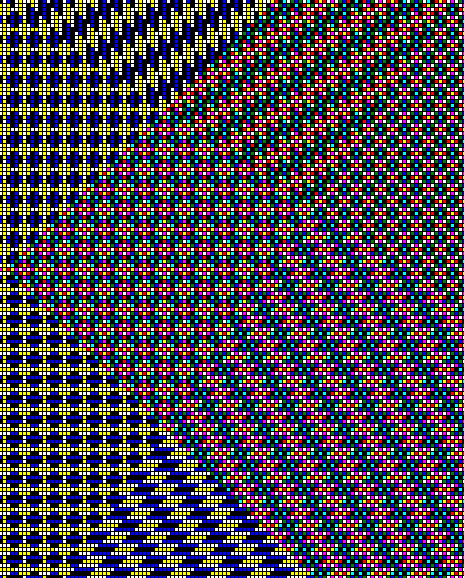}
\caption{The above figure shows a number of distinct patterns that are here generated by four different transmissions where two, one of squared radius $2$ and the other of squared radius $12$, are placed at two separate points.}\label{intEx}
\end{figure}
Altering the aggregating function is clearly a powerful tool in pattern and polygon formation. The alteration of the discs that are used as the basis of the aggregation also show that any underlying shape generate by an aggregating function can be skewed and otherwise altered whilst retaining the gestalt representation. Methods of manipulating the hew and scale of the shapes that are generated through some aggregating function may be useful to pattern recognition and detection methods that are part of the Swarm Robotics cannon among others. Whilst altering the aggregation function is one an interesting concept, for the purpose of pattern formation, there is still a lot of possibility that remains when only considering the standard moir\'{e} function. Such as can be seen in the variety of patterns that are formed in Figure~\ref{intEx}.
%


\printindex

\begin{thebibliography}{26}
\providecommand{\natexlab}[1]{#1}
\providecommand{\url}[1]{\texttt{#1}}
\expandafter\ifx\csname urlstyle\endcsname\relax
  \providecommand{\doi}[1]{doi: #1}\else
  \providecommand{\doi}{doi: \begingroup \urlstyle{rm}\Url}\fi



\bibitem{Bhowmick20082381}
P.~Bhowmick and B.~B. Bhattacharya.
\newblock Number-theoretic interpretation and construction of a digital circle.
\newblock \emph{Discrete Applied Mathematics}, 156\penalty0 (12):\penalty0 2381
  -- 2399, 2008.
\newblock ISSN 0166-218X.
\newblock \doi{10.1016/j.dam.2007.10.022}.
\newblock URL
  \url{http://www.sciencedirect.com/science/article/pii/S0166218X07004817}.



\bibitem{Bresenham:1977:LAI:359423.359432}
J.~Bresenham.
\newblock A linear algorithm for incremental digital display of circular arcs.
\newblock \emph{Commun. ACM}, 20\penalty0 (2):\penalty0 100--106, February
  1977.
\newblock ISSN 0001-0782.
\newblock \doi{10.1145/359423.359432}.
\newblock URL \url{http://doi.acm.org/10.1145/359423.359432}.


\bibitem{Das}
B.N.~Chatterji P.P.~Das, P.P.~Chakrabarti.
\newblock Generalized distances in digital geometry.
\newblock {\em Information Sciences}, 42:51 -- 67, 1987.


\bibitem{conway2012regular}
J.H. Conway.
\newblock \emph{Regular Algebra and Finite Machines}.
\newblock Dover Books on Mathematics Series. Dover Publications, 2012.
\newblock ISBN 9780486485836.
\newblock URL \url{http://books.google.co.uk/books?id=1KAXc5TpEV8C}.




\bibitem{44900}
R.~Duncan.
\newblock A survey of parallel computer architectures.
\newblock \emph{Computer}, 23\penalty0 (2):\penalty0 5 --16, feb. 1990.
\newblock ISSN 0018-9162.
\newblock \doi{10.1109/2.44900}.

\bibitem{FarBajNag06}
Sz. Farkas, J.~Baják and B.~Nagy.
\newblock Approximating the {E}uclidean circle in the square grid using
  neighbourhood sequences.
\newblock \emph{Pure Math. Appl. (PU.M.A.)}, 17:\penalty0 309--322, 2006.
\newblock ISSN 1218-4586.


\bibitem{2010arXiv1006.3404F}
S.~Farkas, J.~Bajak and B.~Nagy.
\newblock {Approximating the Euclidean circle in the square grid using
  neighbourhood sequences}.
\newblock \emph{ArXiv e-prints}, June 2010.


\bibitem{1667197}
T.~Feng.
\newblock A survey of interconnection networks.
\newblock \emph{Computer}, 14\penalty0 (12):\penalty0 12 --27, dec. 1981.
\newblock ISSN 0018-9162.
\newblock \doi{10.1109/C-M.1981.220290}.

\bibitem{feynman1963feynman}
R.P. Feynman, R.B. Leighton, and M.L. Sands.
\newblock \emph{The Feynman lectures on physics}.
\newblock Number v. 1 in Addison-Wesley world student series. Addison-Wesley
  Pub. Co., 1963.
\newblock URL \url{http://books.google.co.uk/books?id=\_ZUfAQAAMAAJ}.

\bibitem{5219197}
H.~Freeman.
\newblock On the encoding of arbitrary geometric configurations.
\newblock \emph{Electronic Computers, IRE Transactions on}, EC-10\penalty0
  (2):\penalty0 260 --268, june 1961.
\newblock ISSN 0367-9950.
\newblock \doi{10.1109/TEC.1961.5219197}.


\bibitem{Gerhardt1990392}
M.~Gerhardt, H.~Schuster, and J.J. Tyson.
\newblock A cellular automaton model of excitable media: Ii. curvature,
  dispersion, rotating waves and meandering waves.
\newblock \emph{Physica D: Nonlinear Phenomena}, 46\penalty0 (3):\penalty0 392
  -- 415, 1990.
\newblock ISSN 0167-2789.
\newblock \doi{10.1016/0167-2789(90)90101-T}.
\newblock URL
  \url{http://www.sciencedirect.com/science/article/pii/016727899090101T}.

\bibitem{DBLP:journals/corr/abs-1205-2051}
L.~Hella, M.~J{\"a}rvisalo, A.~Kuusisto, J.~Laurinharju, T.~Lempi{\"a}inen,
  K.~Luosto, J.~Suomela, and J.~Virtema.
\newblock Weak models of distributed computing, with connections to modal
  logic.
\newblock \emph{CoRR}, abs/1205.2051, 2012.



\bibitem{Hajdu20032597}
Andras Hajdu.
\newblock Geometry of neighbourhood sequences.
\newblock {\em Pattern Recognition Letters}, 24(15):2597 -- 2606, 2003.



\bibitem{Hajdu2004101}
A.~Hajdu and L.~Hajdu.
\newblock Approximating the euclidean distance using non-periodic neighbourhood
  sequences.
\newblock \emph{Discrete Mathematics}, 283\penalty0 (1–3):\penalty0 101 --
  111, 2004.
\newblock ISSN 0012-365X.
\newblock \doi{10.1016/j.disc.2003.12.016}.
\newblock URL
  \url{http://www.sciencedirect.com/science/article/pii/S0012365X04001116}.



\bibitem{JK05}
J.~Kari.
\newblock Theory of cellular automata: a survey.
\newblock \emph{Theor. Comput. Sci.}, 334:\penalty0 3--33, April 2005.
\newblock ISSN 0304-3975.
\newblock \doi{10.1016/j.tcs.2004.11.021}.
\newblock URL \url{http://dl.acm.org/citation.cfm?id=1083031.1083033}.



\bibitem{Spring08}
G.~Lee and N.Y. Chong.
\newblock A geometric approach to deploying robot swarms.
\newblock \emph{Annals of Mathematics and Artificial Intelligence},
  52:\penalty0 257--280, April 2008.
\newblock ISSN 1012-2443.
\newblock \doi{10.1007/s10472-009-9125-x}.
\newblock URL \url{http://dl.acm.org/citation.cfm?id=1527581.1527606}.




\bibitem{JoRaM09}
G.~Lee and S.~Yoon.
\newblock A mobile sensor network forming concentric circles through local
  interaction and consensus building.
\newblock \emph{Journal of Robotics and Mechatronics}, 21:\penalty0 469--477,
  August 2009.
\newblock ISSN 1883-8049.

\bibitem{linz2001introduction}
P.~Linz.
\newblock \emph{An Introduction to Formal Languages and Automata}.
\newblock Theory of Computation Series. Jones and Bartlett, 2001.
\newblock ISBN 9780763714222.
\newblock URL \url{http://books.google.co.uk/books?id=Cgooanwdo9AC}.


\bibitem{Geometric}
R.~Martin, T.~Nickson, and I.~Potapov.
\newblock Geometric computations by broadcasting automata on the integer grid.
\newblock In Cristian Calude, Jarkko Kari, Ion Petre, and Grzegorz Rozenberg,
  editors, \emph{Unconventional Computation}, volume 6714 of \emph{Lecture
  Notes in Computer Science}, pages 138--151. Springer Berlin / Heidelberg,
  2011.
\newblock ISBN 978-3-642-21340-3.
\newblock URL \url{http://dx.doi.org/10.1007/978-3-642-21341-0\_18}.
\newblock 10.1007/978-3-642-21341-0\_18.



\bibitem{Geometric2}
R.~Martin, T.~Nickson, and I.~Potapov.
\newblock Geometric computations by broadcasting automata.
\newblock \emph{Natural Computing}, pages 1--13, 2012.
\newblock ISSN 1567-7818.
\newblock URL \url{http://dx.doi.org/10.1007/s11047-012-9330-0}.
\newblock 10.1007/s11047-012-9330-0.


\bibitem{mat2002lectures}
J.~Matou{\v{s}}ek.
\newblock \emph{Lectures on Discrete Geometry}.
\newblock Graduate Texts in Mathematics. Springer, 2002.
\newblock ISBN 9780387953731.
\newblock URL \url{http://books.google.co.uk/books?id=MzFzCZAAk8MC}.


\bibitem{springerlink:10.1007/3-540-19444-416}
J.~Mazoyer.
\newblock An overview of the firing squad synchronization problem.
\newblock In C.~Choffrut, editor, \emph{Automata Networks}, volume 316 of
  \emph{Lecture Notes in Computer Science}, pages 82--94. Springer Berlin /
  Heidelberg, 1988.
\newblock ISBN 978-3-540-19444-6.
\newblock URL \url{http://dx.doi.org/10.1007/3-540-19444-4\_16}.
\newblock 10.1007/3-540-19444-4\_16.



\bibitem{1521291}
B.~Nagy.
\newblock Metric and non-metric distances on zn by generalized neighbourhood
  sequences.
\newblock In \emph{Image and Signal Processing and Analysis, 2005. ISPA 2005.
  Proceedings of the 4th International Symposium on}, pages 215 -- 220, sept.
  2005.
\newblock \doi{10.1109/ISPA.2005.195412}.

\bibitem{springerlink:10.1007/117749388}
B.~Nagy and R.~Strand.
\newblock Approximating euclidean distance using distances based on
  neighbourhood sequences in non-standard three-dimensional grids.
\newblock In Ralf Reulke, Ulrich Eckardt, Boris Flach, Uwe Knauer, and Konrad
  Polthier, editors, \emph{Combinatorial Image Analysis}, volume 4040 of
  \emph{Lecture Notes in Computer Science}, pages 89--100. Springer Berlin /
  Heidelberg, 2006.
\newblock ISBN 978-3-540-35153-5.
\newblock URL \url{http://dx.doi.org/10.1007/11774938\_8}.
\newblock 10.1007/11774938\_8.


\bibitem{schneider1993convex}
R.~Schneider.
\newblock \emph{Convex Bodies: The Brunn-Minkowski Theory}.
\newblock Encyclopedia of Mathematics and Its Applications. Cambridge
  University Press, 1993.
\newblock ISBN 9780521352208.
\newblock URL \url{http://books.google.co.uk/books?id=2QhT8UCKx2kC}.


\bibitem{oeisA001481}
N.~J.~A. Sloane.
\newblock The {O}n-{L}ine {E}ncyclopedia of {I}nteger {S}equences.
\newblock {http://oeis.org/A001481}{A001481}, 1995.
\newblock Numbers that are the sum of 2 nonnegative squares.



\bibitem{Suzuki99distributedanonymous}
I.~Suzuki and M.~Yamashita.
\newblock Distributed anonymous mobile robots: Formation of geometric patterns.
\newblock \emph{SIAM Journal on Computing}, 28:\penalty0 1347--1363, 1999.


\bibitem{Thiaville19985}
A.~Thiaville.
\newblock Extensions of the geometric solution of the two dimensional coherent
  magnetization rotation model.
\newblock \emph{Journal of Magnetism and Magnetic Materials}, 182\penalty0
  (1–2):\penalty0 5 -- 18, 1998.
\newblock ISSN 0304-8853.
\newblock \doi{10.1016/S0304-8853(97)01014-7}.
\newblock URL
  \url{http://www.sciencedirect.com/science/article/pii/S0304885397010147}.

\bibitem{Wolfram19841}
S.~Wolfram.
\newblock Universality and complexity in cellular automata.
\newblock \emph{Physica D: Nonlinear Phenomena}, 10\penalty0 (1–2):\penalty0
  1 -- 35, 1984.
\newblock ISSN 0167-2789.
\newblock \doi{10.1016/0167-2789(84)90245-8}.
\newblock URL
  \url{http://www.sciencedirect.com/science/article/pii/0167278984902458}.




\bibitem{yates1974curves}
R.C. Yates.
\newblock \emph{Curves and their properties}.
\newblock Classics in mathematics education. National Council of Teachers of
  Mathematics, 1974.
\newblock URL \url{http://books.google.co.uk/books?id=UPs-AAAAIAAJ}.



\end{thebibliography}
\end{document}